\documentclass[unblindrev]{informs3} 

\usepackage{natbib}
\usepackage{mathrsfs}
\usepackage{psfrag}
\usepackage{epsfig}
\usepackage{comment}
\usepackage{tabularx}
\usepackage[table]{xcolor}
\usepackage{amsmath,mathtools,amsfonts}
\usepackage{graphicx}
\usepackage{lscape,threeparttablex,multirow}
\usepackage{hyperref}
\usepackage{float}
\usepackage{booktabs}
\usepackage{threeparttable}
\usepackage{adjustbox}
\usepackage{multirow}
\usepackage{enumitem}
\usepackage{subfigure}
\usepackage{graphicx}
\usepackage{cleveref}
\crefname{assumption}{assumption}{assumptions}  
\Crefname{assumption}{Assumption}{Assumptions}  
\newenvironment{Remark}{\begin{remark}}{\end{remark}}

\bibpunct[, ]{(}{)}{,}{a}{}{,}%
\def\bibfont{\small}%

\newcommand{\E}{\mathbb{E}}
\newcommand{\Var}{\text{Var}}

\newcommand{\bw}{\boldsymbol{w}}
\newcommand{\bs}{\boldsymbol{s}}
\newcommand{\diag}{\operatorname{diag}}

\newcommand{\bmu}{\boldsymbol{\mu}}
\newcommand{\bC}{\boldsymbol{C}}
\newcommand{\balpha}{\boldsymbol{\alpha}}
\newcommand{\bbeta}{\boldsymbol{\beta}}
\newcommand{\bone}{\boldsymbol{1}}

\let\legacyproof\proof
\let\legacyendproof\endproof
\let\theoremstyle\relax
\let\proof\relax
\let\endproof\relax
\usepackage{amsthm}
\makeatletter
\let\th@TH\th@plain
\let\th@THkey\th@plain
\let\th@EX\th@plain
\let\th@EXkey\th@plain
\makeatother
\let\proof\legacyproof
\let\endproof\legacyendproof

\definecolor{softpink}{RGB}{209,131,169}

\TheoremsNumberedThrough     
\ECRepeatTheorems

\EquationsNumberedThrough    

\MANUSCRIPTNO{}
\begin{document}


\RUNAUTHOR{Su, Guo, and Zhang}

\RUNTITLE{Regularized Ensemble Forecasting for Learning Weights}

\TITLE{
Regularized Ensemble Forecasting for Learning Weights from Historical and Current Forecasts
}

\ARTICLEAUTHORS{
\AUTHOR{Han Su}
\AFF{Department of Statistics, The George Washington University, Washington, DC 20052, hansu@gwu.edu}
\AUTHOR{Xiaojia Guo}
\AFF{Robert H. Smith School of Business, University of Maryland, College Park, MD 20742, xjguo@umd.edu}
\AUTHOR{Xiaoke Zhang}
\AFF{Department of Statistics, The George Washington University, Washington, DC 20052, xkzhang@gwu.edu}
}

\ABSTRACT{Combining forecasts from multiple experts often yields more accurate results than relying on a single expert. In this paper, we focus on forecasting a continuous outcome and introduce a novel regularized ensemble method that extends the traditional linear opinion pool by leveraging both current forecasts and historical performances to set the weights. In contrast to existing approaches that emphasize either current forecasts or past accuracy, our method accounts for both sources simultaneously. It learns weights by minimizing the variance of the combined forecast (or its transformed version) while incorporating a regularization term informed by historical performances. We also show that this approach has a Bayesian interpretation. Different distributional assumptions within this Bayesian framework yield different functional forms for the variance component and the regularization term, adapting the method to various scenarios. In empirical studies on Walmart sales and macroeconomic forecasting, our ensemble outperforms leading benchmark models both when experts’ full forecasting histories are available and when experts enter and exit over time. Throughout, we also characterize how the optimal weights depend on current forecasts and historical performance, and use the empirical results to discuss where the framework’s strengths lie and when each source of information plays a greater role.}

\HISTORY{\today}
\KEYWORDS{ensemble forecasting; linear opinion pool; regularization; Bayesian statistics; wisdom of crowds}
\maketitle
\vspace*{-0.7\baselineskip}

\section{Introduction}
\label{sec:intro}

Business decisions often involve uncertainty about the future, making accurate forecasting a critical part of the decision-making process. Instead of relying on a single forecast, firms and organizations often draw on the judgments of multiple experts or fit several models, and then aggregate their forecasts to form an ensemble forecast. A large body of research across many business domains shows that forecast combinations are typically more accurate than individual forecasts \citep[e.g.,][]{Bates:1969,armstrong:2001a,mannes:2014,da:2020}. In practice, forecast combination is also widely used to improve accuracy. For example, many of the high-ranking methods in the popular Makridakis 5 (M5) competition on forecasting Walmart’s product sales were based on model combinations \citep{makridakis:2022}. Similarly, the Survey of Professional Forecasters (SPF) aggregates predictions from a panel of experts to produce forecasts of key economic indicators \citep{croushore:1993,spf_homepage}. 

The most common ensemble method for aggregating forecasts of a continuous variable is the linear opinion pool, which combines individual forecasts into a weighted average with weights that sum to one \citep[e.g.,][]{Bates:1969,newbold:1974,timmermann:2006,claeskens:2016}. This approach is both intuitive and easy to implement. Approaches to assigning weights generally fall into two categories: those that rely only on current forecasts for the target being predicted, and those that infer weights only from experts' historical forecasts.

For methods that rely only on current forecasts, the simple mean, which assigns equal weights to all forecasts, is the most intuitive and natural benchmark. It has been shown to be competitive with more sophisticated methods and is often difficult to outperform \citep{Palm:1992,Stock:2003,Smith:2009}. However, it is highly sensitive to extreme forecasts, which may arise from data errors or misjudgment. \cite{armstrong:2001} notes that trimming extreme forecasts, by removing the highest and lowest values, can help mitigate the influence of such errors. Building on this idea, \cite{jose:2008} provide empirical evidence that the trimmed mean and Winsorized mean outperform the simple mean in terms of both accuracy and robustness, particularly when individual forecasts exhibit high variability. \cite{Jose:2014} and \cite{grushka:2017} extend this analysis to show that trimming helps reduce the risk of large errors in both point and probabilistic forecasting. Despite their simplicity, methods based only on current forecasts remain widely used in practice and serve as important benchmarks in ensemble forecasting research \citep{lichtendahl:2020,chen:2023}. This popularity also stems from their interpretability and from the practical challenge of collecting data on experts’ historical performance.

When the experts' historical forecasts are available, a natural alternative is to assign weights by minimizing the variance of the combined forecast error \citep[e.g.,][]{Bates:1969,newbold:1974,winkler:1981}. Implementing this method requires estimating the covariance matrix of forecast errors, which is often infeasible in practice when the historical sample is short relative to the number of experts. A common simplification assumes independent forecast errors. In that case, the covariance matrix is diagonal, and the resulting optimal weights are proportional to the inverse of the individual variance \citep[e.g.,][]{Bates:1969,Bunn:1985}. Even so, these weights can be unstable with limited data. Recently, \cite{Soule:2024} propose a common correlation weighting heuristic that retains expert-specific error variances while imposing a single  correlation parameter across all pairs of experts under the framework of \cite{winkler:1981}, thereby avoiding the need to estimate a large number of parameters.

Beyond the covariance-based methods above, other approaches have also been developed that utilize historical data. The contribution-weighted approach \citep{Budescu:2015} scores each expert by how much their past forecasts improve the crowd, drops negative contributors, and combines the rest either equally or in proportion to their scores in a linear opinion pool. Regression-based methods, such as ordinary least squares  \citep[OLS,][]{Granger:1984}, OLS with shrinkage \citep{Diebold:1990}, and \texttt{StackingRegressor} \citep{wolpert:1992} in \texttt{scikit-learn} \citep{scikit-learn} from the machine learning literature, estimate weights by regressing past outcomes on the experts’ historical forecasts. While structured, classical OLS combinations may overfit and encounter degrees-of-freedom issues when the number of experts is large relative to the length of the forecast history \citep[e.g.,][]{Chan:1999,Stock:2004}. Regularization can mitigate instability and overfitting, but these regression-based schemes still require a reasonably long and informative history to estimate weights reliably.

Ensemble methods that rely only on current forecasts or only on past information each have clear strengths and limitations. We propose a Regularized Ensemble Forecasting (REF) method for aggregating predictions of continuous outcomes, integrating current forecasts with historical performance information through a regularized weighting objective, leveraging the strengths of both while mitigating their weaknesses. Combining these two sources of information is consistent with prior evidence on probabilistic forecast aggregation. In a binary-event forecasting setting, where experts revise their probability judgments repeatedly within a reporting period, \citet{atanasov:2017} find that weighting prediction-poll judgments by both an expert's past accuracy and their frequency of belief updating on the current question improves the aggregate. Rather than specifying an aggregation rule for dynamic probability polls, we develop a general regularized optimization framework that determines ensemble weights by jointly accounting for experts’ historical performance and the cross-sectional dispersion of their current forecasts.

The variance minimization principle of REF builds on classic work in forecast combination.  In contrast with traditional methods which minimize the ensemble variance using historical forecast errors \citep{Bates:1969,newbold:1974, Bunn:1985, diebold:1987, Palm:1992,chan:2018}, our framework minimizes the variance using the experts' current predictions, down-weighting individual forecasts that deviate substantially from the consensus. Meanwhile, our framework incorporates historical information through a regularization term, which shrinks the optimal weights toward a set of prior weights. The regularization term is appealing since it enables the use of summarized historical performance information without requiring experts to share a common track record. REF is also flexible in its specification of both components: it allows for any monotone transformation of the variance term and accommodates a range of penalty functions. By down-weighting individual forecasts that deviate substantially from the consensus and shrinking toward experts with higher prior weights (based on historical performances), REF limits the influence of extreme forecasts and weak or redundant experts on the ensemble forecast.

Large-sample theoretical guarantees are also established for the proposed ensemble. Under mild assumptions on the transformation and regularization, the resulting estimator achieves the same asymptotic prediction accuracy as the simple mean. This result indicates that the added flexibility in transformations and regularization does not compromise statistical efficiency, while still allowing the weights to adapt meaningfully to expert-specific information.

Although our framework is developed from a frequentist perspective, it has a natural Bayesian counterpart. With various choices of the transformation function and penalty, the proposed framework is equivalent to maximizing the posterior distribution of the ensemble weights. Since the quantity to predict is unobserved, inspired by the Expectation-Maximization procedure \citep{Dempster:1977}, we treat it as missing data and integrate it out when optimizing the weights. The penalty arises from the prior on the weights specified by the decision maker. Different assumptions about the prior on the weights and the distribution of the target variable lead to different specifications of the transformation function and penalty. These choices control how quickly an expert’s weight changes as their forecast moves away from the consensus, and how sensitive the optimal weights are to the prior. The Bayesian framework also yields the full posterior distribution of the ensemble forecast. Moreover, although we focus on forecasting continuous outcomes in this paper, the proposed framework can be naturally extended to classification problems through its Bayesian formulation.

We demonstrate the performance of our proposed framework in two empirical studies, one on forecasting Walmart product sales from the M5 competition and the other on forecasting macroeconomic indicators using experts' predictions from the SPF. In each study, we compare our framework with several benchmark methods that use only historical performances, methods that use only current forecasts, as well as a machine learning method. The M5 data provide complete forecasting histories for individual teams, whereas the SPF study features a varying expert pool in which forecasters enter and exit over time. Across both settings, the REF framework outperforms the benchmarks.  

We then analyze how the performance of our proposed framework varies with the \emph{Penalty Share}, defined as the fraction of the optimized objective function that is contributed by the regularization term encoding historical performances, and quantify the improvements relative to the current-only and history-only benchmarks. In both empirical studies, REF is consistently strong when the influences of current forecasts and those of historical performances are balanced. Across most cases, improvements over current-only methods increase as the penalty share rises, while its improvements over history-only methods are greater when the penalty share is low.  We further study how the balance between current forecasts and historical performances shifts with the expert pool size and forecasting horizon. We find that the reliance of REF on historical performances tends to increase as the expert pool grows, whereas there is no universal pattern  over forecasting horizons. 

The main contribution of this paper is threefold. First, we propose a new REF framework for obtaining aggregation weights by minimizing an objective function that accounts for both experts' current forecasts and their past performances.  Under suitable conditions, REF achieves the same asymptotic prediction accuracy as the simple mean ensemble. Second, we connect the proposed framework to a Bayesian perspective, which also provides a formal justification for different choices of the transformation and regularization terms in the objective function. Finally, we demonstrate that the proposed REF framework  performs favorably relative to leading benchmarks in two empirical studies with and without complete forecasting histories respectively.

\section{The Regularized Ensemble Forecasting (REF) Method} 
\label{sec:method}

In this section, we introduce our new regularized framework for aggregating forecasts of a continuous outcome from multiple experts. The approach assigns weights based on both experts’ historical performances and their forecasts for a future target variable of interest. We then provide a theoretical guarantee for our proposed framework by establishing the large-sample rate of convergence for the prediction error of the proposed ensemble forecast. We conclude the section with illustrative examples that demonstrate how the optimal weights adapt to historical performances and current forecasts.

\subsection{A Regularized Framework for Forecast Aggregation} \label{sec:framework}

Consider a crowd of $k\geq 2$ experts, each providing a point forecast $\mu_i$ for a quantity $y$. Let $\bmu = (\mu_1, \dots, \mu_k)^\top$ denote the vector of forecasts. A decision maker then aggregates these forecasts into a single forecast $\sum_{i=1}^k w_i\mu_i$, where $\bw=(w_1,\ldots,w_k)^\top$ denotes the weights assigned to the experts.

To derive the framework, we begin by linearly combining experts’ current forecasts $\mu_i$ for $y$, whose true value is not yet observed. Explicitly, we formulate an initial objective function that minimizes the uncertainty of the combined forecast, as measured by its variance:
\begin{equation}
 \min_{\bw} \Var\left( \sum_{i=1}^{k} w_i \mu_i \right),  
 \quad s.t.,\quad   \sum_{i=1}^k w_i=1,\, w_i \geq 0, 
\label{eq:min-var}
\end{equation}
where $\bw = (w_1, \dots, w_k)^\top$. Under the common-mean and independence conditions used in the theoretical analysis of Theorem~\ref{thm:cvg_rate}, minimizing the variance of the aggregated forecast is equivalent to minimizing its mean squared prediction error. The variance-minimization objective provides an ex ante way to construct a stable ensemble, i.e., one that is less sensitive to extreme individual forecasts, when $y$ is unobserved and direct error-based optimization is infeasible at the time of combination. 

We require the weights to satisfy $w_i \geq 0$ and $ \sum_{i=1}^k w_i = 1$ to be consistent with the linear opinion pool literature \citep[e.g.,][]{newbold:1974,conflitti:2015,post:2019}. These constraints ensure that the aggregated forecast is a convex combination and hence lies within the range of the individual forecasts, supporting stability and interpretability 
\citep{Clemen:1986,weiss:2018}. They are also needed to attain the same large-sample accuracy as the simple mean ensemble, as shown later in Theorem~\ref{thm:cvg_rate}. 

To obtain a tractable expression for~\eqref{eq:min-var}, we  assume $\mu_1, \dots, \mu_k$ are independent draws from a common distribution with expectation $\E[\mu]$. With this assumption, forecast errors can still be correlated, as is commonly allowed in the literature; see Remark \ref{rmk:independent} for further explanation. The variance of the aggregated forecast in~\eqref{eq:min-var} can then be rewritten as $\sum_{i=1}^{k} w_i^2 \cdot \E\{(\mu_i-\E[\mu])^2\}$.  Empirically, since only one current forecast is observed from each expert at the time of aggregation,  we approximate the term $\E\{(\mu_i-\E[\mu])^2\}$ by the unbiased point estimator $(\mu_i-\E[\mu])^2$. This leads to an approximation of the variance term by $\sum_{i=1}^{k} w_i^2(\mu_i-\E[\mu])^2$. Combining these expressions, the objective in~\eqref{eq:min-var} becomes
\begin{equation}
\min_{\bw} \sum_{i=1}^{k} w_i^2(\mu_i-\E[\mu])^2, 
\quad s.t.,\quad   \sum_{i=1}^k w_i=1,\, w_i \geq 0.
\label{eq:min-empvar}
\end{equation}
Under this approximation,  the $i$th expert receives a smaller weight in the current-forecast term when their forecast $\mu_i$ is farther from the center of the crowd, $\E[\mu]$. This is consistent with the principle underlying other ensemble methods that use only experts’ forecasts of the current quantity of interest, such as the trimmed mean and Winsorized mean, which reduce the sensitivity of the aggregate forecast to extreme current forecasts \citep{jose:2008}. 

However, an expert's deviation from the crowd center alone may not fully reflect the quality of their forecast. Information on experts' historical performances provides a complementary signal of their reliability. To combine these two sources of information when determining the optimal weights,  we next introduce an additional term, $\Phi(\bw)$, into the objective function in \eqref{eq:min-empvar}. We also apply a transformation $f$ to the variance term to enhance flexibility in how the weights respond to extreme values, and include a tuning parameter $\lambda >0$ to help balance the magnitudes of $\sum_{i=1}^{k} w_i^2(\mu_i-\E[\mu])^2$ and $\Phi(\bw)$. In other words, $\lambda$ governs the relative  contribution of the ensemble estimator's variation and the experts' historical performances. The ultimate framework we propose is
\begin{equation}
\bw^*=\arg \min_{\bw} \left[ f\left\{\sum_{i=1}^{k} w_i^2(\mu_i-\E[\mu])^2 \right\} + \lambda
 \Phi(\bw)\right], \quad s.t.,\quad   \sum_{i=1}^k w_i=1,\, w_i \geq 0. 
 \label{eq:finalobj}    
\end{equation}
In this framework, $\Phi(\bw)$ can be interpreted as a regularization term in the optimization problem. 

The two components in~\eqref{eq:finalobj} play complementary roles. A forecast that lies far from the consensus is not necessarily less informative. For example, an expert who deviates from the crowd may be drawing on valuable private information \citep{palley:2019,prelec:2017}. Mechanically down-weighting all such deviations may therefore discard useful information. Incorporating historical performance through the regularization term helps mitigate this issue. At the same time, relying only on historical performance can also be problematic \citep{blanc:2016}, because past performance may not represent the current task and error estimates may be noisy. 

The proposed framework \eqref{eq:finalobj} is highly flexible and can incorporate a wide range of user-defined choices for both the transformation function $f$ and penalty term $\Phi$.  In Section~\ref{sec:bayes}, we present several examples of these functions motivated by Bayesian frameworks for modeling weights. More generally, to preserve the intuition that experts whose forecasts $\mu_i$ deviate significantly from $\E[\mu]$ should receive less weight, the transformation function $f$ should be monotonic increasing, such as the identity function $f(x)=x$ or the logarithm function $f(x)=\log(x)$. 

The penalty term $\Phi$ incorporates information about each expert's historical forecasting performance and influences the current ensemble weights accordingly. For example, this function can reward experts who have demonstrated high past accuracy while maintaining low error covariance with others. This intuition can be formalized by introducing a set of prior weights $\bs = (s_1,\ldots,s_k)$, which can be specified based on the decision-maker's judgment or derived from other ensemble approaches, as discussed later in Section~\ref{sec:est_prior_weights}.

Two natural choices of $\Phi$ are the $L^2$ penalty and the Entropy penalty. The $L^2$ penalty, $\Phi_{\mathrm{L^2}}(\bw) = \sum_{i=1}^{k} (w_i - s_i)^2$, is simply the squared Euclidean distance between the weights $\bw$ and the prior weights $\bs$. When $s_i > 0$ and $w_i > 0$ for all $i$, they can be viewed as probability mass functions for two discrete distributions. The Entropy penalty, $\Phi_{\mathrm{Ent}}(\bw) = \sum_{i=1}^{k} s_i \ \log\left(1/w_i\right)$, is equivalent to the Kullback-Leibler (KL) divergence between the two distributions up to an additive constant that does not depend on \(\bw\), as $\mathrm{KL}(\bs \,\|\, \bw) = \sum_{i=1}^k s_i \log s_i + \sum_{i=1}^k s_i \log\!\left( 1/w_i\right)$. Both the $L^2$ and Entropy penalties encourage weights to remain close to $\bs$. These specifications are further justified by the Bayesian formulation in Section~\ref{sec:bayes}.

Under certain specific settings, such as when  the transformation function is $f(x)=x$ and the penalty term takes the $L^2$ form, the optimal weights  in \eqref{eq:finalobj} can be derived in closed form. Appendix \ref{Appendix: Example_closed-form Solution for Weights} provides the expression for the solution under these conditions. In more complex settings, closed-form expressions are generally unavailable, and the weights must be computed numerically. In Section~\ref{sec:implement}, we propose a practical method for obtaining the weights using softmax mapping.

\subsection{Theoretical Properties} \label{sec:theory}

In this section, we present the large-sample rate of convergence of the prediction error for the ensemble forecast $\sum_{i=1}^k w_i^* \mu_i$, where the optimal weights $\boldsymbol{w^*}$ are obtained by solving \eqref{eq:finalobj}. Let $\sigma_y^2$ denote the variance of $y$. Technical assumptions for Proposition~\ref{prop:mean_rate} and Theorem~\ref{thm:cvg_rate} are provided in Appendix~\ref{sec:assump}. 

To provide a benchmark, we begin by deriving the rate of convergence for the mean squared prediction error of $\bar{\mu}$, the simple mean of the individual forecasts. 

\begin{proposition} \label{prop:mean_rate}
The mean squared prediction error (MSPE) for the simple mean ensemble satisfies
\begin{equation*}
\E\left(\bar{\mu} - y \right)^2 = \mathcal{O}(1/k) + \sigma_y^2.
\end{equation*}
\end{proposition}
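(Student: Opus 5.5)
The natural route is a bias--variance decomposition of $\bar{\mu}-y$ around the common mean $\E[\mu]$, using the assumptions in Appendix~\ref{sec:assump}. I expect these to say that $\mu_1,\dots,\mu_k$ are i.i.d.\ with mean $\E[\mu]$ and finite variance $\sigma_\mu^2$, and that they are independent of $y$. They should also make the forecasts unbiased for the target in the sense $\E[\mu]=\E[y]$, or at least have the bias vanish or be absorbed into the stated order. I would write
\begin{equation*}
\bar{\mu}-y=\bigl(\bar{\mu}-\E[\mu]\bigr)+\bigl(\E[\mu]-\E[y]\bigr)+\bigl(\E[y]-y\bigr),
\end{equation*}
and then square and take expectations term by term.

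The steps are as follows.

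First, compute $\E(\bar{\mu}-\E[\mu])^2=\Var(\bar{\mu})=\sigma_\mu^2/k$. This uses the independence and common-distribution assumption stated in Section~\ref{sec:framework}, and it yields the $\mathcal{O}(1/k)$ term.

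Second, note that $\E(y-\E[y])^2=\sigma_y^2$, which yields the irreducible term.

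Third, show that the cross terms vanish. The term $\E\{(\bar{\mu}-\E[\mu])(\E[y]-y)\}$ is zero by independence of the forecasts from $y$, or more weakly by their uncorrelatedness. Every other cross term involving the deterministic bias $\E[\mu]-\E[y]$ is zero because each centered factor has mean zero.

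Fourth, treat the squared bias $(\E[\mu]-\E[y])^2$. Under the expected assumption this is zero. If instead the assumption lets the common mean depend on $k$ with a bias of order $k^{-1/2}$, the squared bias is still $\mathcal{O}(1/k)$ and is absorbed into that term.

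The main obstacle is matching the assumptions precisely, specifically the dependence between the $\mu_i$ and $y$. If the appendix permits the forecast errors $\mu_i-y$ to be correlated, as Remark~\ref{rmk:independent} suggests, I would first try to handle this by conditioning on $y$. I would treat the $\mu_i$ as conditionally i.i.d.\ given $y$ with conditional mean $m(y)$, which satisfies $\E[(m(y)-y)^2]$ being $0$ or $\mathcal{O}(1/k)$, and with uniformly bounded conditional variance. The conditional variance of $\bar{\mu}$ is then $\mathcal{O}(1/k)$ uniformly. The remaining quantity, $\E(m(y)-y)^2$, has to be compared with $\sigma_y^2$. The cross terms again vanish after conditioning, so the decomposition gives the claimed $\mathcal{O}(1/k)+\sigma_y^2$.
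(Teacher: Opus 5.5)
Your main argument is correct and is essentially the paper's proof: the paper writes $\bar{\mu}-y=(\bar{\mu}-\E[\mu])+(\E[\mu]-y)$, uses $\E[\mu]=\E[y]$ and the independence of $\bmu$ from $y$ in Assumption~\ref{assumption:mu}(a) to remove the cross term, and obtains exactly $\sigma_0^2/k+\sigma_y^2$. Your conditioning fallback is unnecessary, because the correlated forecast errors in Remark~\ref{rmk:independent} come only from the shared $y$ (under independence, $\operatorname{Cov}(\mu_i-y,\mu_j-y)=\sigma_y^2$) and not from any dependence between the $\mu_i$ and $y$.
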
 
The proof of Proposition \ref{prop:mean_rate} is given in Appendix \ref{proof:mean_rate}. This  benchmark  result indicates that, up to the irreducible term $\sigma_y^2$, the rate of convergence for the MSPE of $\bar{\mu}$ is $1/k$, which converges to zero as the number of experts $k$ goes to infinity. This rate is analogous to convergence rates for prediction errors in established predictive models, such as linear regression \citep[e.g.,][]{seber:2003,hastie:2009,james:2013}, with the number of experts playing the role of the sample size.

For our proposed ensemble estimator, we provide the asymptotic rate of convergence for its MSPE in Theorem~\ref{thm:cvg_rate} under four intuitive combinations of the transformation function $f$ and regularization term $\Phi$. Specifically, we consider $f(x)=x$ or $\log(x)$, and $\Phi(\bw)$ given by the $L^2$ penalty $\Phi_{\mathrm{L^2}}(\bw) = \sum_{i=1}^{k} (w_i - s_i)^2$, or the Entropy penalty $\Phi_{\mathrm{Ent}}(\bw) = \sum_{i=1}^{k} s_i \cdot \log\left(1/w_i\right)$. 

\begin{theorem}
\label{thm:cvg_rate} 
With proper orders for $\lambda$, the MSPE for REF satisfies 
\begin{equation*}
\E\left(\sum_{i=1}^k w_i^*\mu_i - y \right)^2 = \mathcal{O}(1/k) + \sigma_y^2, \quad \text{as $k \rightarrow \infty$.}
\end{equation*} 

\end{theorem}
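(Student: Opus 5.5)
Our approach is to reduce the problem to Proposition~\ref{prop:mean_rate}, via a bias--variance decomposition around the common mean. Write $m=\E[\mu]$ and $\varepsilon_i=\mu_i-m$. Under the assumptions in Appendix~\ref{sec:assump}, the $\mu_i$ are independent with common mean $m$ and finite variance $\sigma_\mu^2$ (taken bounded in $k$), and $y$ has variance $\sigma_y^2$ with $\E y=m$. Because the weights sum to one,
\[
\sum_{i=1}^k w_i^*\mu_i - y = \sum_{i=1}^k w_i^*\varepsilon_i - (y-m).
\]
The MSPE therefore splits into three pieces: $\E\bigl(\sum_i w_i^*\varepsilon_i\bigr)^2$, then $\sigma_y^2$, and finally a cross term. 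The cross term vanishes, or is controlled by Cauchy--Schwarz, using the independence or common-mean condition between $y$ and the forecasts. So it suffices to show
\[
\E\Bigl(\sum_{i=1}^k w_i^*\varepsilon_i\Bigr)^2=\mathcal{O}(1/k).
\]
Since the $w_i^*$ are data dependent, we cannot simply expand this square using independence. Instead we apply Cauchy--Schwarz deterministically:
\[
\Bigl(\sum_i w_i^*\varepsilon_i\Bigr)^2 \le \Bigl(\sum_i w_i^*\Bigr)\sum_i w_i^*\varepsilon_i^2 = \sum_i w_i^*\varepsilon_i^2 .
\]
We also have the sharper bound $\bigl(\sum_i w_i^*\varepsilon_i\bigr)^2\le k\sum_i (w_i^*)^2\varepsilon_i^2 = k\,V(\bw^*)$, where $V(\bw)=\sum_i w_i^2\varepsilon_i^2$ is exactly the variance term in \eqref{eq:finalobj}. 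The goal thus becomes showing $\E V(\bw^*)=\mathcal{O}(k^{-2})$. This matches the benchmark: $V(\bone/k)=k^{-2}\sum_i\varepsilon_i^2$ has expectation $\sigma_\mu^2/k$, and $k\cdot V(\bone/k)$ gives back $\mathcal{O}(1)$. So this route is only tight if the optimizer improves on equal weights by a factor of order $k$. The direct route is more natural: compare $\bw^*$ with $\bw^\circ=\bone/k$ through optimality,
\[
f(V(\bw^*))+\lambda\Phi(\bw^*)\le f(V(\bone/k))+\lambda\Phi(\bone/k).
\]
For $f(x)=x$ this gives
\[
V(\bw^*)\le V(\bone/k)+\lambda\{\Phi(\bone/k)-\Phi(\bw^*)\}.
\]
For the $L^2$ penalty, $\Phi(\bone/k)\le 2\sum_i(s_i^2+k^{-2})=\mathcal{O}(1/k)$ provided the prior weights satisfy $\max_i s_i=\mathcal{O}(1/k)$, which is the natural assumption. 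For the Entropy penalty, $\Phi(\bone/k)-\Phi(\bw^*)=\sum_i s_i\log(kw_i^*)\le \log(k\max_i s_i)$ by Gibbs' inequality applied to $\mathrm{KL}(\bs\|\bw^*)\ge 0$, and this is $\mathcal{O}(1)$. Choosing $\lambda=\mathcal{O}(k^{-2})$ (the ``proper order'') then yields $V(\bw^*)\le V(\bone/k)+\mathcal{O}(k^{-2})$. For $f=\log$ the same comparison gives $V(\bw^*)\le V(\bone/k)\exp\{\lambda\,\mathcal{O}(1)\}$, so $\lambda=\mathcal{O}(1)$ suffices there.

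The main obstacle is converting a bound on $V(\bw^*)$ into a bound on $\E(\sum_i w_i^*\varepsilon_i)^2$ without losing a factor of $k$. Plan: use the optimality (KKT) structure rather than crude Cauchy--Schwarz. For $f(x)=x$ with small $\lambda$, the stationarity conditions give $w_i^*\varepsilon_i^2 \approx \nu/2$ on the support, i.e.\ $w_i^*\propto \varepsilon_i^{-2}$ modulated by the penalty. I would then bound $\sum_i w_i^*\varepsilon_i$ by showing that the weights are bounded above by $C/k$ with high probability. For the Entropy penalty this follows because the term $\lambda s_i/w_i$ in the gradient keeps $w_i^*$ comparable to $s_i$ once $\lambda$ is chosen of order comparable to $V$. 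For the $L^2$ penalty it needs a lower bound on $\lambda$ relative to $k^{-2}$. With $\max_i w_i^*\le C/k$, Cauchy--Schwarz gives
\[
\Bigl(\sum_i w_i^*\varepsilon_i\Bigr)^2\le k\max_i w_i^*\cdot\sum_i w_i^*\varepsilon_i^2\cdot k^{-1}\cdot k ,
\]
and the combined estimate with $\E V(\bw^*)=\mathcal{O}(k^{-2})$ delivers $\mathcal{O}(1/k)$. The tail event where the weight bound fails is handled by the trivial bound $(\sum_i w_i^*\varepsilon_i)^2\le\max_i\varepsilon_i^2$ together with moment assumptions. Selecting $\lambda$ in the window that makes both the comparison and the weight bound hold is the delicate step, and is what ``proper orders for $\lambda$'' refers to. Adding $\sigma_y^2$ completes the proof.
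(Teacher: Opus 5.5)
Your reduction is the same as the paper's. The cross term vanishes because $y$ is independent of $\bmu$ (hence of $\bw^*$) and $\E[y]=\E[\mu]$. Cauchy--Schwarz then gives $(\sum_i w_i^*\varepsilon_i)^2\le k\,V(\bw^*)$, so everything hinges on $\E V(\bw^*)=\mathcal{O}(k^{-2})$. You rightly observe that this needs the optimizer to beat equal weights by a factor of order $k$. However, comparing with $\bone/k$ can only give $\E V(\bw^*)=\mathcal{O}(1/k)$, and the repair you propose cannot work, for three reasons.

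\begin{enumerate}
\item The bound $\max_i w_i^*\le C/k$ is incompatible with the target. If it held, then with high probability at least a quarter of the mass would sit on experts whose $\varepsilon_i^2$ exceeds a fixed positive quantile $q$ of $(\mu-\E[\mu])^2$. That forces $V(\bw^*)\ge q/(16k)$.
\item The bound is also false in the small-$\lambda$ regime your comparison step uses. For $f(x)=x$ and $\lambda\to 0$ the minimizer is $w_i^*\propto\varepsilon_i^{-2}$. Since $P(\varepsilon_i^{-2}>t)\asymp t^{-1/2}$ as $t\to\infty$, the largest term carries a non-vanishing fraction of $\sum_j\varepsilon_j^{-2}$. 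So the optimizer deliberately concentrates weight on the forecasts nearest the mean.
\item Even granting the weight bound, your displayed chain does not give $\mathcal{O}(1/k)$. It would require bounding $\sum_i w_i^*\varepsilon_i^2$ by $V(\bw^*)$, which goes the wrong way, since $V(\bw^*)\le\max_i w_i^*\sum_i w_i^*\varepsilon_i^2$. The estimate $\E V(\bw^*)=\mathcal{O}(k^{-2})$ you invoke at the end was never established.
\end{enumerate}

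The missing idea is a better comparison point. The factor $k$ lost in Cauchy--Schwarz is affordable, because $\min_{\bw}V(\bw)$ really is of order $k^{-2}$. The paper takes $i_{(1)}=\arg\min_i|\mu_i-\E[\mu]|$ and $\tilde\bw$ with $\tilde w_{i_{(1)}}=1-1/k$ and $\tilde w_i=1/(k(k-1))$ otherwise. Then $\E V(\tilde\bw)\le\E[\min_i\varepsilon_i^2]+\sigma_0^2/(k(k-1)^2)=\mathcal{O}(k^{-2})$, with $\sigma_0^2=\Var[\mu]$. This uses the order-statistic limit $k^2\E[\min_i\varepsilon_i^2]\to 2/r^2(0)$ of Lemma~\ref{lem:Xmin}. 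That is where Assumption~\ref{assumption:mu}(b), positivity of the density of $|\mu-\E[\mu]|$ at $0$, enters; your argument never uses it.

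Comparing objectives at $\bw^*$ and $\tilde\bw$ then needs only two penalty bounds:
\begin{itemize}
\item $\Phi_{\mathrm{L^2}}(\tilde\bw)-\Phi_{\mathrm{L^2}}(\bw^*)<3$;
\item $\Phi_{\mathrm{Ent}}(\tilde\bw)-\Phi_{\mathrm{Ent}}(\bw^*)<\Phi_{\mathrm{Ent}}(\tilde\bw)<2\log k$.
\end{itemize}
These use only $s_i>0$ and $\sum_i s_i=1$; no condition like $\max_i s_i=\mathcal{O}(1/k)$ is needed. These bounds are exactly what fix the proper orders of $\lambda$:
\begin{itemize}
\item for $f(x)=x$, $\lambda=\mathcal{O}(k^{-2})$ with the $L^2$ penalty and $\mathcal{O}(1/(k^2\log k))$ with the entropy penalty;
\item for $f=\log$, $\lambda=\mathcal{O}(1)$ and $\mathcal{O}(1/\log k)$ respectively, where the comparison reads $V(\bw^*)\le e^{3\lambda}V(\tilde\bw)$ or $V(\bw^*)\le k^{2\lambda}V(\tilde\bw)$.
\end{itemize}
Once $\E V(\bw^*)=\mathcal{O}(k^{-2})$ is in hand, the crude bound $k\,V(\bw^*)$ already yields $\mathcal{O}(1/k)$. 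No control of individual weights is needed.
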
 
The proof of Theorem~\ref{thm:cvg_rate} is given in Appendix~\ref{proof:cvg_rate}. This result shows that our proposed ensemble forecast has a comparable rate of convergence in MSPE with the simple mean ensemble when the number of experts $k$ is sufficiently large. Although Theorem~\ref{thm:cvg_rate} is derived under the assumptions on the experts’ current forecasts, rather than on their forecast errors as is common in the literature, the two types of assumptions have natural connections. In Remark \ref{rmk:independent} below, we  clarify the dependence structure for the forecast errors implied by the assumptions for Theorem~\ref{thm:cvg_rate}.

\begin{Remark} \label{rmk:independent}
Under Assumption~\ref{assumption:mu} where $\mu_1,\ldots,\mu_k$ are assumed i.i.d. with common mean $\E[\mu]=\E[y]$ and variance $\sigma_0^2$, we can show that the forecast errors $\{\mu_i-y:i=1, \ldots, k\}$ are correlated and they share the same correlation $\rho=\sigma_y^2/(\sigma_0^2+\sigma_y^2)$.  This is consistent with the assumption on forecast errors by \cite{Soule:2024}.
\end{Remark}
Appendix~\ref{appendix:dgp2d_simulation} considers a data-generating process that produces the correlated forecast-error structure described above.  It shows how correlated forecast errors can arise and demonstrates that REF performs well across a range of correlation levels.

\subsection{Illustrative Examples: Effects of Forecast Extremeness and Expert Prior Weights on Optimal Weights} \label{sec:IllustrativeEx_k=2}
In this section, we provide two graphical illustrations to show how the optimal weights are jointly determined by experts’ current forecasts and prior weights. We consider a simplified setting with two experts ($k=2$), focusing on the same two choices of $f$ and the two penalty forms of $\Phi$ considered in Theorem~\ref{thm:cvg_rate}.

Figure~\ref{fig:Sim_ellipse_contour} illustrates how the variance and regularization terms determine the optimal weights $\bw^*$ as $\lambda$ varies, holding the forecasts $\boldsymbol{\mu}=(\mu_1,\mu_2)$ and prior weights $\boldsymbol{s}=(s_1,s_2)$ fixed. For a given $\lambda$, the optimization problem in \eqref{eq:finalobj}  can be rewritten in the following constrained form: 
\begin{equation*}
\min_{\bw} \, f\left\{\sum_{i=1}^{k} w_i^2(\mu_i-\E[\mu])^2 \right\},\quad s.t. \quad \Phi(\bw) \leq C_\lambda,  \sum_{i=1}^k w_i=1,\, w_i \geq 0,
\end{equation*}
where $C_\lambda$ is a nonincreasing function of $\lambda$. Each panel  in the figure  illustrates contours of $f\left\{\sum_{i=1}^{k} w_i^2(\mu_i-\E[\mu])^2 \right\}$, the linear constraint $w_1 + w_2 = 1$ with $w_1, w_2 \geq 0$, and the feasible set $\{\bw: \Phi(\bw)\le C_\lambda\}$ induced by the regularization term. The shaded area denotes the feasible region, with darker shading corresponding to smaller $C_\lambda$ (strong regularization). The optimal weights are located where the contour intersects the linear constraint line within the shaded feasible region at the lowest contour level possible.

In the illustration, $\mu_1$ and $\mu_2$ are drawn independently from $N(0,5)$. We label $\mu_2$ as the forecast with the larger deviation from $\mathbb{E}[\mu]$ so that the contours are horizontally oriented.   We fix $s_1 = 0.1$ and $s_2 = 0.9$, and consider $\lambda=0, 1, 3, 10$ and $10,000$.

\begin{figure}[t]
\centering
\subfigure[$f(x)=x$, $\Phi_{\mathrm{L^2}}$]{\includegraphics[width=0.325\textwidth]{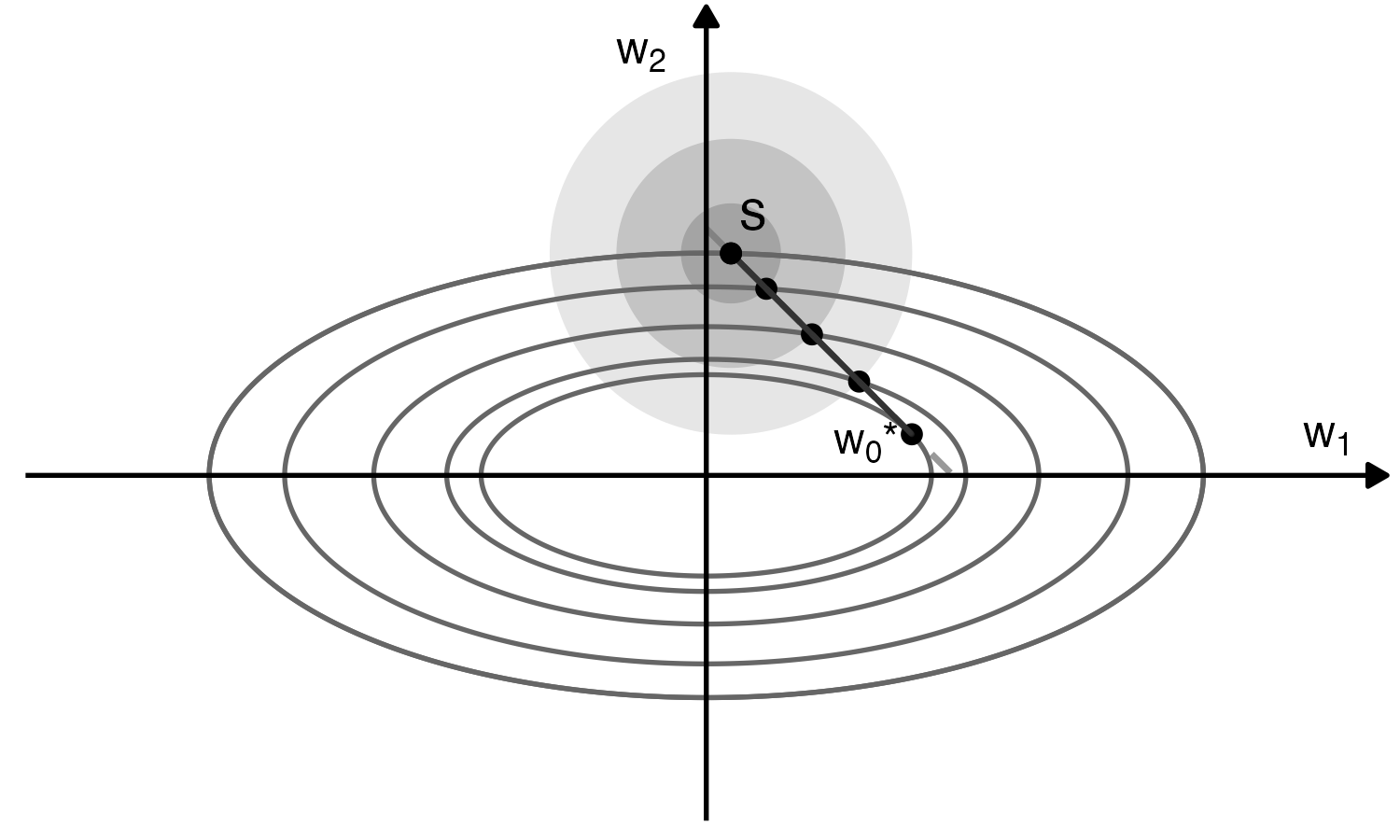}}
\hfill
\subfigure[$f(x)=\log(x)$, $\Phi_{\mathrm{L^2}}$]{\includegraphics[width=0.325\textwidth]{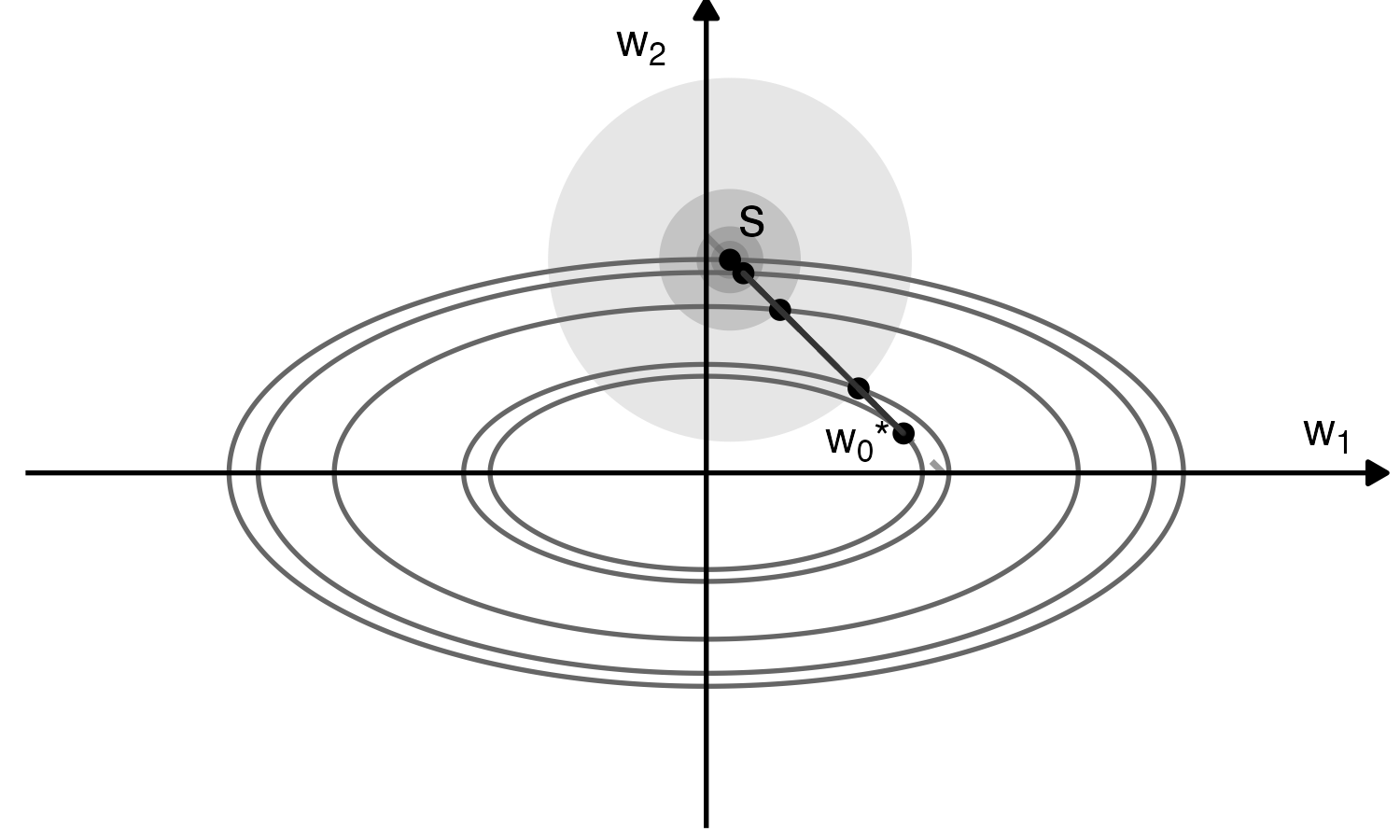}}
\hfill
\subfigure[$f(x)=\log(x)$, $\Phi_{\mathrm{Ent}}$]{\includegraphics[width=0.325\textwidth]{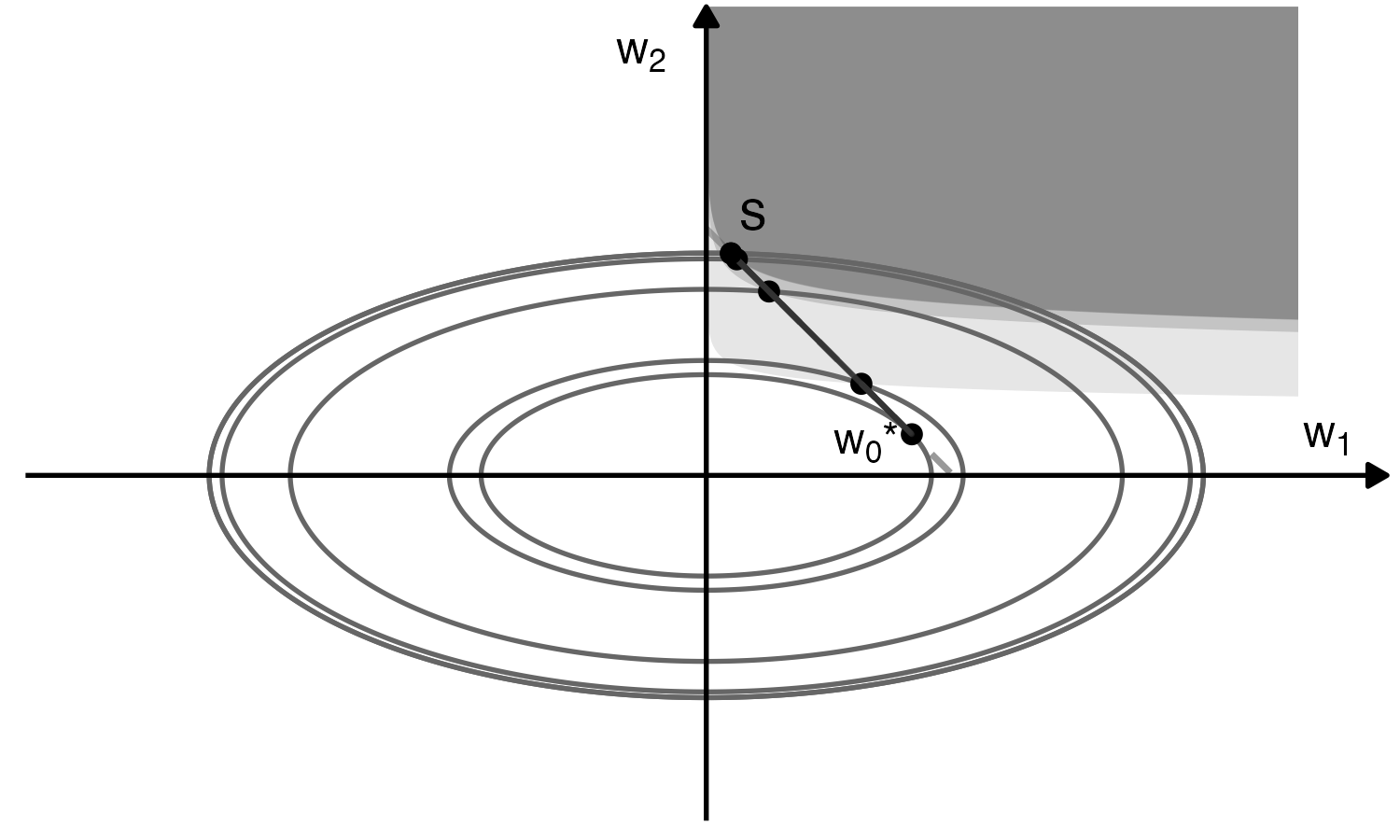}}
\vspace{0.3cm}
\caption{Regularized weight optimization with two experts ($k=2$) under representative combinations of the transformation $f$ and penalty $\Phi$, where $\Phi_{\mathrm{L^2}}(\bw)=\sum_{i=1}^k(w_i-s_i)^2$ and $\Phi_{\mathrm{Ent}}(\bw)=\sum_{i=1}^k s_i\log(1/w_i)$. Shown are variance level sets (concentric ellipses), penalty feasible region (gray shading), linear constraint $w_1 + w_2 = 1$ (diagonal line), prior weights $\boldsymbol{s}$, and optimal weights (points) for different $\lambda$.}
\vspace{-0.3cm}
\label{fig:Sim_ellipse_contour}
\end{figure}

As shown in the figure, increasing $\lambda$ moves $\bw^*$ from  the initial variance-optimal point $\bw_0^*$ (when $\lambda=0$) toward the prior weights $\boldsymbol{s}$.  The identity and logarithmic transformations differ in the speed of this movement: under the logarithmic transformation,  $\bw^*$ moves more slowly toward $\bs$ for small $\lambda$ but more rapidly for large $\lambda$ relative to the identity case.  For the regularization term, the entropy penalty  pulls the  optimal weights closer to $\boldsymbol{s}$ than the $L^2$ penalty under the same $\lambda$, especially when $\lambda$ is large.  For moderate values of $\lambda$, the two penalties yield similar weights.

Figure~\ref{fig:Sim_plot} examines how forecast extremeness and prior weights affect the optimal weights under the same $\lambda$. In each panel of the figure, we generate a contour plot of the optimal weights $w_1^*$ over a $100\times 100$ grid  of $\mu_1$  and $s_1\in(0,1)$. Since $\sum_{i=1}^k s_i = 1$, we have $s_2 = 1- s_1$.  We assume $\E[\mu]=0$ and fix $\mu_2$ at a random draw from $N(0,5)$.  We  set  $\lambda=100$ for $f(x)=x$ and $\lambda=5$ for $f(x)=\log(x)$, so that the balance between the penalty term and the variance term is similar  across  transformations and the resulting contour plots are visually comparable. 

\begin{figure}[t]
\centering
\subfigure[$f(x)=x$, $\Phi_{\mathrm{L^2}}$]{\includegraphics[width=0.325\textwidth]{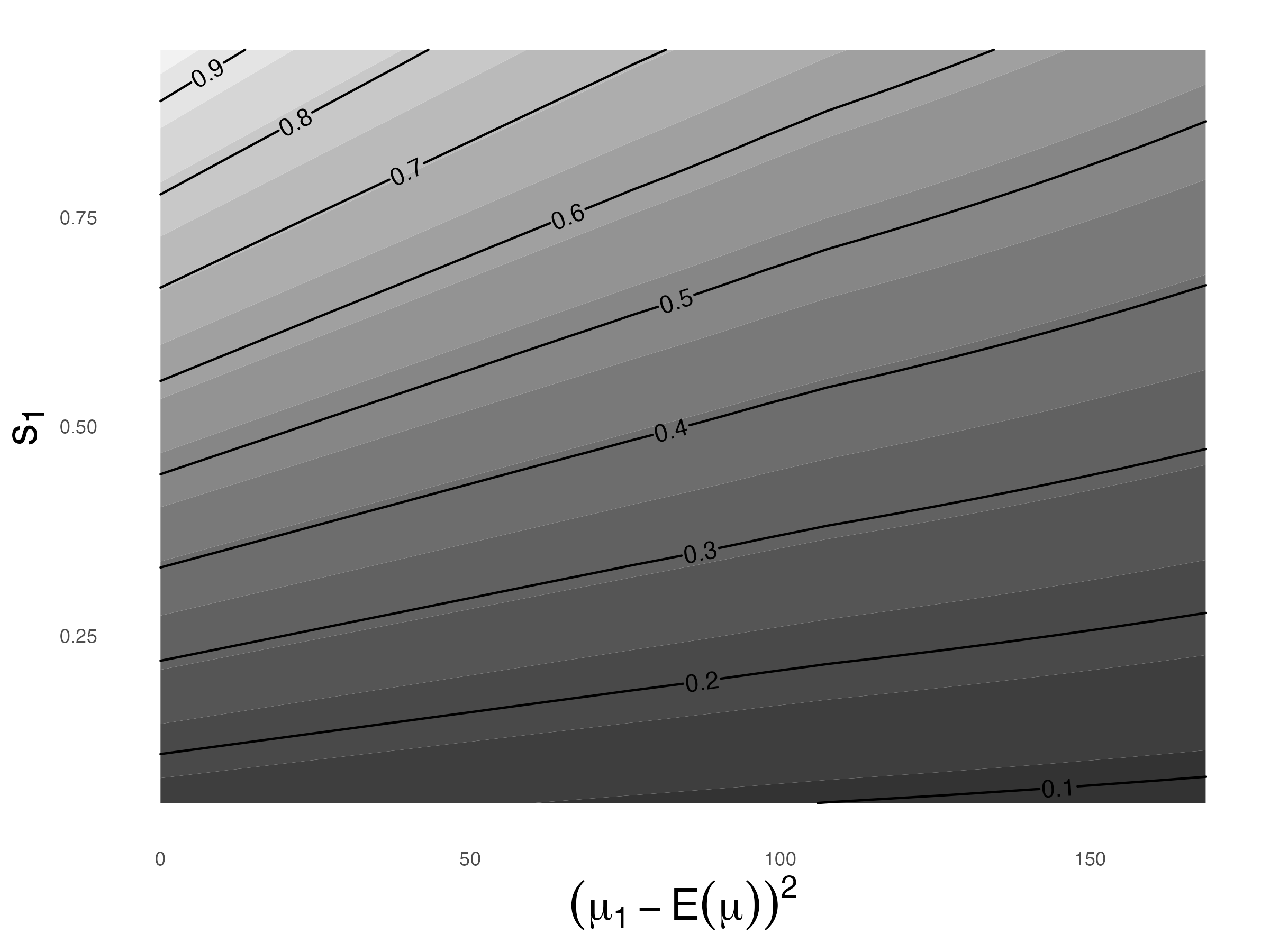}}
\hfill
\subfigure[$f(x)=\log(x)$, $\Phi_{\mathrm{L^2}}$]{\includegraphics[width=0.325\textwidth]{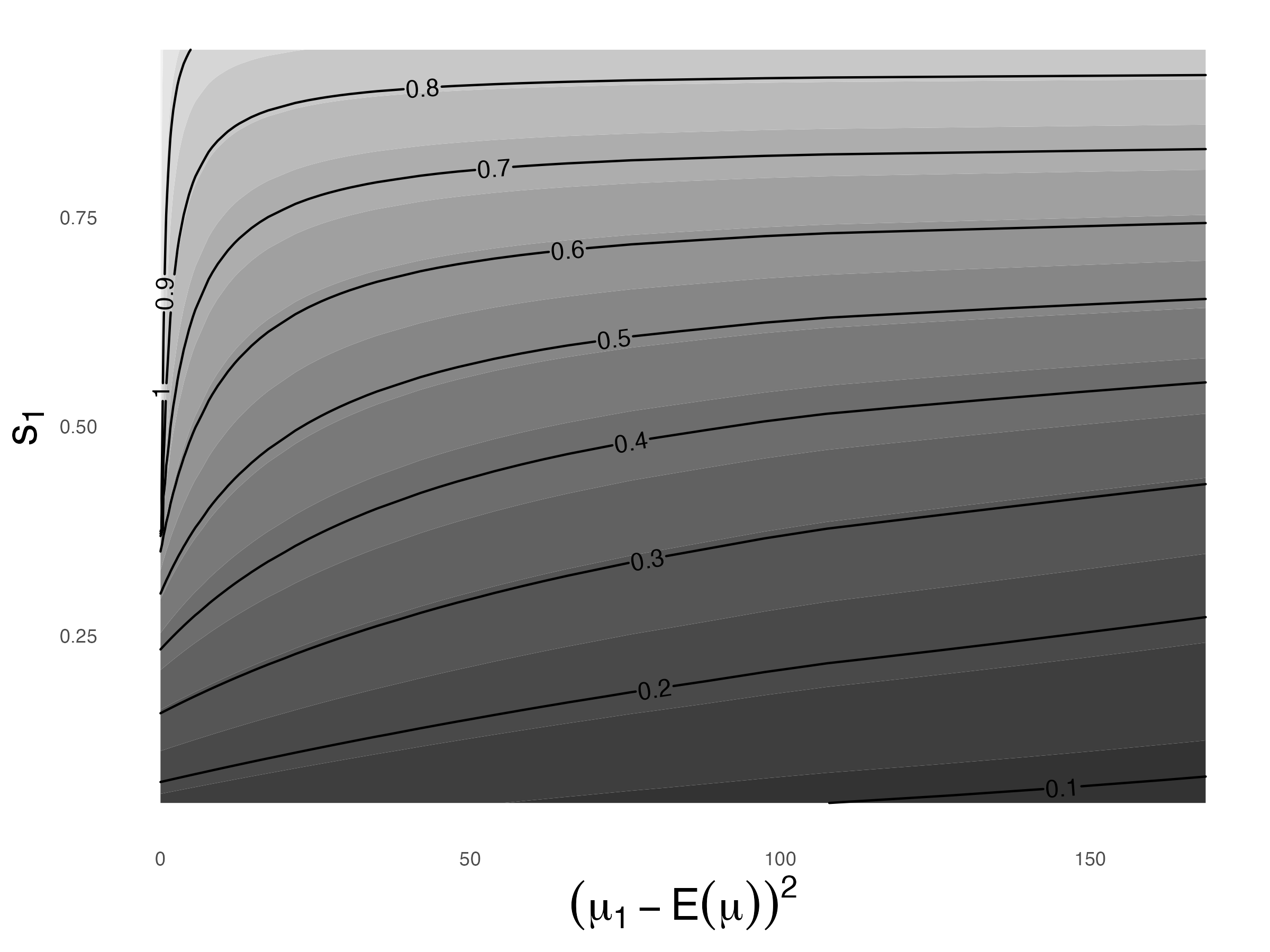}}
\hfill
\subfigure[$f(x)=\log(x)$, $\Phi_{\mathrm{Ent}}$]{\includegraphics[width=0.325\textwidth]{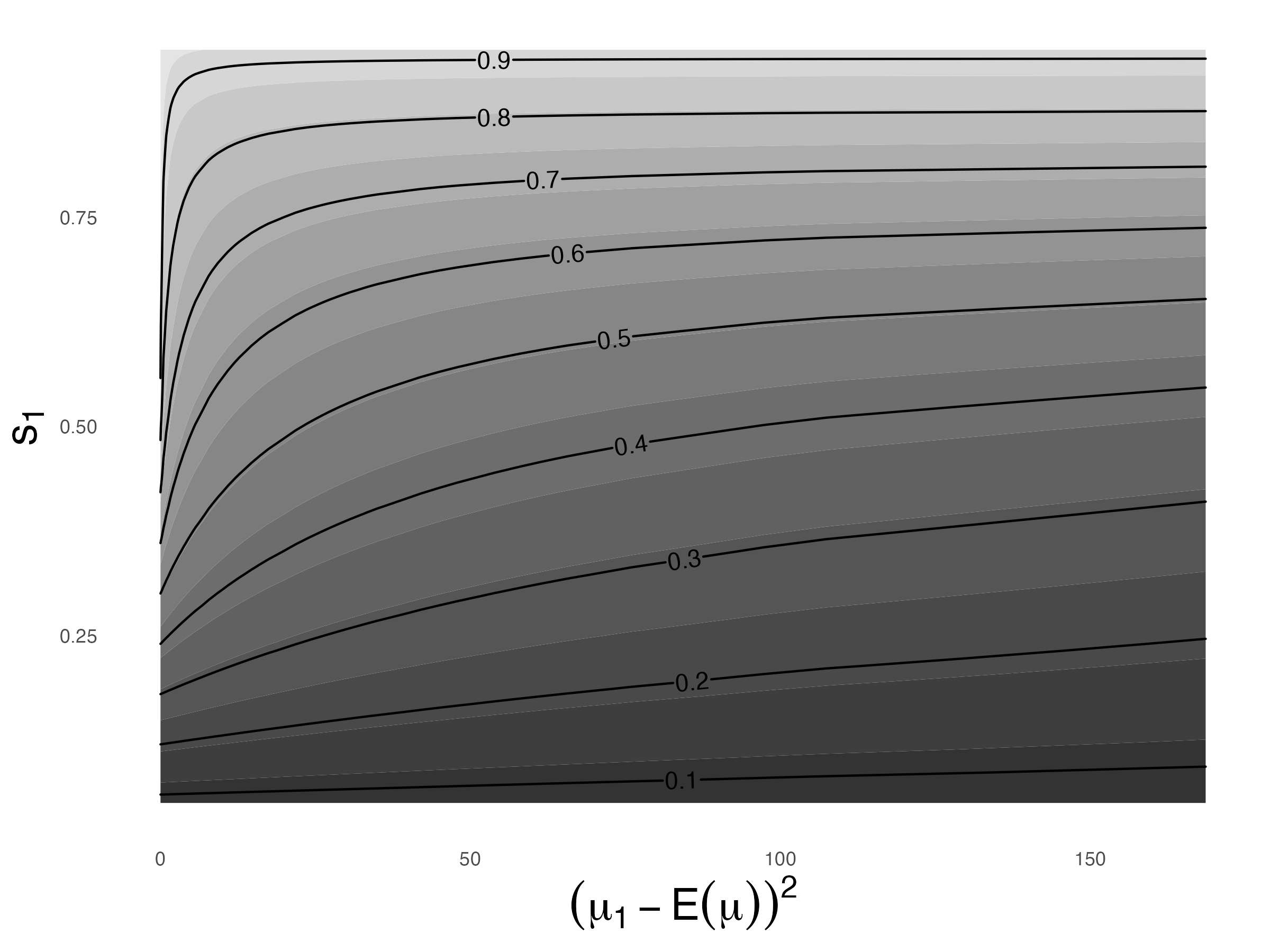}}
\vspace{0.3cm}
\caption{Contours of Expert 1's optimal weight $w_1^*$ for varying prior weight $s_1$ and squared deviation $(\mu_1 - \mathbb{E}(\mu))^2$, holding Expert 2's forecast fixed, under representative combinations of $f$ and $\Phi$.}
\vspace{-0.3cm}
\label{fig:Sim_plot}
\end{figure}

The  contour plots show that $w_1^*$ decreases in $(\mu_1 - \E[\mu])^2$ and increases in $s_1$, consistent with the intuition described in Section \ref{sec:framework}. Figure \ref{fig:Sim_plot} also highlights systematic differences in the behavior of $w_1^*$ across the three combinations. When $f(x)=x$, as in figure (a), the contours are nearly parallel, indicating that $w_1^*$ decreases approximately linearly with $(\mu_1-\E[\mu])^2$. When $f(x)=\log(x)$, as in figures (b) and (c), the decline in $w_1^*$ is steep for small $(\mu_1-\E[\mu])^2$ but gradually flattens as the deviation increases, reflecting the scale-compressing effect of the transformation.

In practice, a decision maker may  choose  the identity transformation when deviations are  comparable in scale, and proportional increases in squared deviation should lead to roughly proportional weight reductions.  When deviations vary widely, the decision maker may instead prefer the log transformation, which  distinguishes more sharply among experts near the consensus but flattens for extreme forecasts, thereby limiting their down-weighting. This choice may be desirable when distance from the consensus alone should not justify a large penalty because, for example,   an extreme forecast may reflect valuable private information rather than error. 

Figure \ref{fig:Sim_plot} also reveals the influence of the regularization term on the optimal weights. Under the $L^2$ penalty, as in panels (a) and (b), the contour lines are more widely spaced vertically,  showing that changes in $s_1$ lead to gradual changes in $w_1^*$. Comparing panels (b) and (c), which both have $f(x) = \log(x)$, the contours are more tightly clustered along the $s_1$ dimension under the entropy penalty in (c), indicating that $w_1^*$ is more sensitive to changes in the prior weight $s_1$. The $L^2$ penalty is therefore preferable when prior weights are noisy or uncertain,  as it limits the sensitivity of the optimal weights to small changes in the prior. The entropy penalty is preferable when the prior weights are considered reliable, as it makes the optimal weights more responsive to the prior.  The overall strength of regularization is governed separately by $\lambda$.

\section{Bayesian Connections and Implications for $f$ and $\Phi$} \label{sec:bayes} 

In this section, we show that the objective function in \eqref{eq:finalobj} and the optimal weights obtained can alternatively be derived via a Bayesian framework under various distributional assumptions. This connection is important because (1) it provides justification for the choices of $f$ and $\Phi$, and (2) it yields the full posterior distribution of the ensemble forecasts, which the frequentist formulation does not provide. Below, we first present the general Bayesian hierarchical framework and then, under various choices of distributional assumptions, demonstrate their equivalence with  those from \eqref{eq:finalobj}.

\subsection{Bayesian Modeling Framework for Regularized Ensemble Forecasting} \label{sec:bayes:general}
First, the quantity of interest $y$ is assumed to be drawn from a distribution $g$ with parameters $\bbeta$. We assume that $\bbeta$ follows a distribution $h$ related to the experts' predictions $\bmu$ through weight parameters $\bw$. For example, $\bbeta$ may follow a normal distribution with mean $\bw^\top \bmu$ and variance additionally parameterized. Finally, the prior distribution of $\bw$ is specified by $\pi$ with parameters $\balpha$,  which may reflect the experts’ historical performances, for example by assigning lower prior weights to those with poorer past accuracy. In the absence of historical performance data, the decision maker may adopt a diffuse prior, assigning equal prior weights to all experts. 

Accordingly, the complete Bayesian hierarchical modeling framework is as follows:
\begin{equation}
\begin{split}
 \bw &\sim  \pi(\cdot \mid \balpha), \\
\bbeta \mid \bw, \bmu &\sim  h(\cdot \mid \bw, \bmu),  \\
y \mid \bbeta &\sim  g(\cdot \mid \bbeta). 
\label{eq:Bayesian Framework}
\end{split}
\end{equation}

Based on \eqref{eq:Bayesian Framework}, the conditional distribution of  $y$ given $\bmu$ and $\bw$ is 
\begin{equation}
p(y \mid \bw,\bmu) = \int g(y\mid \bbeta)h(\bbeta\mid \bw,\bmu)d\bbeta,
\label{eq:posterior_of_y}
\end{equation}
and the posterior distribution of $\bw$ follows 
\begin{equation}
l(\bw\mid y,\bmu)
=
\frac{p(y\mid \bw,\bmu)\,\pi(\bw\mid \balpha)}
     {\int p(y\mid \bw,\bmu)\,\pi(\bw\mid \balpha)\,d\bw}
\;\propto\;
p(y\mid \bw,\bmu)\,\pi(\bw\mid \balpha),
    \label{eq:posterior_of_w} 
\end{equation}
where the denominator $\int p(y\mid \bw,\bmu)\,\pi(\bw\mid \balpha)\,d\bw$ is a constant with respect to $\bw$.

If $y$ were observed, it would be straightforward to obtain the optimal weights by maximizing the posterior distribution of $\bw$. In practice, however, $y$ is typically unobserved when the decision maker creates the ensemble forecast, and directly maximizing \eqref{eq:posterior_of_w} is infeasible. To address this problem, we propose to treat $y$ as missing data and apply the following two-step procedure which is inspired by the Expectation–Maximization (EM) algorithm \citep{Dempster:1977}. Explicitly, the procedure consists of the following steps:
\begin{eqnarray}
 &\text{E Step:} &
  \quad Q(\bw \mid \bmu)=\E_{y\sim p(\cdot \mid \bw,\bmu)} [\log\big(p(y\mid \bw,\bmu)\,\pi(\bw\mid \balpha)\big)] \nonumber \\
 &\phantom{\text{E Step:}}& \phantom{\quad Q(\bw \mid \bmu)}= \int \log\,\big(p(y\mid \bw,\bmu)\,\pi(\bw\mid \balpha)\big) p(y \mid \bw,\bmu)d\,y;
\label{eq:EM_E step}  \\
&\text{M Step:}& \quad \bw^* = \arg\max_{\bw} \; Q(\bw \mid \bmu).
\label{eq:EM_M step}
\end{eqnarray}
Here we also call the two steps E Step and M Step. However, the E Step differs from that of the classical EM algorithm which iterates by taking expectations with respect to the conditional distribution of latent variables given the observed data under the previous iterate; here, the expectation is taken over $p(y\mid \bw,\mu)$.

In the following sections, we will show that with commonly adopted choices of $\pi$, $h$ and $g$ in \eqref{eq:Bayesian Framework}, the optimal weights obtained by \eqref{eq:EM_M step} are equivalent to those obtained by the general framework in Section 2 under appropriate choices of $f$ and $\Phi$.  We focus on Dirichlet-prior specifications for the weights in this section, while the normal-prior specifications which result in $L^2$ penalties are discussed in detail in Appendix~\ref{appendix:bayes_l2_details}. Derivations for all specifications are provided in Appendix~\ref{proofs of the bayesian frameworks}. 

\subsection{
Normal $y$ with Known Variance } \label{sec:bayes:normal_known_variance} We begin with a normal model in which $y$ has known variance $\sigma^2$ and unknown mean $\theta$, so that $\beta = \theta$ in~\eqref{eq:Bayesian Framework}.  We further assume that $\theta$ follows a normal distribution with mean $\bw^\top \bmu$, a linear combination of the experts' forecasts, and variance $\bw^\top\boldsymbol{\Sigma}\bw$, which represents the aggregated uncertainty of the linear ensemble, where $\boldsymbol{\Sigma}$ captures the dispersion of the experts' forecasts. Here, we consider a Dirichlet prior for $\bw$, and derive the corresponding predictive distribution for $y$ as well as the optimal weights. We then show that, when the transformation $f$ is a shifted logarithm function and the penalty $\Phi$ corresponds to the entropy penalty, the resulting weights coincide with those obtained from the general framework \eqref{eq:finalobj}.  Motivated by this equivalence, we refer to this Bayesian model as the ``Shifted Log-Entropy model.'' 

\subsubsection*{Shifted Log-Entropy model.} The Dirichlet distribution, which imposes the constraints $w_i>0$ and $\sum_{i=1}^k w_i = 1$, would be a natural choice for the prior of $\bw$ \citep{diebold:2023,wang:2023}. Under this prior, the weights $\boldsymbol{w}$ can be interpreted as each expert's probability of outperforming the others \citep{bunn:1975,bessler:1988}. 
The Bayesian framework then becomes:
\begin{eqnarray}
\bw &\sim &  \mathrm{Dir}(\balpha), 
\nonumber \\
(\theta \mid \bw,\bmu) &\sim &  N(\bw^\top\bmu,\bw^\top\boldsymbol{\Sigma}\bw), \label{eq:bayes_hierarchy_known_variance_dir}  \\
(y \mid \theta)&\sim &  N(\theta,\sigma^2),
\nonumber
\end{eqnarray}
where $\balpha=(\alpha_1,\ldots, \alpha_k)$ are the parameters that characterize the Dirichlet distribution (see~\eqref{eq:dir_pdf} in Appendix~\ref{proofs of the bayesian frameworks} for the density).  Following the procedure outlined in Section \ref{sec:bayes:general}, the distribution of  $y$ given $\bmu$ and $\bw$ is 
\begin{equation}
(y \mid \bw,\bmu)\sim N(\bw^\top \bmu, \sigma^2 + \bw^\top\boldsymbol{\Sigma}\bw),
\label{eq:y_dist_known_variance}
\end{equation}
and the log-density of the posterior  for $\bw$ given y and $\bmu$ is 
\begin{equation*}
\log(l(\bw \mid y,\bmu)) \propto -\frac{1}{2}\log(\sigma^2+\bw^\top\boldsymbol{\Sigma}\bw)-\frac{1}{2}\frac{(y-\bw^\top \bmu)^2}{\sigma^2 + \bw^\top\boldsymbol{\Sigma}\bw} - 
(\boldsymbol{\alpha-1})^{\top} \log ({\bw}^{-1}) + \text{const.}
\end{equation*} 
Applying the algorithm in \eqref{eq:EM_E step} yields 
\begin{equation*}
Q(\bw \mid \bmu) = -\frac{1}{2}\log(\sigma^2+\bw^\top\boldsymbol{\Sigma}\bw)-
(\boldsymbol{\alpha-1})^{\top} \log ({\bw}^{-1})
+\text{const.} , 
\end{equation*} 
where $w_i>0$ and $\sum_{i=1}^k w_i = 1$.

We parameterize $\alpha_i$ using a prior weight $s_i$ that summarizes past forecasting performances, setting $\alpha_i - 1 =  (\lambda/2) s_i$. In practice, a decision maker may assign values to $s_i$ based on subjective judgments about the likelihood that each expert will outperform the others. Under this setting, the mode of the prior (or the most likely weight) corresponds to the subjective weights assigned by the decision maker. Alternatively, $s_i$ can be set equal to the covariance weights derived in \cite{winkler:1981} and \cite{Clemen:1986}. This approach will assign higher prior weight to experts with lower forecast error variance, while adjusting for correlation among experts.

Although $\lambda$ does not change the mode, it affects the variance. As $\lambda \to \infty$, the prior concentrates around $\bs$, indicating that the decision maker is more confident in relying on the prior to guide weights. The variance also depends on $s_i$: when $\lambda > 2(k-2)$, it initially rises with $s_i$, reaches a maximum of $1/\big(4(\lambda/2+k+1)\big)$, and then declines. When $\lambda \le 2(k-2)$, the prior remains relatively diffuse even for large $s_i$, and the variance increases with $s_i$ throughout.

As in Section~\ref{sec:method}, to simplify the model and reduce estimation complexity, we assume that the experts' forecasts are independent and centered around a common mean $\E[\mu]$. Accordingly, the covariance matrix is naturally specified as $\boldsymbol{\Sigma}=\diag\left\{(\mu_1-\E[\mu])^2,\ldots,(\mu_k-\E[\mu])^2\right\}$, with each diagonal element given by the squared deviation of the corresponding forecast from the mean. The resulting M-step optimization is then summarized in Proposition~\ref{prop:bayes_shifted_log_entropy}.

\begin{proposition} [Shifted Log-Entropy model]\label{prop:bayes_shifted_log_entropy}
Under the Bayesian hierarchical model in \eqref{eq:bayes_hierarchy_known_variance_dir}, suppose $\alpha_i-1=(\lambda/2) s_i$, where $s_i>0$ and $\sum_{i=1}^k s_i=1$. Let $\boldsymbol{\Sigma}=\diag\{(\mu_1-\E[\mu])^2,\ldots,(\mu_k-\E[\mu])^2\}$. Then the M-step estimator of $\bw$ is given by
\begin{equation}
\bw^*=\arg \max_{\bw}  \Bigg[-\log\left\{\sigma^2+\sum_{i=1}^{k} w_i^2(\mu_i-\E[\mu])^2\right\} - \lambda \sum_{i=1}^{k}  s_i \cdot \log \left(\frac{1}{w_i}\right)\Bigg], \  s.t.,\ \sum_{i=1}^k w_i=1,\ w_i>0.
\label{eq:bayes_framework_normal_dir}
\end{equation}
\end{proposition}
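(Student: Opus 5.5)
The plan is to compute the E-step objective $Q(\bw\mid\bmu)$ from \eqref{eq:EM_E step} explicitly for the hierarchy \eqref{eq:bayes_hierarchy_known_variance_dir}, substitute the stated choices of $\balpha$ and $\boldsymbol{\Sigma}$, and then rescale the result so that its maximizer matches \eqref{eq:bayes_framework_normal_dir}.

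\textbf{Step 1 (predictive distribution).} First we integrate out $\theta$ in \eqref{eq:posterior_of_y}. Because $\theta\mid\bw,\bmu$ and $y\mid\theta$ are both Gaussian, the convolution gives \eqref{eq:y_dist_known_variance}:
\begin{equation*}
y\mid\bw,\bmu\sim N\bigl(\bw^\top\bmu,\ \sigma^2+\bw^\top\boldsymbol{\Sigma}\bw\bigr).
\end{equation*}
One can see this by writing $y=\theta+\varepsilon$ with $\varepsilon$ independent of $\theta$. Write $v(\bw)=\sigma^2+\bw^\top\boldsymbol{\Sigma}\bw$. The log-likelihood is then $-\tfrac12\log(2\pi)-\tfrac12\log v(\bw)-\tfrac12 (y-\bw^\top\bmu)^2/v(\bw)$.

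\textbf{Step 2 (prior).} We add the Dirichlet log-density $\sum_i(\alpha_i-1)\log w_i-\log B(\balpha)$. The normalizing constant does not involve $\bw$ once $\balpha$ is fixed by $\bs$ and $\lambda$, so it can be dropped.

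\textbf{Step 3 (E step).} This is the only point that needs care. The expectation in \eqref{eq:EM_E step} is taken under $p(\cdot\mid\bw,\bmu)$ itself, with the same $\bw$. Hence
\begin{equation*}
\E\bigl[(y-\bw^\top\bmu)^2\bigr]=v(\bw),
\end{equation*}
and the quadratic term becomes the constant $-\tfrac12$. The prior term does not depend on $y$, so it passes through the expectation unchanged. This gives
\begin{equation*}
Q(\bw\mid\bmu)=-\tfrac12\log v(\bw)+\sum_i(\alpha_i-1)\log w_i+\text{const},
\end{equation*}
which agrees with the display preceding the proposition. We do not differentiate through the dependence of the expectation's measure on $\bw$. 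The E step fixes the objective as this function of $\bw$, and the M step then maximizes it.

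\textbf{Step 4 (substitution and rescaling).} With $\boldsymbol{\Sigma}$ diagonal, $\bw^\top\boldsymbol{\Sigma}\bw=\sum_i w_i^2(\mu_i-\E[\mu])^2$. Setting $\alpha_i-1=(\lambda/2)s_i$ gives
\begin{equation*}
\sum_i(\alpha_i-1)\log w_i=-\tfrac{\lambda}{2}\sum_i s_i\log(1/w_i).
\end{equation*}
Multiplying $Q$ by $2>0$ and dropping the additive constants leaves the argmax unchanged, and yields exactly the objective in \eqref{eq:bayes_framework_normal_dir}.

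\textbf{Step 5 (feasible set).} The support of the Dirichlet distribution gives the constraints $w_i>0$ and $\sum_i w_i=1$. We should also check that the maximizer exists and lies in the interior of the simplex. As any $w_i\to0$, the term $-\lambda s_i\log(1/w_i)\to-\infty$ because $s_i>0$, while $-\log v(\bw)$ stays bounded. The objective is continuous on the open simplex, so a maximizer is attained there and the argmax is well defined.
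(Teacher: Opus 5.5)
Your proposal is correct and follows essentially the same derivation as the paper. You obtain the Gaussian predictive $N(\bw^\top\bmu,\sigma^2+\bw^\top\boldsymbol{\Sigma}\bw)$, add the Dirichlet log-density, and take the expectation under $p(y\mid\bw,\bmu)$ with the same $\bw$, which turns the quadratic term into the constant $-\tfrac12$; you then substitute $\alpha_i-1=(\lambda/2)s_i$ and rescale by $2$. Your Step~5 on the existence of an interior maximizer, which uses $\sigma^2>0$, $\lambda>0$ and $s_i>0$, is a small addition that the paper does not spell out; it only remarks that the logarithm forces $w_i>0$.
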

The proof of this proposition is given in Appendix~\ref{proofs of the bayesian frameworks}. The formulation here aligns with the general objective function in \eqref{eq:finalobj} by setting the transformation function $f(x)=\log(\sigma^2+x)$ and the penalty $\Phi(\bw) = \Phi_{\mathrm{Ent}}(\bw) = \sum_{i=1}^{k} s_i \cdot \log\left(1/w_i\right)$.

\subsection{
Normal $y$ with Unknown Variance }
\label{sec:bayes:normal_unknown_variance}
In this section, we assume that $y$ follows a normal distribution with mean $\theta$ and precision $\nu$, both of which are unknown.  We define $\bbeta = (\theta,\nu)$ in \eqref{eq:Bayesian Framework}, and specify its conditional distribution given $(\bmu,\bw)$ by a Normal-Gamma distribution, a commonly used conjugate specification for a normal outcome with unknown mean and precision 
\citep[e.g.,][]{bernardo:2000, gaba:2019, grushka:2017}. The two specifications discussed below use different Gamma rates, yielding the logarithmic and identity transformations of the variance term.

\subsubsection*{Log-Entropy model. 
} \label{sec:bayes:normal_unknown_variance_dir}  We first set the Gamma rate to $\bw^\top\boldsymbol{\Sigma}\bw$, which captures uncertainty in the weighted ensemble, so that greater ensemble uncertainty corresponds to lower expected precision for $y$.  With a Dirichlet prior for $\bw$, the corresponding Bayesian hierarchy is 
\begin{eqnarray}
\bw &\sim &  \mathrm{Dir}(\balpha), 
\nonumber \\
(\bbeta\mid\bw,\bmu) &\sim & N_{\theta\mid \nu}(\bw^\top \bmu,(m\nu)^{-1})\mathrm{Gamma}_\nu(a,\bw^\top\boldsymbol{\Sigma}\bw), 
\label{eq:bayes_hierarchy_normal_gamma_dir} \\
(y\mid\bbeta)&\sim & N(\theta,\nu^{-1}), 
\nonumber
\end{eqnarray}
where $m$ is a scaling factor that adjusts the precision of $\theta$, and $a$ is the shape parameter of the Gamma prior for $\nu$.  The conditional distribution of  $y$ given $(\bmu,\bw)$ is then a Student-$t$ distribution \citep[p.~122]{bernardo:2000}: 
\begin{equation}
(y\mid \bw,\bmu) \sim St(\bw^\top \bmu,\frac{ma} {(m+1)\bw^\top\boldsymbol{\Sigma}\bw},2a), 
\label{eq:y_dist_unknown_variance}
\end{equation}
with the explicit density given in \eqref{eq:student_t_y} of Appendix~\ref{proof:normal_gamma}, and the log-posterior of $\bw$ given $y$ and $\bmu$ follows
\begin{equation*}
\begin{split}
\log(l(\bw\mid y,\bmu)) \propto & -\frac{1}{2} \log(\bw^\top\boldsymbol{\Sigma}\bw) - (a+\frac{1}{2})\log \bigg[ 1+ \frac{m}{2(m+1)\bw^\top\boldsymbol{\Sigma}\bw}(y-\bw^\top\bmu)^2 \bigg]\\
&- 
(\boldsymbol{\alpha-1})^{\top} \log ({\bw}^{-1}) + \text{const.}
\end{split}
\end{equation*}     

After applying the E step in \eqref{eq:EM_E step}, we obtain 
\begin{equation*}
Q(\bw \mid \bmu) = -\frac{1}{2}\log(\bw^\top\boldsymbol{\Sigma}\bw) - 
(\boldsymbol{\alpha-1})^{\top} \log ({\bw}^{-1}) + \text{const.}
\end{equation*}
The following proposition provides the corresponding M-step objective. 
\begin{proposition}[Log-Entropy model] \label{prop:bayes_log_entropy}
Under the Bayesian hierarchical model in \eqref{eq:bayes_hierarchy_normal_gamma_dir},  suppose $\alpha_i-1=(\lambda/2) s_i$, where $s_i>0$ and $\sum_{i=1}^k s_i=1$. Let $\boldsymbol{\Sigma}=\diag\{(\mu_1-\E[\mu])^2,\ldots,(\mu_k-\E[\mu])^2\}$. Then the M-step estimator of $\bw$ is given by
\begin{equation}
\bw^*=\arg \max_{\bw}  \Bigg[ - \log\left\{\sum_{i=1}^{k} w_i^2(\mu_i-\E[\mu])^2\right\} - \lambda \sum_{i=1}^k s_i \cdot \log \bigg(\frac{1}{w_i}\bigg)\Bigg], \quad s.t.,\quad   \sum_{i=1}^k w_i=1,\ w_i>0.
\label{eq:bayes_framework_normal_gamma_dir}
\end{equation}
\end{proposition}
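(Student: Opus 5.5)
The plan is to follow the same route as for Proposition~\ref{prop:bayes_shifted_log_entropy}. I will compute $Q(\bw\mid\bmu)$ explicitly, drop all terms that do not depend on $\bw$, substitute the stated choices of $\balpha$ and $\boldsymbol{\Sigma}$, and rescale by a positive constant so the objective takes the form \eqref{eq:bayes_framework_normal_gamma_dir}.

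\textbf{Step 1: the predictive density.} Integrating the Normal-Gamma law in \eqref{eq:bayes_hierarchy_normal_gamma_dir} against $N(\theta,\nu^{-1})$ gives the Student-$t$ density \eqref{eq:y_dist_unknown_variance}. Write $V=\bw^\top\boldsymbol{\Sigma}\bw$ and $c=m/\{2(m+1)\}$. The $\bw$-dependent part of $\log p(y\mid\bw,\bmu)$ is
\[
-\tfrac12\log V-(a+\tfrac12)\log\{1+c(y-\bw^\top\bmu)^2/V\}.
\]
The Dirichlet prior contributes
\[
\sum_i(\alpha_i-1)\log w_i=-(\balpha-\bone)^\top\log(\bw^{-1})
\]
plus a normalizing constant.

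\textbf{Step 2: the E step.} I take expectations over $y\sim p(\cdot\mid\bw,\bmu)$. The first term and the prior term do not involve $y$, so they pass through unchanged. For the middle term, the key point is a change of variables. Set $z=\sqrt{ma/\{(m+1)V\}}\,(y-\bw^\top\bmu)$. Then $z$ is a standard Student-$t$ variable with $2a$ degrees of freedom, and this distribution does not depend on $\bw$. Moreover,
\[
c(y-\bw^\top\bmu)^2/V=z^2/(2a).
\]
Hence
\[
\E\log\{1+c(y-\bw^\top\bmu)^2/V\}=\E\log\{1+z^2/(2a)\},
\]
which is a finite constant free of $\bw$. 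This is the step I expect to need the most care. I need to check that the scale of the $t$ distribution cancels exactly with the $V$ in the logarithm, and that the expectation is finite. Finiteness holds because $\log(1+z^2/(2a))$ grows only logarithmically while the $t$ tails are polynomial. Together these give the displayed expression for $Q(\bw\mid\bmu)$.

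\textbf{Step 3: the M step.} Substituting $\alpha_i-1=(\lambda/2)s_i$ and $V=\sum_i w_i^2(\mu_i-\E[\mu])^2$, which follows from the diagonal form of $\boldsymbol{\Sigma}$, gives
\[
Q=-\tfrac12\log\Big\{\sum_i w_i^2(\mu_i-\E[\mu])^2\Big\}-\tfrac{\lambda}{2}\sum_i s_i\log(1/w_i)+\text{const}.
\]
Multiplying by $2$ does not change the argmax. The Dirichlet support imposes $w_i>0$ and $\sum_i w_i=1$, so the constraints are exactly those in \eqref{eq:bayes_framework_normal_gamma_dir}, which yields the claimed estimator.

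I will also note that the objective is well defined, i.e.\ $V>0$, whenever at least one $\mu_i\neq\E[\mu]$. Since $s_i>0$, the entropy term tends to $-\infty$ as any $w_i\to0$, and $\log V$ is bounded above on the simplex. So the penalized objective tends to $-\infty$ at the boundary, the maximum is attained in the interior, and the argmax is nonempty.
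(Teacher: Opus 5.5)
Your derivation is correct and follows the paper's proof. The paper makes the same rescaling ($t=u/\sqrt{2m'}$, which equals your $z/\sqrt{2a}$) to cancel $\bw$ from the expected log-kernel term, and then goes on to evaluate the constant as $-(a+\tfrac12)\{\psi(a+\tfrac12)-\psi(a)\}$, whereas you only show that it is finite and free of $\bw$, which is all the statement requires.

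One correction to your optional existence remark, which the statement does not need. Since the objective is $-\log V-\lambda\Phi_{\mathrm{Ent}}(\bw)$, you need $\log V$ bounded \emph{below} on the simplex, not above. That holds when $\mu_i\neq\E[\mu]$ for every $i$, which is true almost surely for continuous $\mu$. Under your weaker hypothesis it fails: if some $\mu_j=\E[\mu]$ and $\lambda(1-s_j)<2$, the objective is unbounded above near the $j$th vertex, so no maximizer exists.
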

The optimal weights obtained here are equivalent to those obtained by the general framework with the transformation function $f(x)=\log(x)$ and the penalty $\Phi(\bw) = \Phi_{\mathrm{Ent}}(\bw) = \sum_{i=1}^{k}s_i \cdot \log\left(1/w_i\right)$.  The objective of the Shifted Log-Entropy model in \eqref{eq:bayes_framework_normal_dir} has the same structure as that of the model here, except that it applies the log transformation to $\sigma^2 + x$ rather than to $x$ alone. The shift reflects whether the precision of $y$ is known or assigned a prior distribution. Although the gamma shape parameter $a$ affects the predictive distribution of $y$, it does not affect the optimization over $\bw$, because the terms involving $a$ in the E-step are constant in $\bw$ and therefore drop out of the M-step.

\subsubsection*{Identity-Entropy model.
} \label{sec:bayes:identity_L2} The most straightforward and intuitive variation of the REF model is likely characterized by the identity transformation function $f(x)=x$.  Together with the entropy penalty $\Phi_{\mathrm{Ent}}(\bw)=\sum_{i=1}^k s_i\log(1/w_i)$, induced by a Dirichlet prior for the weights, this specification defines the ``Identity-Entropy" model and gives its optimal weights a Bayesian interpretation. In particular, letting $\bbeta=(\theta,\nu)$, we consider the following Bayesian framework:
\begin{eqnarray}
\bw &\sim& \mathrm{Dir}(\balpha), \nonumber \\
(\bbeta\mid\bw,\bmu) &\sim & N_{\theta\mid \nu}(\bw^\top \bmu,(m\nu)^{-1})\text{Gamma}_\nu\left(a,\exp(\bw^\top\boldsymbol{\Sigma}\bw)\right), \label{eq:bayes_hierarchy_identity_dir} \\
(y\mid\bbeta)&\sim & N(\theta,\nu^{-1}). \nonumber
\end{eqnarray}

The exponential Gamma rate for $\nu$ used here  implies the uncertainty in $\theta$ changes multiplicatively with the weights and the forecasts' covariance, rather than linearly as in the Log-Entropy model. Intuitively, this exponential formulation allows the model to capture situations where uncertainty grows (or shrinks) rapidly as $\bw^\top\boldsymbol{\Sigma}\bw$ increases.  Proposition~\ref{prop:bayes_identity_entropy} summarizes the resulting optimization. 
\begin{proposition}[Identity-Entropy model] \label{prop:bayes_identity_entropy}
Under the Bayesian hierarchical model in~\eqref{eq:bayes_hierarchy_identity_dir},  suppose that $\alpha_i-1=(\lambda/2) s_i$, where $s_i>0$ and $\sum_{i=1}^k s_i=1$. Let $\boldsymbol{\Sigma}=\diag\{(\mu_1-\E[\mu])^2,\ldots,(\mu_k-\E[\mu])^2\}$. Then the M-step estimator of $\bw$ is given by
\begin{equation}\label{eq:bayes_framework_identity_dirichlet_prior}
\bw^*=\arg\max_{\bw} \Bigg[-\sum_{i=1}^{k} w_i^2(\mu_i-\E[\mu])^2  
- \lambda \sum_{i=1}^k s_i \log \bigg(\frac{1}{w_i}\bigg)\Bigg],
\quad s.t.,\quad \sum_{i=1}^k w_i=1,\ w_i>0.
\end{equation}
\end{proposition}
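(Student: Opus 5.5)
We follow the same route as for Proposition~\ref{prop:bayes_log_entropy}; the only change is that the Gamma rate is $b(\bw)=\exp(\bw^\top\boldsymbol{\Sigma}\bw)$ instead of $\bw^\top\boldsymbol{\Sigma}\bw$. We will compute the predictive density $p(y\mid\bw,\bmu)$, then form the E-step objective $Q(\bw\mid\bmu)$, and finally rewrite the M-step using the Dirichlet parameterization $\alpha_i-1=(\lambda/2)s_i$.

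\emph{Step 1: the predictive density.} The Normal-Gamma conjugacy argument \citep[p.~122]{bernardo:2000} only requires the rate to be a positive scalar, so it applies verbatim with rate $b=\exp(\bw^\top\boldsymbol{\Sigma}\bw)$. Integrating out $(\theta,\nu)$ gives
\[
(y\mid\bw,\bmu)\sim St\!\left(\bw^\top\bmu,\ \frac{ma}{(m+1)b},\ 2a\right).
\]
Its log-density equals
\[
\tfrac12\log\frac{ma}{(m+1)b}-\left(a+\tfrac12\right)\log\!\left[1+\frac{m(y-\bw^\top\bmu)^2}{2(m+1)b}\right]+c(a),
\]
where $c(a)$ is a normalizing constant. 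Since $\log b=\bw^\top\boldsymbol{\Sigma}\bw$, the first term is $-\tfrac12\bw^\top\boldsymbol{\Sigma}\bw$ up to a constant.

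\emph{Step 2: the E-step.} We take the expectation of $\log p(y\mid\bw,\bmu)+\log\pi(\bw\mid\balpha)$ with $y\sim p(\cdot\mid\bw,\bmu)$. The key observation is that the standardized variable
\[
Z=\frac{m(y-\bw^\top\bmu)^2}{2(m+1)b}
\]
has a distribution free of $\bw$. Indeed, $(y-\bw^\top\bmu)\big/\sqrt{(m+1)b/(ma)}$ is a standard Student-$t$ with $2a$ degrees of freedom, so $Z=t^2/(2a)$. Hence $\E\log(1+Z)$ is a constant in $\bw$. The result is
\[
Q(\bw\mid\bmu)=-\tfrac12\,\bw^\top\boldsymbol{\Sigma}\bw-(\balpha-\bone)^\top\log(\bw^{-1})+\text{const},
\]
where the last term is the log Dirichlet density up to its normalizing constant.

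\emph{Step 3: the M-step.} We substitute $\alpha_i-1=(\lambda/2)s_i$ and $\boldsymbol{\Sigma}=\diag\{(\mu_i-\E[\mu])^2\}$, and multiply $Q$ by $2>0$, which does not change the argmax. This gives exactly the objective in \eqref{eq:bayes_framework_identity_dirichlet_prior}. The constraints $w_i>0$ and $\sum_i w_i=1$ come from the support of the Dirichlet distribution.

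\emph{Main obstacle.} The one nontrivial point is Step~2: we must justify that the expected log-kernel term does not depend on $\bw$. This rests on the location–scale invariance of the Student-$t$ family, together with finiteness of $\E\log(1+t^2/(2a))$. That expectation is finite because $\log(1+x)\le x^{1/2}$ for large $x$, and the $t_{2a}$ distribution has a finite absolute moment of order~$1$ whenever $a>1/2$. For smaller $a$ we would instead use $\log(1+x)\le C_\epsilon x^\epsilon$ for a small $\epsilon>0$.
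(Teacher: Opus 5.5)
Your proposal is correct and follows the same route as the paper's proof: obtain the Student-$t$ predictive whose scale is driven by $\exp(\bw^\top\boldsymbol{\Sigma}\bw)$, note that the expected log-kernel term is constant in $\bw$, and rescale $Q$ using $\alpha_i-1=(\lambda/2)s_i$. Your location--scale invariance argument, together with the moment bound for finiteness, is a cleaner phrasing of the paper's change of variables by $\sqrt{2m''}$; in the Log-Entropy case the paper instead evaluates that constant explicitly through digamma functions.
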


\subsection{Additional Bayesian Specifications}
\label{sec:bayes:binary_extension}

The three models developed above assume a Dirichlet prior for $\bw$, which leads to the entropy penalty $\Phi_{\mathrm{Ent}}$. To obtain additional specifications, we replace the Dirichlet prior with a normal prior, $\bw\sim N(\balpha,\boldsymbol{\mathcal{T}})$, while retaining the remaining assumptions of each model.  We set $\balpha=\bs$ and $\boldsymbol{\mathcal{T}} =\diag(\lambda^{-1},\lambda^{-1},\dots,\lambda^{-1})$, so that the prior is centered at the weights derived from historical forecast performance, with a common marginal variance of $1/\lambda$. Thus, a larger $\lambda$ concentrates the prior more tightly around $\bs$ and strengthens the resulting $L^2$ penalty $\Phi_{\mathrm{L^2}}$.  The resulting models are referred to as the Shifted Log-$L^2$, Log-$L^2$, and Identity-$L^2$ models.   Detailed discussions are provided in Appendix~\ref{appendix:bayes_l2_details}, where we also compare our specifications with ridge-based stacking \citep{wolpert:1992}, which also has a Bayesian interpretation but centers its prior at zero, leaves the weights unconstrained, and incorporates historical outcomes through the likelihood rather than through the prior.

Table~\ref{tab:framework_summary} summarizes the six Bayesian models and their corresponding forms under the general framework.  They differ in the transformation function $f$ implied by the assumed distribution of $y$ and in the prior placed on $\bw$.  A decision maker can average the resulting forecasts from the six specifications, giving the specifications equal roles, or,  if a validation set is available,  select the single best-performing specification empirically. If instead the decision maker holds strong beliefs about the distribution of $y$ and the prior on $\bw$,  they can choose a model directly from Table~\ref{tab:framework_summary}. The weight patterns  illustrated in Section~\ref{sec:IllustrativeEx_k=2} provide further guidance by showing how  $f$ and $\Phi$ affect the sensitivity of the optimal weights to current forecasts and prior weights.

\begin{table}[t]
\caption{Summary of the Bayesian models and their corresponding forms under the general framework. 
}
\label{tab:framework_summary}
\vspace{0.25cm}
\centering
\renewcommand{\arraystretch}{1.5}
\resizebox{0.91\textwidth}{!}{
\begin{threeparttable}
\begin{tabular}{lcccccc@{\hskip 0.3cm}cc}
\hline
Model
& \multicolumn{4}{c}{Bayesian Framework} 
&  
& \multicolumn{2}{c}{General Framework}
& Eq. \\[2pt]
\cline{2-5} \cline{7-8}
& $\bbeta$ & $\bw$ & $\bbeta \mid \bw, \bmu$ & $y \mid \bbeta$ &  & $f(x)$ & $\Phi(\bw)$ & \\ 
\hline

Shifted Log-Entropy &
$\theta$ &
$\operatorname{Dir}(\balpha)$ &
$N(\bw^\top \bmu, \bw^\top \boldsymbol{\Sigma} \bw)$ &
$N(\theta, \sigma^2)$ &
& $\log(x + \sigma^2)$ &
$\Phi_{\mathrm{Ent}}$ &
\eqref{eq:bayes_framework_normal_dir} \\

Shifted Log-$L^2$ &
$\theta$ &
$N(\balpha , \boldsymbol{\mathcal{T}})$ &
$N(\bw^\top \bmu, \bw^\top \boldsymbol{\Sigma} \bw)$ &
$N(\theta, \sigma^2)$ &
& $\log(x + \sigma^2)$ &
$\Phi_\mathrm{L^2}$ &
\eqref{eq:bayes_framework_normal_normal} \\

Log-Entropy &
$(\theta, \nu)$ &
$\operatorname{Dir}(\balpha)$ &
$N_{\theta \mid \nu}\!\bigl(\bw^\top \bmu, (m\nu)^{-1}\bigr) \times Gamma_\nu(a, \bw^\top \boldsymbol{\Sigma} \bw)$ &
$N(\theta, \nu^{-1})$ &
& $\log(x)$ &
$\Phi_{\mathrm{Ent}}$ &
\eqref{eq:bayes_framework_normal_gamma_dir} \\

Log-$L^2$ &
$(\theta, \nu)$ &
$N(\balpha, \boldsymbol{\mathcal{T}})$ &
$N_{\theta \mid \nu}\!\bigl(\bw^\top \bmu, (m\nu)^{-1}\bigr) \times Gamma_\nu(a, \bw^\top \boldsymbol{\Sigma} \bw)$ &
$N(\theta, \nu^{-1})$ &
& $\log(x)$ &
$\Phi_\mathrm{L^2}$ &
\eqref{eq:bayes_framework_normal_gamma_normalprior} \\

Identity-Entropy &
$(\theta, \nu)$ &
$\operatorname{Dir}(\balpha)$ &
$N_{\theta \mid \nu}\!\bigl(\bw^\top \bmu, (m\nu)^{-1}\bigr) \times Gamma_\nu(a, \exp(\bw^\top \boldsymbol{\Sigma} \bw))$ &
$N(\theta, \nu^{-1})$ &
& $x$ &
$\Phi_{\mathrm{Ent}}$ &
\eqref{eq:bayes_framework_identity_dirichlet_prior} \\

Identity-$L^2$ &
$(\theta, \nu)$ &
$N(\balpha, \boldsymbol{\mathcal{T}})$ &
$N_{\theta \mid \nu}\!\bigl(\bw^\top \bmu, (m\nu)^{-1}\bigr) \times Gamma_\nu(a, \exp(\bw^\top \boldsymbol{\Sigma} \bw))$ &
$N(\theta, \nu^{-1})$ &
& $x$ &
$\Phi_\mathrm{L^2}$ &
\eqref{eq:bayes_framework_normal_gamma_normal_prior} \\

\hline
\end{tabular}
\end{threeparttable}
}
\end{table} 
The Bayesian interpretations above focus on continuous outcomes.  For settings in which experts report probability forecasts for a binary event, REF could be extended in two ways.  One possible extension is a Beta--Bernoulli Bayesian hierarchical model that combines experts’ current probability forecasts by minimizing expected binary cross-entropy while incorporating their historical forecasting performance through regularization.  Alternatively, the experts’ probability forecasts may be transformed to a continuous scale and aggregated using REF, after which the resulting forecast is transformed back to the probability scale \citep{satopaa:2014logit}. Because the current linear opinion pool structure keeps the combined forecast within the range of the individual forecasts, either approach could be supplemented with a recalibration or extremizing step when more extreme aggregate probabilities are desirable 
\citep[e.g.,][]{Lichtendahl:2022}. 
We leave the detailed development and evaluation of these extensions for future work.

\section{Numerical Implementations} \label{sec:implement}

This section details the practical implementation of the REF method, including the softmax transformation for optimization, selection of the tuning parameter $\lambda$, and estimation of nuisance parameters. 

\subsection{Softmax Transformation of Weights}
\label{sec:softmax}

When solving for the optimal weights, we first apply the softmax transformation \citep{bridle:1990},  which has also been used in related ensemble forecasting settings \citep[e.g.,][]{li:2023,oelrich:2024}. Specifically, before optimizing the objective in \eqref{eq:finalobj}, we re-parameterize $w_i$ as  $w_i=e^{z_i}\big/\sum_{j=1}^k e^{z_j}$, where $z_i\in\mathbb{R}$ for $i=1,\ldots,k$,  and then optimize with respect to  $\{z_i: i=1,\ldots, k\}$.  This transformation converts the constrained problem into an equivalent unconstrained one while ensuring that the resulting weights satisfy $\sum_{i=1}^k w^*_i=1$, and $0 < w^*_i < 1$. 

\subsection{Selection of $\lambda$} \label{sec:select_lambda}

The choice of $\lambda$ is crucial in our framework, as it balances the influence of experts' current forecasts and their historical performances in determining the weights. Since both empirical studies we analyze below, the M5 competition and the SPF, consist of time series, we tune $\lambda$ using a rolling-window validation procedure \citep{hyndman:2018}, as illustrated in Figure~\ref{fig:rolling window cross validation}. 

For each quantity $y$ to be predicted, we split its time series into two parts, a testing set (not shown in Figure~\ref{fig:rolling window cross validation}) and $T$ pre-testing periods where expert forecasts are observed. Let $t = 1,\ldots, T$ index these pre-testing (historical) periods, and let $l$ be the fixed length of the prior weight $s_i$ estimation window. We then select $\lambda$ from a grid of candidate values using the following three-step rolling-window validation procedure.
\begin{itemize}
\item Step 1. For each candidate $\lambda$, start with the initial window $(t=1,\ldots, l)$ and estimate the prior weights $\bs$ based on experts' past performances over this window. Using these estimates and the experts’ forecasts for the next period $t = l + 1$,  solve \eqref{eq:finalobj} to obtain the optimal weights and form the ensemble forecast, and record its accuracy at $t = l+1$. 
\item Step 2. Shift the window of length $l$ forward one period at a time 
until $t=T-l,\ldots, T-1$. Repeating Step 1 above $T-l$ times yields a total of $T-l$ accuracy measurements for each candidate $\lambda$. We summarize performance by the average accuracy across these periods.
\item Step 3. Select
$\lambda^*$ as the candidate value with the highest average accuracy.
\end{itemize}
When the pool of experts is fixed over time, such as in the M5 competition, the same $\lambda^*$ tuned on time periods $t=1,\ldots,T$ is used for all subsequent test periods $t>T$. 

\begin{figure}[b]
    \centering
    \includegraphics[width=0.7\textwidth]
    {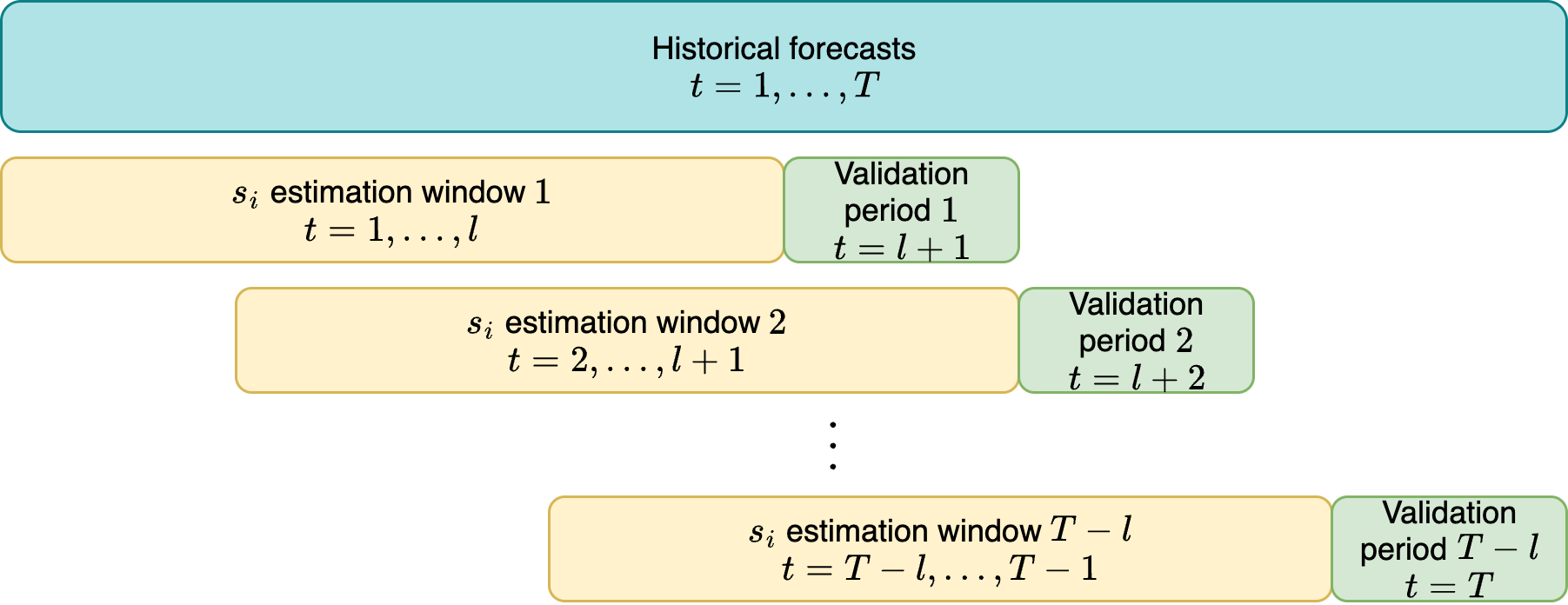}
    \caption{Illustration of the rolling-window validation procedure.
    }
    \label{fig:rolling window cross validation}
    \vspace{-0.2cm}
\end{figure} 
When the expert pool varies over time, such as in the SPF or crowdsourcing platforms with rotating participants, we can re-tune $\lambda$ separately for each testing period $t>T$ using the same fixed-length rolling window scheme. For each $t$, we restrict the expert pool to those who provide a forecast at $t$ and also have participated at least once during the preceding $l$-length $s_i$ estimation window. This restriction ensures that prior weights $\bs$ can be estimated and that the ensemble uses only experts who provide forecasts in test period $t$.  During validation, we keep the expert pool fixed to match the test-period pool so that the selected $\lambda^*$ is tailored to the same set of participating experts.  To retain the same pool throughout validation, missing validation forecasts are imputed using the cross-sectional mean of the available forecasts for that period.  If a retained expert has no record within the $l$-length estimation window, the prior weight $s_i$ is imputed as described later in Section \ref{sec:est_prior_weights}. 

\subsection{Nuisance Parameter Estimation} \label{sec:para_est}

The models listed in Table \ref{tab:framework_summary} involve several nuisance parameters, including the expected value of experts' current forecasts $\mathbb{E}[\mu]$, experts' prior weights $\bs$, and $\sigma^2$. Let $\mu_{i,t}$ denote expert $i$'s historical forecast at period $t$ for $y_t$, $t = 1, \ldots, T$. To maintain consistency with earlier sections, we continue to use $\mu_i$ and $y$ to denote expert $i$'s forecast and the unknown realization for the future period $T+1$.

\paragraph{Estimation of $\E[\mu]$.} \label{sec:est_E_mu} 
$\E[\mu]$ is estimated by the simple mean $\bar{\mu}=(1/k)\sum_{i=1}^k \mu_i$.

\paragraph{Estimation of $\sigma^2$.} \label{sec:est_sigma_sq}
The parameter $\sigma^2$ is required for the Shifted Log-Entropy and Shifted Log-$L^2$ models in Table~\ref{tab:framework_summary}. We estimate $\sigma^2$ using the errors of the simple mean forecast over the $l$ periods immediately preceding $T+1$.  In particular, we first obtain $\hat{\theta}_t = (1/ k_t) \cdot\sum_{i=1}^{k_t}\mu_{i,t}$, where $k_t$ is the number of experts providing forecasts at each $t$. Under the shifted hierarchical models, $y_t\mid \theta_t \sim N(\theta_t,\sigma^2)$, hence $\sigma^2=\Var(y_t \mid \theta_t)=\Var(y_t-\theta_t \mid \theta_t)$. Using $\hat{\theta}_t$ as a plug-in estimate of $\theta_t$, the sample variance of $y_t-\hat{\theta}_t$ over a recent window provides a natural plug-in estimate of $\sigma^2$ with $ \hat{\sigma}^2 = 1/(l-1)\sum_{t = T-l+1}^{T} (y_t-\hat{\theta}_t)^2$. 

\paragraph{Estimation of Prior Weights $\bs$.} \label{sec:est_prior_weights} 
We adopt the common correlation weights (CCR) proposed by \citet{Soule:2024} as the prior weights $\bs = (s_1,\ldots,s_k)$. Specifically, by \citet{winkler:1981}, $\bs = (\bone^\top \bC^{-1})/(\bone^\top\bC^{-1}\bone)$ where $\bC$ denotes the covariance matrix of the forecast errors $y_t - \mu_{i,t}$. Following \citet{Soule:2024}, by assuming a constant pairwise correlation $\rho_c$ among experts' forecast errors, the expression for $\bs$ becomes
\begin{equation}
s_i=\frac{\left(1+(k-1) \rho_c\right) v_i^{-2}-\rho_c v_i^{-1} \sum_{j=1}^k v_j^{-1}}{\left(1+(k-1) \rho_c\right) \sum_{j=1}^k v_j^{-2}-\rho_c\left(\sum_{j=1}^k v_j^{-1}\right)^2}, 
\label{eq:covariance_skill}
\end{equation}
where $v_i^2$ denotes the variance of the $i$th expert's forecast errors. As a special case, when experts' forecast errors are uncorrelated ($\rho_c=0$), $s_i$ reduces to the Variance Weights as in \citet{Bates:1969}. As a robustness check, we also use Variance Weights as an alternative specification of the prior weights.

We estimate $v_i^2$ using the available forecast errors for expert $i$ within the rolling window as $\hat{v}_i^2 = \sum_{t=T-l+1}^{T}\mathbb{I}\{t\in\mathcal{L}_i\} (\mu_{i,t}-y_t)^2 \big/ \sum_{t=T-l+1}^{T}\mathbb{I}\{t\in\mathcal{L}_i\}$,  where $\mathbb{I}\{\cdot\}$ denotes the indicator function, and $\mathcal{L}_i = \{ t \mid \mu_{i,t} \,\,\,\,\text{is available}\}$ denotes the set of time periods within the window during which expert $i$ provides a forecast. Suggested by \citet{Soule:2024}, we adopt the maximum-posterior method to estimate the common correlation $\rho_c$ and denote its estimate by $\hat{\rho}_c$. The prior weights $\hat{s}_i$ are then obtained by combining $\hat{\rho}_c$ and $\hat{v}_i$ in \eqref{eq:covariance_skill}.

When the pool of experts is fixed over time, both $\hat\rho_c$ and $\widehat v_i^2$ can be computed within each rolling window using the full set of experts, and no imputation is needed. When the pool varies over time, the rolling window may contain missing forecasts. In that case, we compute $\hat{\rho}_c$ using the complete-case subset of experts in the window, and estimate the variance $\hat{v}_i^2$ using all available observations for each expert. If an expert has no historical forecasts in a given window, similar to the tuning procedure in Section~\ref{sec:select_lambda}, we impute its precision $(v^2)^{-1}$ by the cross-sectional average precision among experts with available history in that window, and then compute $\hat{s}_i$ via \eqref{eq:covariance_skill}. 

\section{Empirical Evaluations}\label{sec:empirical}
We evaluate the empirical performance of our proposed framework and compare it with benchmark models using two real datasets.  A simulation study designed to evaluate the finite-sample performance of the proposed method under a range of data-generating settings is shown in Appendix~\ref{appendix:dgp2d_simulation}.

The first dataset is from the M5 Competition \citep{makridakis:2022,makridakis:2022m5background}, which consists of daily product sales forecasts from Walmart across multiple hierarchical aggregation levels. We select top-performing teams from the competition leaderboard to form the ensemble, and this pool of experts remains fixed throughout the forecasting horizon and is the same across all time series. The second dataset is from the Federal Reserve Bank of Philadelphia’s Survey of Professional Forecasters (SPF) \citep{spf_homepage}, which includes forecasts on key macroeconomic indicators. In contrast to the M5 study, the pool of SPF experts varies over time. 

\subsection{Methods in Comparison and Evaluation Metrics}

For our proposed framework, we fit the six models  as summarized in Table \ref{tab:framework_summary} and take their average as the final  prediction.  In Appendices~\ref{Appendix:M5_tabs} and~\ref{Appendix:spf_tabs},  we also apply an alternative approach by selecting a single best-performing specification among the six  based on validation-period accuracy, of which the results are similar and thus omitted here.

We compare our approach against the following benchmark models in both studies:
\begin{itemize}
    \item \textit{Simple Mean}: A simple average of all forecasts. 
    \item \textit{Trimmed Mean} \citep{jose:2008,Soule:2024}: A simple average of the remaining forecasts after trimming the most extreme 10\% from each end.
    \item \textit{Winsorized Mean}     \citep{jose:2008,Soule:2024}: A simple average of the forecasts after replacing the most extreme 15\% at each end with the nearest remaining forecast values.
    \item \textit{Variance Weights} \citep{Bates:1969,timmermann:2006}: A weighted average of all forecasts where each weight is proportional to the inverse of the corresponding expert's historical forecast error variance. 
    \item \textit{Contribution-Weighted Method} \citep[CWM,][]{Budescu:2015}: 
    A simple average of the forecasts from the subset of experts whose inclusion has historically improved the accuracy of the simple mean ensemble. 
    \item  \textit{Common Correlation Weights} \citep[CCR,][]{Soule:2024}: A weighted average of all forecasts, where the weights are computed as described in Section~\ref{sec:para_est}. 
    \item \texttt{StackingRegressor} \citep{scikit-learn}: A weighted average of all forecasts, where the weights are regression coefficients obtained by regressing the realized outcomes $y$ on experts’ historical forecasts. We use \texttt{RidgeCV}, the default meta-learner in Python’s \texttt{StackingRegressor}, which estimates these coefficients via ridge regression. 
\end{itemize} 

The first three benchmark methods use only the experts’ current forecasts in the test period and therefore require no training or validation. The remaining benchmark methods use only historical data to learn the weights. For these history-based methods, we adapt the code provided by \citet{Soule:2024} to our setting and data.

For the study of the M5 Competition, we include two additional benchmark methods. The first method is the competition's benchmark, \textit{Exponential Smoothing with bottom-up reconciliation} (\textit{ES\_bu}). We use the \textit{ES\_bu} forecasts released by the competition exactly as provided. The other benchmark method is \textit{Best Expert}, which selects, for each time series, the expert with the best validation set performance and uses that expert’s forecast for the test period. This benchmark is not applied to our SPF study because the pool of experts changes over time and many participate only intermittently, making such selection infeasible. 

We also considered several other benchmark models, including the covariance-based ensemble \citep{winkler:1981}, ordinary least squares with shrinkage \citep{Diebold:1990}, and the bias-variance-optimal shrinkage forecast combination \citep{blanc:2020}. The first follows the same formulation as CCR but without assuming a common correlation. Its performance, however, is significantly worse than CCR. The latter two approaches also underperform relative to CCR. We omit these benchmarks from the final set of reported comparisons.

To measure the accuracy of each method, we use the root mean squared scaled error (RMSSE), a widely adopted metric that was also used to rank submissions in the M5 competition. Let $t=1,\ldots,T^*$ denote the periods for which experts provide forecasts. We use the first $T$ forecast periods, $t=1,\ldots,T$, as the training window for our framework and for the other benchmark methods that learn weights from experts' historical performances. To represent the additional historical data available to the experts, we re-index their in-sample training periods as $t=-T_0,-T_0+1,\ldots,0$, where $t=0$ is the period immediately preceding the start of forecasting ($t=1$). The RMSSE for the ensemble forecast of a single time series $y$ over $t=t_1$ to $t_2$ is then calculated as
\begin{equation}
\text{RMSSE} = \sqrt{\frac{\sum_{t=t_1}^{t_2} (y_t - \hat{y}_t) ^2 / (t_2-t_1+1)}{\sum_{t^\prime=-T_0+1}^{0} (y_{t^\prime} - y_{t^\prime-1})^2/T_0}},
\label{eq:RMSSE}
\end{equation}
where $y_{t}$ and $\hat{y}_{t}$ denote the actual value and forecast, respectively, at time $t$. 

If RMSSE is computed over the rolling validation periods in Figure~\ref{fig:rolling window cross validation}, then $t_1=l+1$ and $t_2=T$. If it is computed over the testing period, then $t_1=T+1$ and $t_2=T^*$. The numerator of~\eqref{eq:RMSSE} is the root mean squared error (RMSE) of the ensemble over the evaluation window (validation or testing). The denominator scales it using the in-sample RMSE of a one-step-ahead na\"{i}ve forecast computed over the experts' in-sample training periods $t=-T_0+1,\ldots,0$ \citep{makridakis:2022}. 

RMSSE is particularly suitable in our context because it is scale-independent, allowing for comparison across products with significantly different sales volumes in Walmart's data, and for removing trend effects in the time series of economic indicators in the SPF study. In the rest of this section, we report RMSSE values for both studies; we also provide the RMSE values in Appendix~\ref{Appendix:M5_tabs} and~\ref{Appendix:spf_tabs}.

\subsection{Study 1: M5 Competition}
\label{sec:study_m5}

The M competitions are among the most prominent competitions in time-series forecasting. The M5 Competition, its fifth series, was hosted in 2020 and attracted 5,507 teams from 101 countries over a four-month competition period  \citep{makridakis:2022,makridakis:2022m5background}. Participants were tasked with producing forecasts for daily unit sales of 3,049 Walmart products, aggregated into 12 hierarchical levels. The 12 levels capture varying degrees of aggregation, such as product category, department, and store, offering a comprehensive view of sales trends from the macro level to the micro level. Following \cite{chen2025combining}, we focus on the first nine levels, which encompass a total of 154 series, as shown in Table~\ref{Tab:M5_level}. At the most granular levels (Levels 10--12), daily sales are highly sparse \citep{makridakis:2022m5background}, which increases the risk of unreliable estimation \citep{chen:2023}. 

Our study is based on the publicly released submissions from 50 teams participating in  the accuracy track of the M5 competition. These submissions contain 28-day point forecasts. For all ensemble methods in comparison, we  use the same expert pool comprising the top $k$ teams ranked  on the official M5 leaderboard,  with $k=5, 10, 15, 20$ and $50$. 

We treat Days 22--28 as the testing set, so Days 1--21 serve as historical forecast records ($T=21$ in Figure~\ref{fig:rolling window cross validation}).  For each of the six specifications from REF  on each time series, we select $\lambda$ using  the rolling-window validation procedure with window length  $l=14$ for estimating the prior weights $\bs$, as described in Section~\ref{sec:implement}. This yields Days 15--21 as the validation periods.  We select $\lambda^*$ by minimizing the average validation forecast error and use it to optimize the ensemble weights for each testing day. Benchmark methods that rely on historical performance use the same Days 1--21 data to determine expert weights.  For example, for the Variance Weights method, each expert’s weight is determined based on their forecast error variance estimated over these 21 days.  For benchmark methods relying only on current forecasts, ensemble forecasts are directly produced on Days 22--28. 

We primarily evaluate forecast accuracy using RMSSE. For both validation and testing, the denominator is the in-sample RMSE of one-step-ahead na\"{i}ve forecasts over the 1,941-day pre-evaluation period, following the competition protocol. The numerator is the RMSE over Days 15--21 for validation, which is used to select $\lambda$, and over Days 22--28 for testing. The corresponding results using RMSE are reported in Appendix~\ref{Appendix:M5_tabs} and yield similar findings.

For the final evaluation, Table~\ref{tab:M5_result_overall} reports the overall forecast accuracy of each method for every expert pool size $k$. Accuracy is measured by RMSSE averaged across the nine hierarchical levels, and \% Impr. denotes the percentage improvement in RMSSE relative to the Simple Mean. Because RMSSE is a scaled error measure, absolute differences in RMSSE may appear numerically small; percentage improvements therefore help convey the relative magnitude of the gains.

The last two rows in the table show that $k=15$ is the best-performing expert pool size for REF, CCR and CWM, while the remaining conventional ensemble methods achieve RMSSE values close to their respective minima across the considered pool sizes. We then use the $k=15$ expert pool for the detailed comparison in Table~\ref{tab:M5_result_benchmarks_k=15}, which reports each method's RMSSE averaged across all series within each hierarchical level (avgRMSSE). Appendix~\ref{Appendix:M5_tabs} provides the corresponding standard deviations across series, as well as results for the other expert pool sizes $k$; these results show similar patterns. 

\begin{table}[b]
\caption{Overall performance on the M5 dataset across expert pool sizes.}
\label{tab:M5_result_overall}
\vspace{0.1cm}
\centering
\renewcommand{\arraystretch}{1.1}
\setlength{\tabcolsep}{2pt}
\resizebox{\textwidth}{!}{
\begin{threeparttable}
\begin{tabular}{cccccccccccc}
\toprule
 & 
 & 
 { REF Method} 
 & 
 ML Method 
 & \multicolumn{3}{c}{\begin{tabular}[c]{@{}c@{}}Methods Using Only \\ Current Forecasts\end{tabular}} & \multicolumn{3}{c}{\begin{tabular}[c]{@{}c@{}}Methods Using Only \\ Historical Information\end{tabular}} & 
  \multicolumn{2}{c}{\begin{tabular}[c]{@{}c@{}}Competition \\ Benchmark\end{tabular}} 
 \\ \cmidrule(lr){3-3} \cmidrule(lr){4-4} \cmidrule(lr){5-7} \cmidrule(lr){8-10} \cmidrule(lr){11-12}
\multicolumn{1}{c}{\multirow{-2}{*}{\textbf{\begin{tabular}[c]{@{}c@{}}No. of\\  experts\end{tabular}}}} & \multirow{-2}{*}{\textbf{Metric}} & { \begin{tabular}[c]{@{}c@{}}REF\\  \end{tabular}} & \begin{tabular}[c]{@{}c@{}}\texttt{Stacking}\\ \texttt{Regressor}\end{tabular} & \begin{tabular}[c]{@{}c@{}}Simple \\ Mean \end{tabular} & \begin{tabular}[c]{@{}c@{}}Trimmed \\ Mean\end{tabular} & \begin{tabular}[c]{@{}c@{}}Winsorized \\ Mean\end{tabular} & \begin{tabular}[c]{@{}c@{}}Variance \\ Weights\end{tabular}  & CWM &
CCR
& ES\_bu & \begin{tabular}[c]{@{}c@{}}Best \\ Expert\end{tabular} \\ \hline
\multirow{4}{*}{5} & RMSSE & \textbf{.340} & .430 & .348 & .348 & .344 & .347 & .359 & .351 & .529 & .381 \\
 & \% Impr.\tnote{a} & \textbf{2.26\%} & -23.69\% & 0.00\% & 0.00\% & 1.09\% & 0.17\% & -3.25\% & -0.83\% & -52.10\% & -9.69\% \\
 & p-value\tnote{b} & - & $< 0.001^{***}$ & $0.001^{**}$ & $0.002^{**}$ & $0.093^{.}$ & $0.001^{**}$ & $< 0.001^{***}$ & $< 0.001^{***}$ & $< 0.001^{***}$ & $< 0.001^{***}$ \\
 & Rank\tnote{c} & \textbf{1} & 9 & 4 & 4 & 2 & 3 & 7 & 6 & 10 & 8 \\
\cmidrule(lr){2-12}
\multirow{4}{*}{10} & RMSSE & \textbf{.337} & .560 & .353 & .356 & .360 & .352 & .352 & .353 & .529 & .396 \\
 & \% Impr. & \textbf{4.74\%} & -58.39\% & 0.00\% & -0.78\% & -1.92\% & 0.38\% & 0.41\% & 0.11\% & -49.71\% & -11.98\% \\
 & p-value & - & $< 0.001^{***}$ & $< 0.001^{***}$ & $< 0.001^{***}$ & $< 0.001^{***}$ & $< 0.001^{***}$ & $< 0.001^{***}$ & $< 0.001^{***}$ & $< 0.001^{***}$ & $< 0.001^{***}$ \\
 & Rank & \textbf{1} & 10 & 5 & 6 & 7 & 3 & 2 & 4 & 9 & 8 \\
\cmidrule(lr){2-12}
\multirow{4}{*}{15} & RMSSE & \textbf{.335} & .935 & .351 & .352 & .353 & .348 & .348 & .343 & .529 & .397 \\
 & \% Impr. & \textbf{4.53\%} & -166.31\% & 0.00\% & -0.23\% & -0.63\% & 1.02\% & 0.90\% & 2.27\% & -50.60\% & -13.14\% \\
 & p-value & - & $< 0.001^{***}$ & $< 0.001^{***}$ & $< 0.001^{***}$ & $< 0.001^{***}$ & $< 0.001^{***}$ & $< 0.001^{***}$ & $< 0.001^{***}$ & $< 0.001^{***}$ & $< 0.001^{***}$ \\
 & Rank & \textbf{1} & 10 & 5 & 6 & 7 & 3 & 4 & 2 & 9 & 8 \\
\cmidrule(lr){2-12}
\multirow{4}{*}{20} & RMSSE & \textbf{.339} & 2.314 & .354 & .357 & .358 & .351 & .349 & .345 & .529 & .390 \\
 & \% Impr. & \textbf{4.24\%} & -553.63\% & 0.00\% & -0.91\% & -1.16\% & 0.98\% & 1.42\% & 2.56\% & -49.42\% & -10.14\% \\
 & p-value & - & $< 0.001^{***}$ & $< 0.001^{***}$ & $< 0.001^{***}$ & $< 0.001^{***}$ & $< 0.001^{***}$ & $< 0.001^{***}$ & $0.016^{*}$ & $< 0.00
 1^{***}$ & $< 0.001^{***}$ \\
 & Rank & \textbf{1} & 10 & 5 & 6 & 7 & 4 & 3 & 2 & 9 & 8 \\
\cmidrule(lr){2-12}
\multirow{4}{*}{50} & RMSSE & \textbf{.345} & .788 & .368 & .370 & .371 & .361 & .359 & .352 & .529 & .387 \\
 & \% Impr. & \textbf{6.22\%} & -114.51\% & 0.00\% & -0.64\% & -0.81\% & 1.74\% & 2.47\% & 4.18\% & -43.89\% & -5.33\% \\
 & p-value & - & $< 0.001^{***}$ & $< 0.001^{***}$ & $< 0.001^{***}$ & $< 0.001^{***}$ & $< 0.001^{***}$ & $< 0.001^{***}$ & $0.012^{*}$ & $< 0.001^{***}$ & $< 0.001^{***}$ \\
 & Rank & \textbf{1} & 10 & 5 & 6 & 7 & 4 & 3 & 2 & 9 & 8 \\
\specialrule{0.25pt}{2pt}{2pt}
\addlinespace[0.3em]
Best performance & RMSSE & \textbf{.335} & .430 & .348 & .348 & .344 & .347 & .348 & .343 & .529 & .381 \\
Best Expert No. &  & 15 & 5 & 5 & 5 & 5 & 5 & 15 & 15 & 5 & 5 \\ 
\addlinespace[-0.2em]
\bottomrule
\end{tabular}
\begin{tablenotes}
\footnotesize
\item[a] Percentage Improvements represent reductions in RMSSE relative to Simple Mean.
\item[b] One-sided bootstrap p-values for testing whether REF has lower avgRMSSE than each benchmark are based on 1,999 replications. Statistical significance is indicated by $^{.}p<0.10$, $^{*}p<0.05$, $^{**}p<0.01$, and $^{***}p<0.001$.
\item[c] Methods are ranked within each expert pool size using RMSSE, with rank 1 indicating the best performance and ties receiving the same rank.
\end{tablenotes}
\end{threeparttable}
}
\end{table}

\begin{table}[b]
\caption{Performance on the M5 dataset by aggregation level using $k=15$ experts (avgRMSSE).}
\label{tab:M5_result_benchmarks_k=15}
\vspace{0.1cm}
\centering
\renewcommand{\arraystretch}{1.1}
\resizebox{0.93\textwidth}{!}{
\begin{threeparttable}
\begin{tabular}{ccccccccccc}
\toprule
 & 
 {REF Method} 
 & 
 ML Method 
 & \multicolumn{3}{c}{\begin{tabular}[c]{@{}c@{}}Methods Using Only \\ Current Forecasts\end{tabular}} & \multicolumn{3}{c}{\begin{tabular}[c]{@{}c@{}}Methods Using Only \\ Historical Information\end{tabular}} & 
  \multicolumn{2}{c}{\begin{tabular}[c]{@{}c@{}}Competition \\ Benchmark\end{tabular}} 
 \\ \cmidrule(lr){2-2} \cmidrule(lr){3-3} \cmidrule(lr){4-6} \cmidrule(lr){7-9} \cmidrule(lr){10-11}
\multirow{-2}{*}{\textbf{Level}} & { \begin{tabular}[c]{@{}c@{}}Average of\\ all specs. \end{tabular}} 
& \begin{tabular}[c]{@{}c@{}}\texttt{Stacking}\\ \texttt{Regressor}\end{tabular} & \begin{tabular}[c]{@{}c@{}}Simple \\ Mean \end{tabular} & \begin{tabular}[c]{@{}c@{}}Trimmed \\ Mean\end{tabular} & \begin{tabular}[c]{@{}c@{}}Winsorized \\ Mean\end{tabular} & \begin{tabular}[c]{@{}c@{}}Variance \\ Weights\end{tabular} & CWM &
CCR
& ES\_bu & \begin{tabular}[c]{@{}c@{}}Best \\ Expert\end{tabular} \\ \hline
1 & \textbf{.111} & .494 & .114 & .117 & .120 & .114 & .123 & .116 & .381 & .220 \\
2 & \textbf{.265} & .825 & .273 & .277 & .283 & .275 & .266 & .282 & .489 & .372 \\
3 & \textbf{.326} & 1.073 & .344 & .345 & .346 & .344 & .335 & .345 & .493 & .382 \\
4 & \textbf{.200} & .641 & .231 & .228 & .227 & .221 & .238 & .208 & .412 & .254 \\
5 & \textbf{.328} & .566 & .353 & .352 & .350 & .343 & .334 & .328 & .574 & .399 \\
6 & \textbf{.318} & 1.311 & .335 & .334 & .337 & .331 & .342 & .324 & .487 & .392 \\
7 & \textbf{.447} & 1.029 & .469 & .470 & .471 & .463 & .463 & .455 & .656 & .483 \\
8 & \textbf{.450} & 1.252 & .455 & .455 & .457 & .455 & .451 & .456 & .553 & .475 \\
9 & \textbf{.573} & 1.227 & .587 & .589 & .590 & .583 & .580 & .575 & .713 & .599 \\ \hline
\addlinespace[0.3em]
{Avg.} & \textbf{.335} & .935 & .351 & .352 & .353 & .348 & .348 & .343 & .529 & .397 \\
\% Impr.\tnote{a} & \textbf{4.53\%} & -166.31\% & 0.00\% & -0.23\% & -0.63\% & 1.02\% & 0.90\% & 2.27\% & -50.60\% & -13.14\% \\
p-value\tnote{b,d} & - & $< 0.001^{***}$ & $< 0.001^{***}$ & $< 0.001^{***}$ & $< 0.001^{***}$ & $< 0.001^{***}$ & $0.002^{**}$ & $< 0.001^{***}$ & $< 0.001^{***}$ & $< 0.001^{***}$ \\
Rank\tnote{c} &\textbf{1}& 10 & 5 & 6 & 7 & 3 & 4 & 2 & 9 & 8 \\ 
\addlinespace[-0.2em]
\bottomrule
\end{tabular}
\begin{tablenotes}
\footnotesize
\item[a] Percentage Improvements represent reductions in RMSSE relative to Simple Mean.
\item[b] One-sided bootstrap p-values for testing whether REF has lower avgRMSSE than each benchmark are based on 1,999 replications. Statistical significance is indicated by $^{.}p<0.10$, $^{*}p<0.05$, $^{**}p<0.01$, and $^{***}p<0.001$.
\item[c] Methods are ranked using avgRMSSE across aggregation levels, with rank 1 indicating the best performance.
\item[d] 
Statistical significance tests are not conducted at individual hierarchical levels because the number of series within each level is limited.
\end{tablenotes}
\end{threeparttable}}
\end{table}

As shown in Table~\ref{tab:M5_result_overall},  REF achieves lower RMSSE than the benchmark methods at every expert pool size with the improvements statistically significant in all cases and highly significant in most. Relative to the Simple Mean, REF improves RMSSE by 2.26\% to 6.22\% and delivers the largest percentage improvement across all ensemble sizes.  All three current-forecast-only methods attain their lowest RMSSE at $k=5$. However, among the two alternatives (Trimmed and Winsorized Mean) to the Simple Mean, only the Winsorized Mean at $k=5$ improves upon it.  In contrast, the history-based methods (Variance Weights, CWM, and CCR) generally outperform the Simple Mean, with only CWM and CCR underperforming it at smallest expert pool size $k=5$.  Forecast accuracy generally declines as the expert pool expands to $k=20$ and $k=50$. This pattern is somewhat unsurprising, since incorporating underperforming experts is likely to undermine the ensemble \citep[e.g.,][]{mannes:2014,lichtendahl:2020,wang:2023}. 

Table~\ref{tab:M5_result_benchmarks_k=15} shows that REF consistently outperforms all the benchmark models at every hierarchical level. CCR and Variance Weights rank second and third overall, respectively, with CWM close behind.  Consistent with Table~\ref{tab:M5_result_overall}, \texttt{StackingRegressor} performs the worst among all competing methods at every hierarchical level.  Its weak performance may stem from its unconstrained coefficients, which can lead to instability through extreme negative or positive weights.  Moreover, because the number of regression coefficients is comparable to the number of historical observations, \texttt{StackingRegressor} tends to overfit.  Among the remaining methods, each outperforms the competition benchmarks, ES\_bu and Best Expert,  both on average across ensemble sizes and at every hierarchical level. 

\subsection{Study 2: Survey of Professional Forecasters} \label{sec:study_spf}

The Survey of Professional Forecasters (SPF), conducted by the Federal Reserve Bank of Philadelphia, collects projections from professional forecasters on a broad set of economic indicators. Since its initiation in 1968, the SPF has become a central resource for analyzing expectations about the U.S. economy. For more details on the SPF data, see \citet{engelberg:2009} and \citet{clements:2010}.

Our analysis focuses on three economic indicators: the nominal GDP (NGDP), the unemployment (UNEMP) rate, and the inflation rate measured by the Consumer Price Index (CPI). These variables are among the most commonly used measures of macroeconomic performance, and have frequently been adopted in the literature to evaluate the accuracy of ensemble forecasting methods \citep[e.g.,][]{elliott:2005,poncela:2011,lahiri:2017}.

To evaluate accuracy, we compare the forecasts against realizations obtained from the Real-Time Data Set for Macroeconomists (RTDSM), also released by the Federal Reserve Bank of Philadelphia \citep{rtdsm_homepage,croushore:2001}. To align with the information set available to SPF forecasters, we use the initial-release version of the RTDSM data, i.e., the first data vintage publicly available at the time each forecast was made, in line with prior studies \citep[e.g.,][]{engelberg:2009,Clements:2014,CapistranandTimmermann:2009,clements:2010}.

We preprocess both forecasts and realizations to ensure stationarity and consistent definitions across series, in line with prior studies 
\citep{CapistranandTimmermann:2009,aiolfi:2010}. 
For each quarter $t$, we convert SPF forecasters' quarterly NGDP level forecasts   into annualized quarter-over-quarter (Q/Q) growth rates,  $\text{g}NGDP_t^{(Q/Q)} = 100\left\{\left(NGDP_t/NGDP_{t-1}\right)^4 - 1 \right\}$,  as documented by \cite{spf_documentation}.  RTDSM directly provides realizations in this form,  so no additional transformation is required. For UNEMP,  we take first differences of consecutive quarterly rates for SPF forecasts and RTDSM realizations to achieve stationarity \citep{poncela:2011}.   SPF CPI forecasts require no transformation because they are already expressed as annualized Q/Q inflation rates. To obtain comparable CPI realizations, we average the monthly RTDSM CPI within each quarter and convert the resulting series into the same form. 

Our analysis focuses on two forecast horizons for each target quarter, consistent with the literature \citep{lahiri:2017,poncela:2011}: $h=0$ for which we ensemble the experts' nowcasts reported in that quarter, and $h=1$ for which we ensemble the one-quarter-ahead forecasts reported in the previous quarter.  Prior evidence indicates that SPF forecast accuracy declines substantially beyond $h=1$ \citep{stark:2010,bok:2018}.

Because participation in the SPF survey is voluntary, the set of forecasters varies from quarter to quarter. Accordingly, we follow the varying-pool procedure in Section~\ref{sec:select_lambda} when selecting $\lambda$ and constructing ensemble forecasts. Our evaluation period runs from 2000Q1 to 2025Q2, consistent with the post-2000 SPF sample used in \citet{blanc:2020}. For each economic indicator and forecast horizon, we tune $\lambda$  dynamically for every test quarter. For a given test quarter, we use the preceding 16 quarters as the historical forecasts window (i.e., $T=16$ in Figure~\ref{fig:rolling window cross validation}). We set the prior weights estimation window to $l=8$, which yields 8 rolling validation quarters. We then select $\lambda$ to maximize the average validation accuracy over these quarters. 

We follow Section~\ref{sec:est_prior_weights} to estimate the prior weights~$s_i$ based on the covariance of the errors. The common correlation parameter, $\rho_c$, is estimated using  experts with complete eight-quarter forecast histories, whereas individual error variances are estimable for all experts in the pool. As a nested specification, we also consider independent forecast errors (i.e., $\rho_c=0$), under which the prior weights reduce to variance-based weights. 

Table~\ref{tab:spf_cov_avg_rmsse} summarizes the forecast accuracy of the competing methods over the testing quarters measured by RMSSE. As in the M5 analysis, the denominator is the in-sample RMSE of one-step-ahead na\"{i}ve forecasts over the period before 1996Q1, which marks the start of the first validation window. For the validation RMSSE used to select $\lambda$, the numerator is the RMSE over the eight-quarter validation window preceding the test quarter. For the final evaluation, the numerator is the RMSE over the full testing period (2000Q1--2025Q2). Appendix~\ref{Appendix:spf_tabs} reports additional results based on RMSE, results using the best-performing specification among the six on the validation set as the single REF model, and results under the independent-error specification; all yield conclusions consistent with the main findings.

\begin{table}[b]
\caption{Performance on the SPF dataset by economic indicator and forecast horizon (RMSSE).}
\label{tab:spf_cov_avg_rmsse}
\centering
\renewcommand{\arraystretch}{1.1}
\resizebox{0.82\textwidth}{!}{
\begin{threeparttable}
\begin{tabular}{crcccccccc}
\toprule
 &  & REF Method & ML Method  & \multicolumn{3}{c}{\begin{tabular}[c]{@{}c@{}}Methods Using Only \\ Current Forecasts\end{tabular}} & \multicolumn{3}{c}{\begin{tabular}[c]{@{}c@{}}Methods Using Only \\ Historical Information\end{tabular}} \\ \cmidrule(lr){3-3}  \cmidrule(lr){4-4} \cmidrule(lr){5-7} \cmidrule(lr){8-10}
\multicolumn{1}{c}{\multirow{-1}{*}{\textbf{Indicator}}} & \multirow{-1}{*}{\textbf{Horizon}} &  { \begin{tabular}[c]{@{}c@{}}REF\\  \end{tabular}} & \begin{tabular}[c]{@{}c@{}}\texttt{Stacking}\\ \texttt{Regressor}\end{tabular} & \begin{tabular}[c]{@{}c@{}}Simple \\ Mean \end{tabular} & \begin{tabular}[c]{@{}c@{}}Trimmed \\ Mean\end{tabular} & \begin{tabular}[c]{@{}c@{}}Winsorized \\ Mean\end{tabular} & \begin{tabular}[c]{@{}c@{}}Variance \\ Weights\end{tabular}  & CWM & CCR \\ \hline
\addlinespace[0.3em]
\multirow{6}{*}{\begin{tabular}[c]{@{}c@{}}GDP \\ Growth Rate\end{tabular}}
& h=0 & \textbf{.566} & 1.515 & .600 & .602 & .600 & .612 & .621 & .623 \\
 & h=1 & \textbf{1.541} & 2.257 & 1.597 & 1.593 & 1.594 & 1.591 & 1.600 & 1.584 \\ \cmidrule(lr){3-10}
 & Avg. & \textbf{1.054} & 1.886 & 1.098 & 1.097 & 1.097 & 1.102 & 1.111 & 1.104 \\
  & \% Impr.\tnote{a} & \textbf{4.08\%} & -71.73\% & 0.00\% & 0.10\% & 0.16\% & -0.29\% & -1.13\% & -0.48\% \\
  & p-value\tnote{b} & - & $0.004^{**}$ & $0.078^{.}$ & $0.077^{.}$ & $0.096^{.}$ & $0.015^{*}$ & $0.017^{*}$ & $0.011^{*}$ \\
 & Rank\tnote{c} & \textbf{1} & 8 & 4 & 3 & 2 & 5 & 7 & 6 \\  \hline
 \addlinespace[0.3em]
\multirow{6}{*}{\begin{tabular}[c]{@{}c@{}} Change in\\UNEMP Rate\end{tabular}} & h=0 & \textbf{.751} & 3.372 & .814 & .829 & .825 & .786 & 2.619 & .963 \\
 & h=1 & 2.574 & \textbf{2.521} & 2.713 & 2.665 & 2.672 & 2.594 & 2.610 & 2.578 \\ \cmidrule(lr){3-10}
 & Avg. & \textbf{1.662} & 2.946 & 1.763 & 1.747 & 1.749 & 1.690 & 2.615 & 1.771 \\
  & \% Impr. & \textbf{5.73\%} & -67.08\% & 0.00\% & 0.92\% & 0.85\% & 4.18\% & -48.27\% & -0.40\% \\
  & p-value & - & $0.005^{**}$ & $0.020^{*}$ & $0.016^{*}$ & $0.013^{*}$ & 0.158 & $0.006^{**}$ & $0.025^{*}$ \\
 & Rank & \textbf{1} & 8 & 5 & 3 & 4 & 2 & 7 & 6 \\  \hline
 \addlinespace[0.3em]
\multirow{6}{*}{\begin{tabular}[c]{@{}c@{}} CPI \\  Inflation Rate \end{tabular}} & h=0 & \textbf{.715} & 1.200 & .761 & .757 & .759 & .815 & .802 & .797 \\
 & h=1 & \textbf{1.194} & 1.598 & 1.197 & 1.197 & 1.195 & 1.217 & 1.254 & 1.205 \\ \cmidrule(lr){3-10}
 & Avg. & \textbf{0.954} & 1.399 & 0.979 & 0.977 & 0.977 & 1.016 & 1.028 & 1.001 \\
  & \% Impr. & \textbf{2.53\%} & -42.85\% & 0.00\% & 0.23\% & 0.25\% & -3.74\% & -5.01\% & -2.18\% \\
  & p-value & - & $< 0.001^{***}$ & $0.005^{**}$ & $0.010^{**}$ & $0.013^{*}$ & $0.014^{*}$ & $0.004^{**}$ & $0.052^{.}$ \\
 & Rank & \textbf{1} & 8 & 4 & 3 & 2 & 6 & 7 & 5 \\ 
 \bottomrule
\end{tabular}
\begin{tablenotes}
\footnotesize
\item[a] Percentage Improvements represent reductions in average RMSSE relative to Simple Mean.
\item[b] One-sided bootstrap p-values for testing whether REF has lower average RMSSE than each benchmark are based on 1,999 replications. Statistical significance is indicated by $^{.}p<0.10$, $^{*}p<0.05$, $^{**}p<0.01$, and $^{***}p<0.001$.
\item[c] Methods are ranked within each economic indicator using average RMSSE across forecast horizons.
\end{tablenotes}
\end{threeparttable}
}
\end{table}

Table~\ref{tab:spf_cov_avg_rmsse} shows that all methods are considerably more accurate at $h=0$ than at $h=1$.  Compared with the benchmarks, REF performs best for every indicator at both horizons except for the one-step-ahead forecast of the change in the UNEMP rate, for which \texttt{StackingRegressor} performs best.  Averaged across the two horizons, REF ranks first for all three indicators and improves on the simple mean by 4.08\% for GDP growth, 5.73\% for the change in unemployment, and 2.53\% for CPI inflation.  The lower RMSSE achieved by REF is statistically significant relative to nearly all benchmarks.  Despite its advantage in the one exception, \texttt{StackingRegressor} performs unevenly across the other settings.  Together with the M5 results, this pattern suggests that the additional flexibility provided by unconstrained weights does not necessarily improve forecast accuracy.

Methods using only current forecasts (Simple Mean, Trimmed Mean, and Winsorized Mean) generally outperform the history-only methods (Variance Weights, CWM, and CCR). The history-only methods generally underperform the Simple Mean, with negative percentage improvements in eight of the nine cases. This relative weakness of methods using only historical information contrasts with the M5 results and may reflect the SPF's changing pool of experts, which makes past performance a less reliable basis for weighting.  In this setting, REF remains effective because it balances past performance and current forecasts.

\subsection{
Discussion: Balance Between Current and Historical Information }

In this section, we further discuss the advantages of REF incorporating both current forecasts and historical performances.  We examine how the relative influence of these two information sources relates to predictive accuracy across different scenarios in both studies. Then we explore how the strength of regularization changes with expert pool size in M5 and with horizon in SPF.

To quantify the relative contributions of the variance term and penalty term under our proposed framework, we consider a \emph{Penalty Share (PS)} measure for each target series defined by
$$
\mathrm{PS} =\frac{\left|\lambda^*\Phi(\bw^*)\right|}{\left|\lambda^*\Phi(\bw^*)\right| +\left| f\!\left(\sum_{i=1}^{k}w_i^{*2}(\mu_i-\mathbb{E}[\mu])^2 / D^{\mathbb{I}\{f(x)\neq x\}}\right)\right|},
$$
where $\bw^*$ are the optimal weights, $\lambda^*$ is the tuning parameter selected by validation, and $D^{\mathbb{I}\{f(x)\neq x\}}=1$ when $f(x)=x$ and equals $D = \sum_{t'=-T_0+1}^{0}(y_{t'}-y_{t'-1})^2/T_0$ otherwise. Here, $D$ is the squared na\"{i}ve-RMSE scale used to standardize the variance component across target series for the non-identity specifications, making the resulting PS values comparable across series for pooled statistical analysis. The PS measures the contribution of the penalty term relative to the variance term in~\eqref{eq:finalobj}, evaluated at the optimal weights.  A PS value close to zero indicates that the optimal weights are determined mainly by the deviation of the current forecasts from the consensus, whereas a value close to one indicates that they are determined mainly by historical performance encoded in the prior weights. 

For the identity specifications, $f(x)=x$, differences in the scale of the target series can be accommodated through $\lambda$: a series with greater variability produces a larger variance term and can correspondingly select a larger $\lambda$, preserving the relative contribution of the two objective components. Under the logarithmic specifications, $f(x)=\log(x)$ or $f(x)=\log(\sigma^2+x)$, however, differences in scale enter the transformed variance term additively and therefore do not affect the optimization or the selected $\lambda$, while still affecting the numerical value of the variance term used to compute PS. We therefore standardize the variance component by $D$ so that PS is comparable across  target series.

For ease of interpretation, we group the PS values from our six models in Table~\ref{tab:framework_summary} into three categories (Low, Medium, High), each containing the same number of instances, separately within each exact REF specification for each M5 expert-pool size or SPF horizon.  We further define  $\Delta\mathrm{RMSSE} = \mathrm{RMSSE}_{\text{benchmark}} - \mathrm{RMSSE}_{\text{REF}}$, which measures the difference in RMSSE between a benchmark method and the REF method. 

\begin{figure}[b]
  \centering
  \subfigure[Expert pool size = 5]{
    \includegraphics[width=0.3\textwidth]{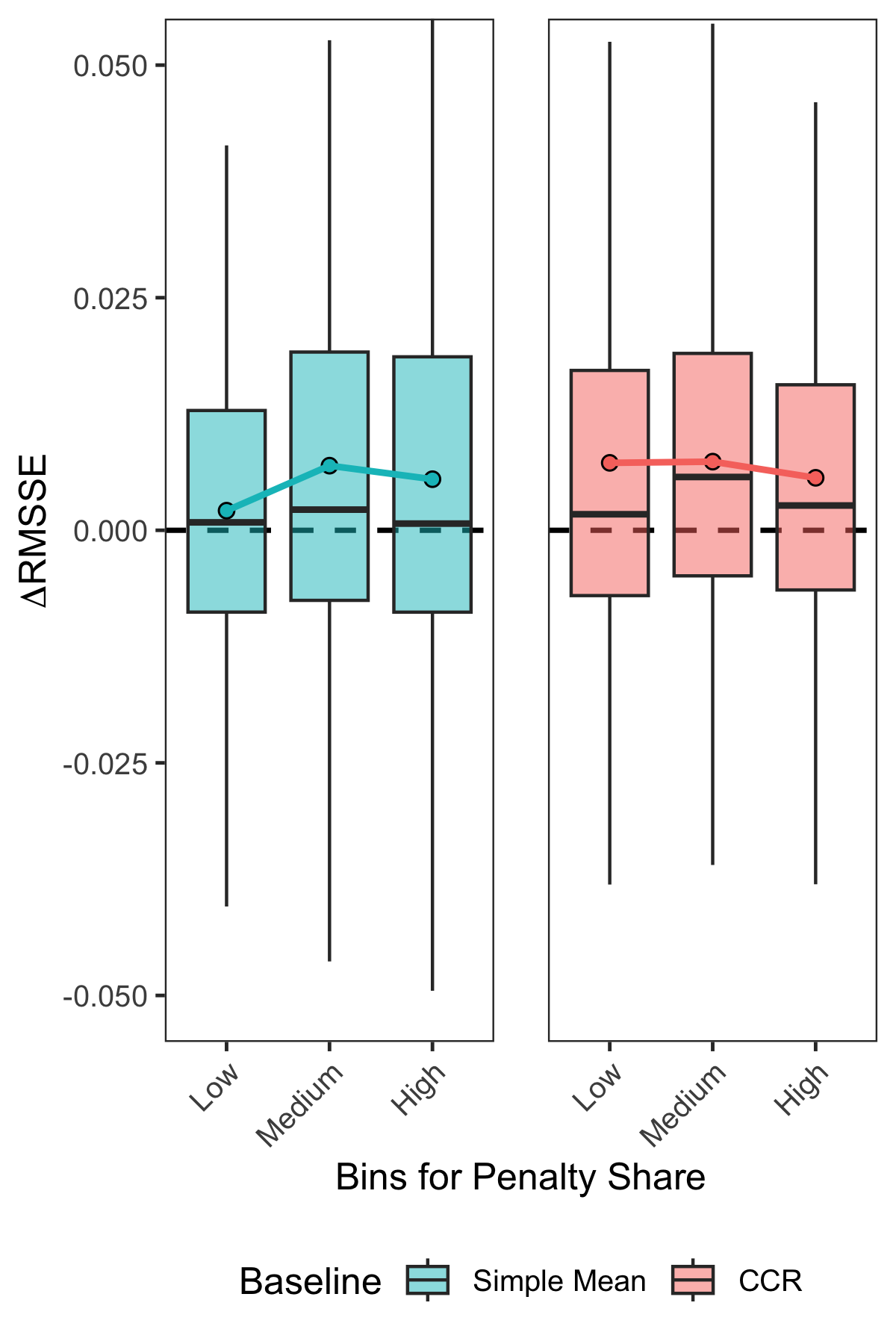}
  }\hspace{0.01\textwidth}
  \subfigure[Expert pool size = 15]{
    \includegraphics[width=0.3\textwidth]{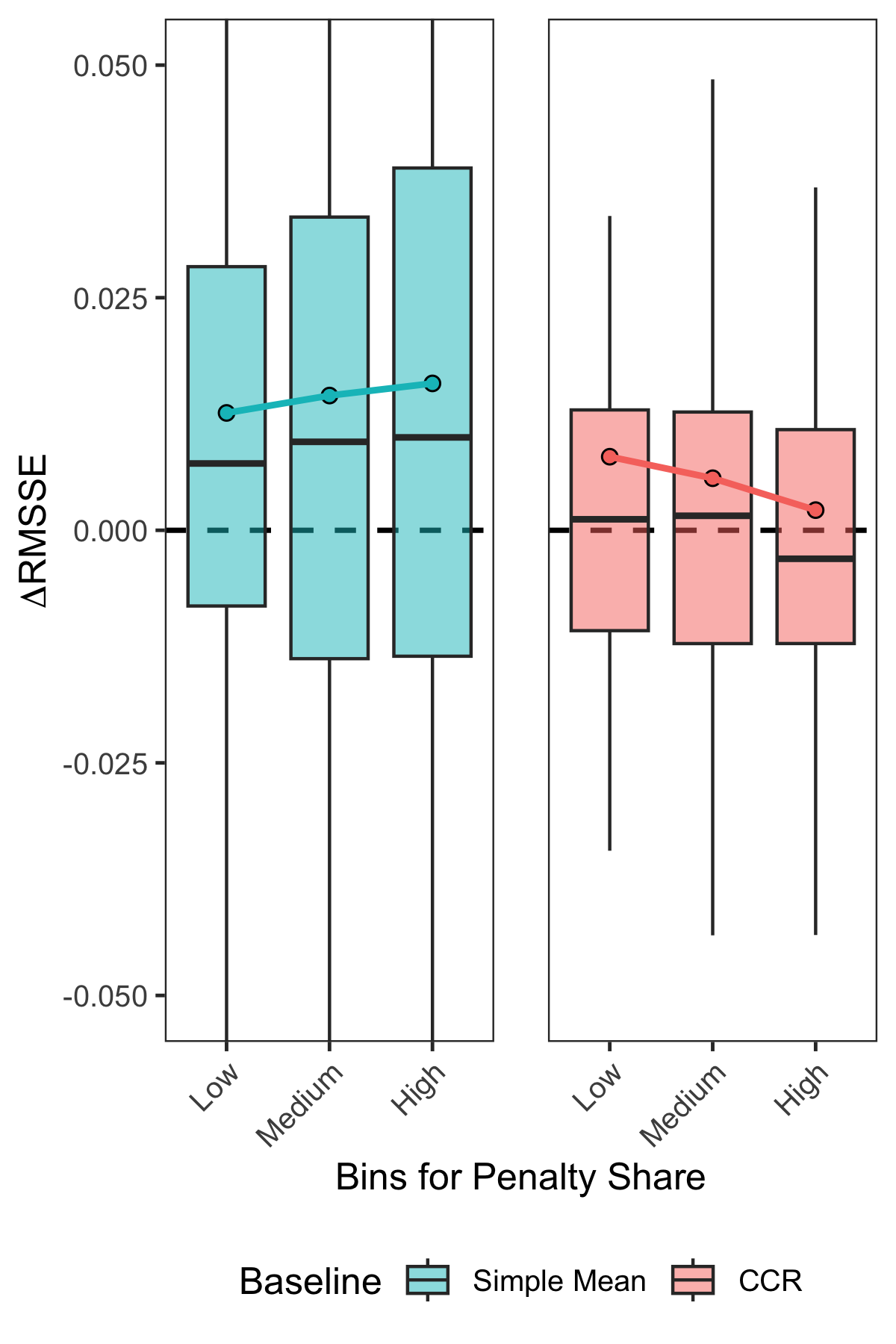}
  }\hspace{0.01\textwidth}
  \subfigure[Expert pool size = 50]{
    \includegraphics[width=0.3\textwidth]{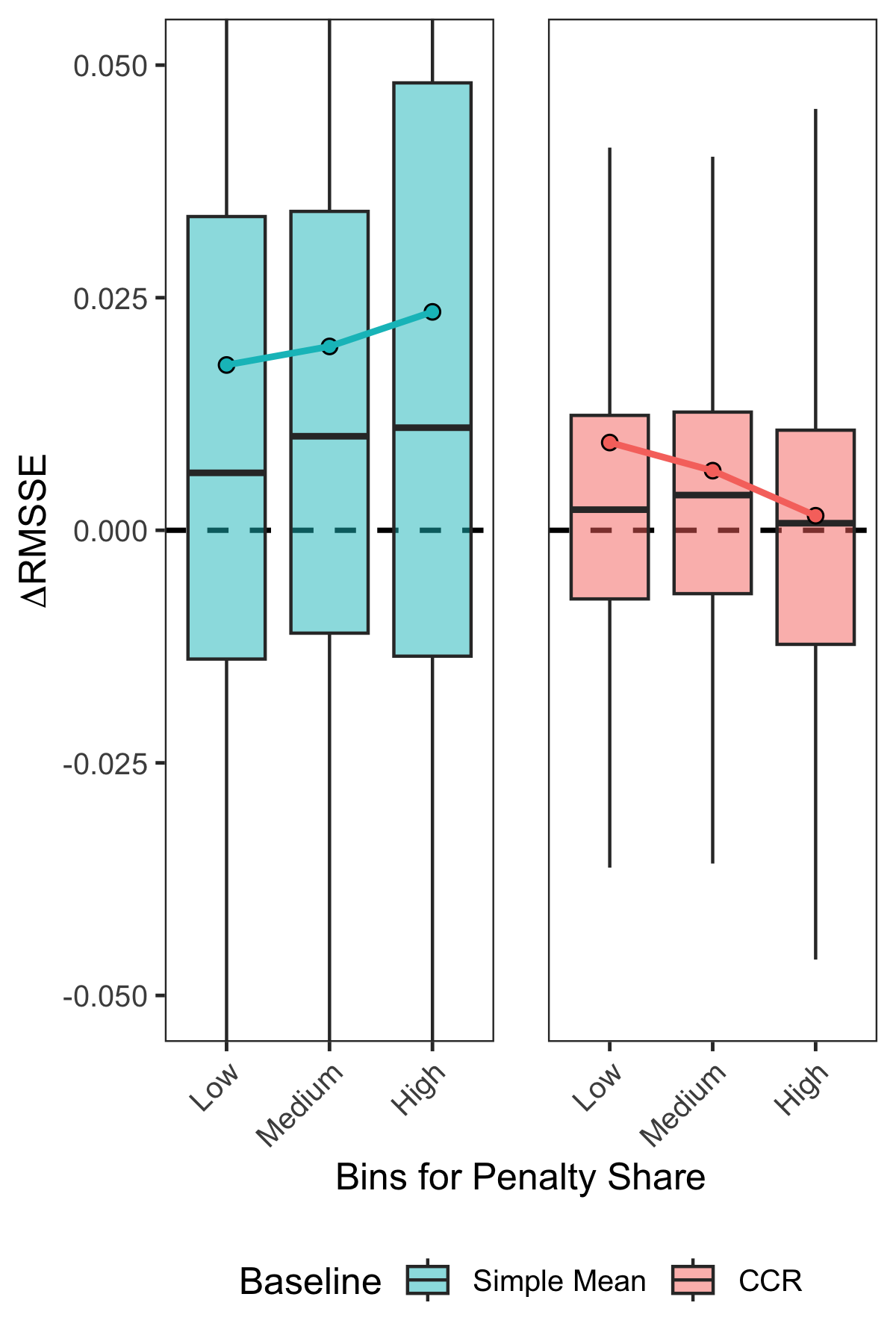}
  }
  \caption{Performance improvement ($\Delta$RMSSE) vs. penalty share (PS) bins across baseline models in the M5 study. Circles and connecting lines denote arithmetic means. Whiskers extending beyond the common y-axis limits are clipped.}
  \vspace{-0.3cm}
  \label{fig:M5_DeltaRMSSE_share}
\end{figure}
Figures~\ref{fig:M5_DeltaRMSSE_share} and~\ref{fig:spf_DeltaRMSSE_share}  present boxplots of $\Delta\mathrm{RMSSE}$ across low, medium and high PS values. For the M5 study, we report $\Delta\mathrm{RMSSE}$ for two benchmarks, Simple Mean and CCR, with three expert pool sizes $k=5, 15,$ and $50$. Simple Mean and CCR are selected because they perform the best in the M5 study within the two benchmark groups that rely only on current forecasts and only on historical performances, respectively.

In the M5 study, comparing the low and high PS bins, REF’s improvement over the Simple Mean is generally larger when PS is high, whereas its improvement over CCR is larger when PS is low. This pattern is consistent with the role of PS in determining the relative influence of the two information sources. In low PS cases, historical performance appears relatively less informative about future prediction accuracy, leading REF to place greater emphasis on the variance term and making its performance closer to the Simple Mean. Conversely, in high-PS cases, historical performance plays a relatively greater role, leading REF to place greater emphasis on the regularization term and making its performance closer to CCR. The medium-PS bin represents a more balanced use of current forecasts and historical performance. REF continues to outperform both the Simple Mean and CCR in this intermediate regime across all three expert pool sizes, indicating that combining the two information sources remains beneficial when neither source dominates. It is also shown in  Figure~\ref{fig:M5_DeltaRMSSE_share} that across expert pool sizes, relative improvements of REF over Simple Mean are larger for larger expert pools, whereas its improvements over CCR remain comparatively stable.

\begin{figure}[t]
  \centering
  \subfigure[h = 0]{
    \includegraphics[width=0.3\textwidth]{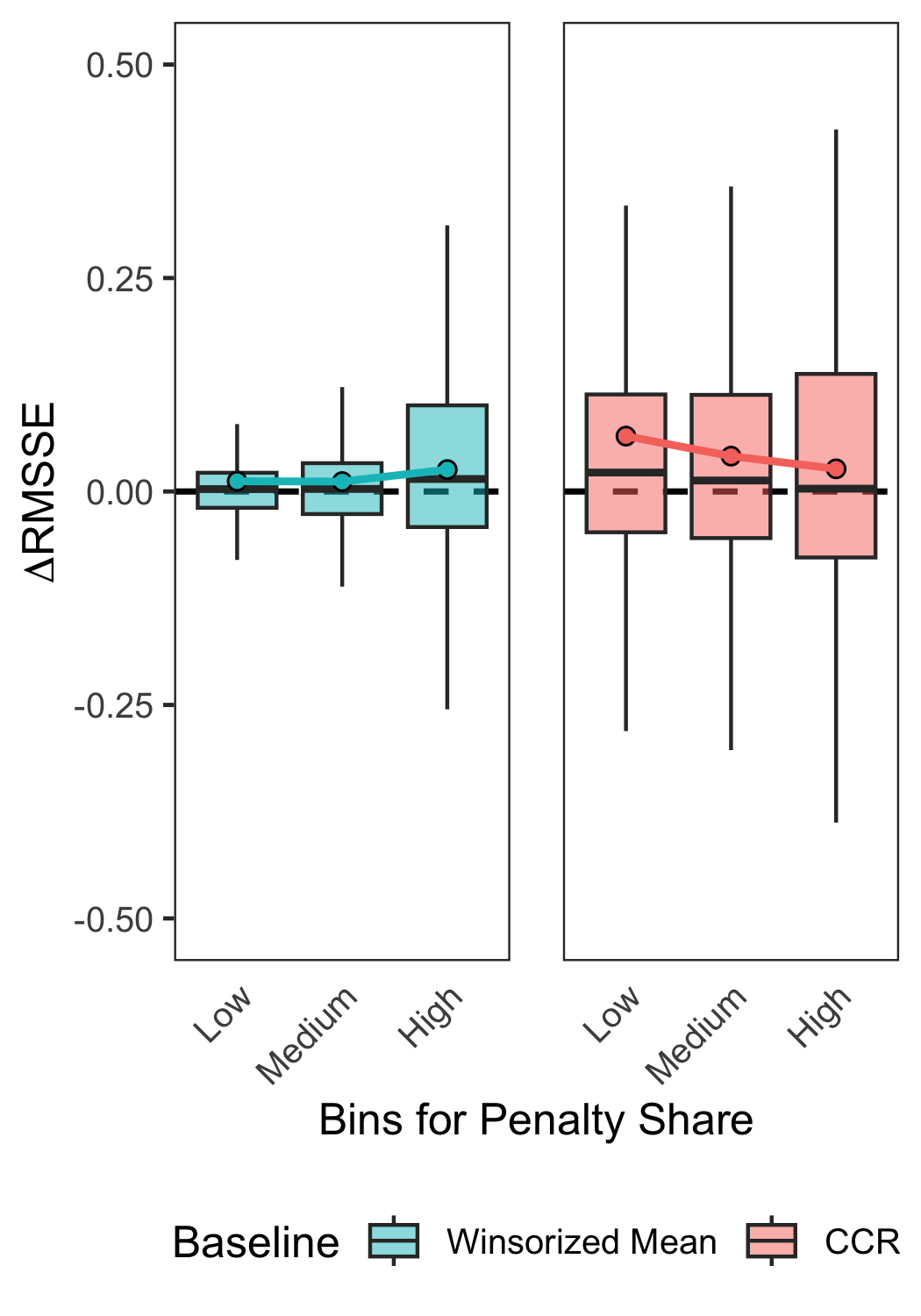}
  }\hspace{0.01\textwidth}
  \subfigure[h = 1]{
    \includegraphics[width=0.3\textwidth]{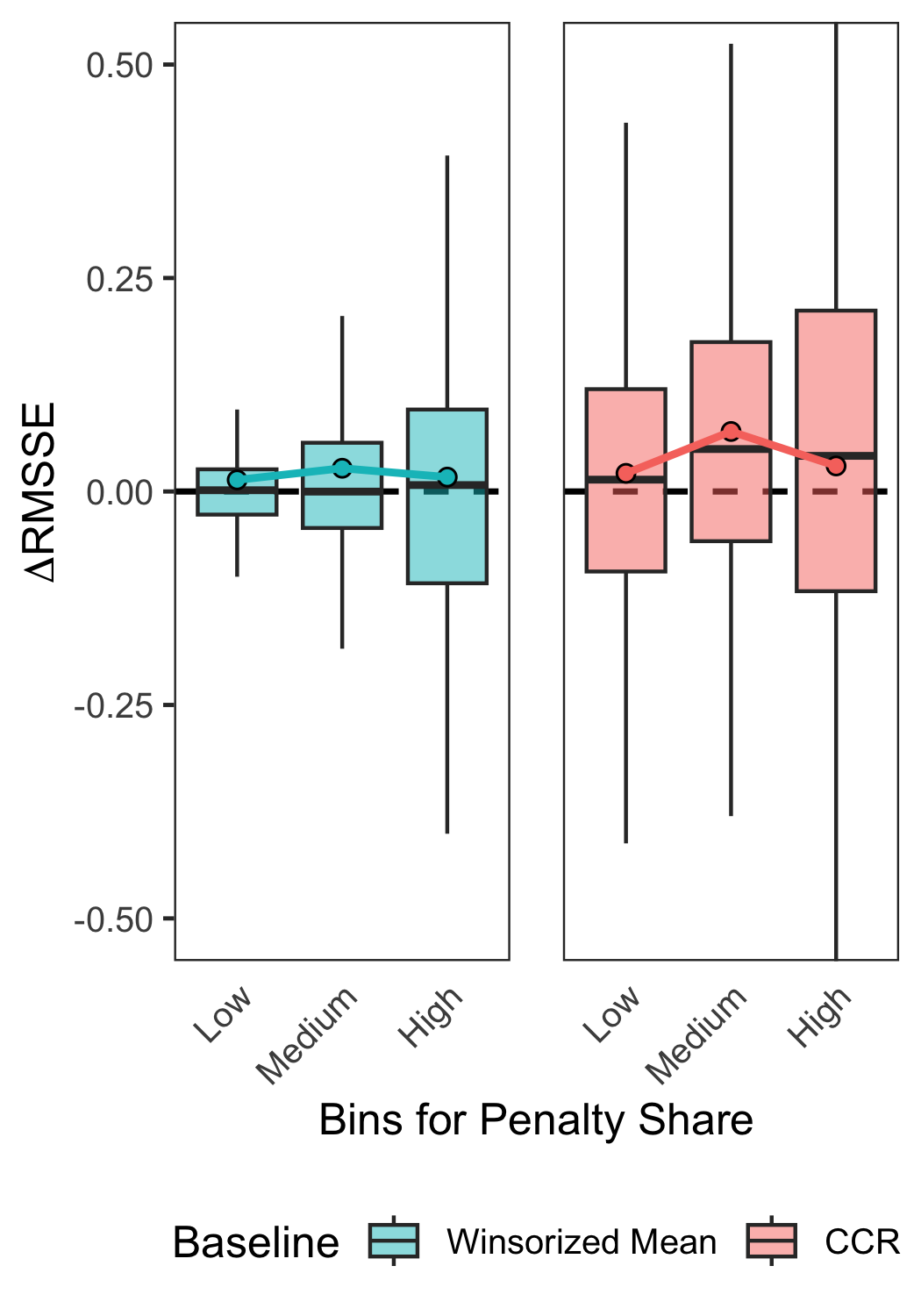}
  }
  \caption{Performance improvement ($\Delta$RMSSE) vs. penalty share (PS) bins across baseline models in the SPF study under different forecast horizons $h=0$ and $h=1$. Circles and connecting lines denote arithmetic means. Whiskers extending beyond the common y-axis limits are clipped.
 }
  \label{fig:spf_DeltaRMSSE_share}
  \vspace{-0.3cm}
\end{figure}

    Figure~\ref{fig:spf_DeltaRMSSE_share} provides boxplots of $\Delta \mathrm{RMSSE}$ for the SPF study stratified by forecast horizon $h$. For benchmark comparison, we select Winsorized Mean, the best performing benchmark that relies only on current forecasts, and again CCR as the representative history-only benchmark. The same pattern and conclusions hold if CCR is replaced by Variance Weights, which is the best performing method based only on experts' historical performances in the SPF study. The results indicate that, for the nowcast horizon ($h=0$), our framework’s improvement over Winsorized Mean is most substantial when PS values are high, whereas its improvement over CCR is greatest when PS values are low.  When $h=1$, the superior performance of our framework lies primarily in the medium PS bin, reflecting the benefit of balancing current forecasts and historical performances. Moreover, in the SPF study, where historical performance information is relatively limited, the improvements over CCR are larger than those over current-only methods, whereas in the M5 study the reverse holds. 

Next, we examine how PS varies over expert pool sizes and forecasting horizons in the two studies. Figure~\ref{fig:share_ratio_histogram}(a) shows boxplots of the PS ratios for different expert pool sizes in the M5 study. For example, $\text{PS}_{15}/\text{PS}_{5}$ denotes,  for the same target series and REF specification, the ratio of the PS value obtained with $k=15$ experts to that obtained with $k=5$ experts.  According to the figure, both $\text{PS}_{15} / \text{PS}_{5}$ and $\text{PS}_{50} / \text{PS}_{15}$ are generally greater than 1, suggesting that historical performance plays a larger role relative to current forecasts in determining the optimal weights as the expert pool grows.  The effect is more pronounced when comparing the medium pool with the small pool. 
\begin{figure}[t]
  \centering
  \subfigure[M5: penalty-share ratios across expert pool sizes]{
    \includegraphics[width=0.46\textwidth]{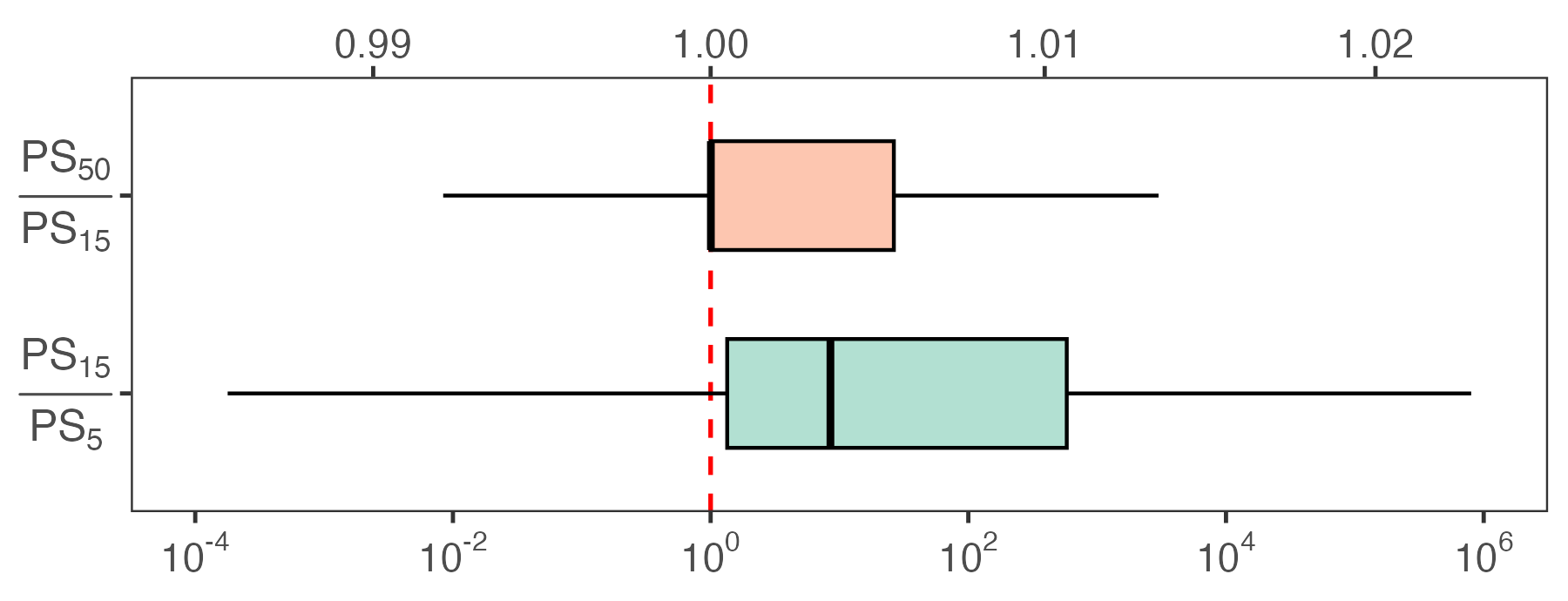}
  }\hspace{0.04\textwidth}
  \subfigure[SPF: penalty-share ratios across forecast horizons for each economic indicator]{
    \includegraphics[width=0.46\textwidth]{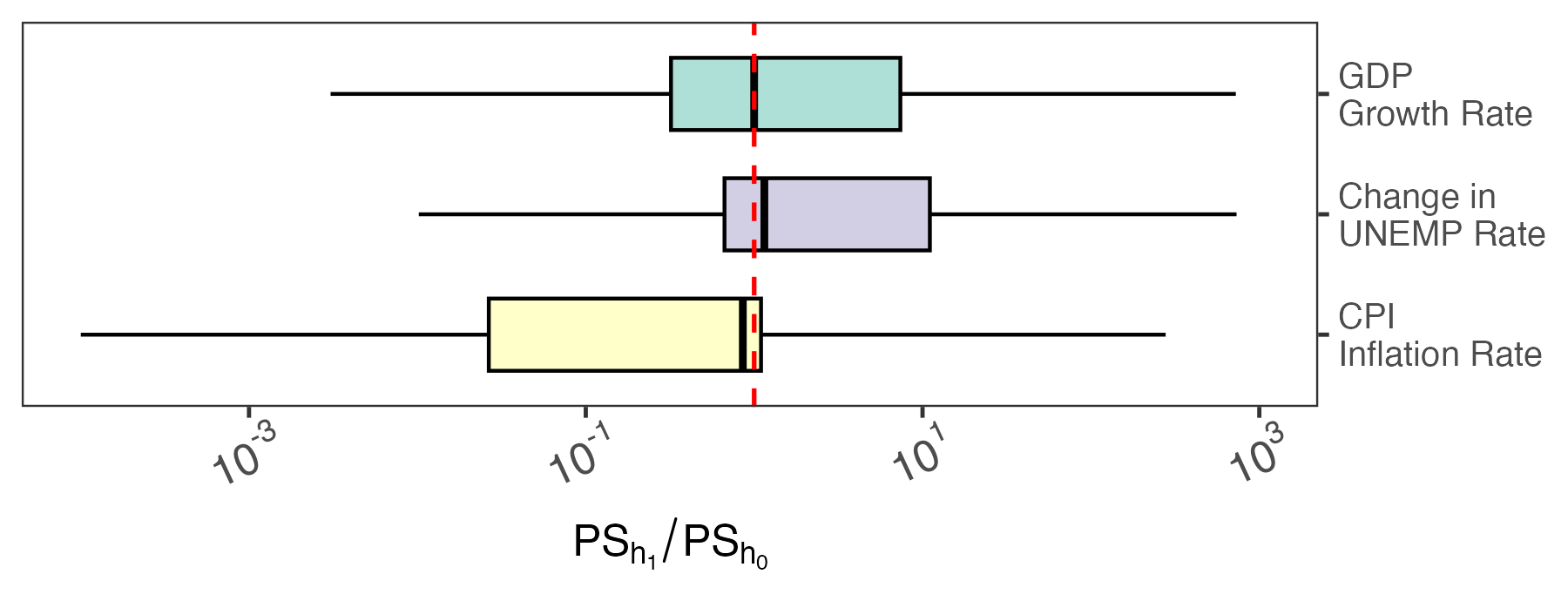}
  }
  \caption{Distribution of PS ratios  in the M5 and SPF Studies. The red dashed line marks a ratio of one.
  }
  \label{fig:share_ratio_histogram}
\end{figure}

In Figure~\ref{fig:share_ratio_histogram}(b), boxplots for the ratio of PS at the one-quarter-ahead horizon ($h=1$) to PS at the nowcast horizon ($h=0$) are reported for each economic indicator, with each ratio calculated for the same quarter and REF specification. For the GDP growth rate, the distribution is centered around 1,  indicating a similar balance between the variance component and regularization term across the two horizons. For the change in unemployment rate, the distribution  indicates a larger PS  and a greater role for historical performance at $h=1$ than at $h=0$.  For the CPI inflation rate, the distribution  indicates a smaller PS at $h=1$, implying a greater relative role for current forecasts than at $h=0$.  To sum up, the balance between current forecasts and historical performances at $h=1$ relative to $h=0$ is specific to each economic indicator rather than governed by a universal rule.

\section{Conclusions} \label{sec:conclusions}

In this paper, we introduce a Regularized Ensemble Forecasting method that determines expert weights by integrating information from both their current forecasts and historical performances. The weights are obtained by optimizing an objective function composed of two components: a variance term capturing the dispersion of the ensemble forecast, and a regularization term informed by their historical performances. As a result, our framework down-weights experts whose current forecasts deviate substantially from the consensus and/or whose prior weights, determined by historical performances, are low. The relative influence of these two components is governed by a tuning parameter. The prior weights may reflect the decision maker’s own judgment or be computed from other ensemble procedures, such as those that use historical error covariance.

The proposed framework is flexible: both the variance term and penalty can take different functional forms. We further provide a Bayesian interpretation of the framework through hierarchical models with several distributional specifications, which lead to various transformations of the variance term and regularization terms.  These specifications, in turn, determine how the weights respond to deviations of current forecasts from the consensus and to the prior weights derived from historical performances. Under certain specifications, we can establish large-sample theoretical guarantees for REF, showing that the proposed method retains the same asymptotic prediction accuracy as the Simple Mean ensemble.

We demonstrate the practicality of our proposed REF in two empirical studies, including the M5 study with experts' complete historical forecasting records and the SPF study with a rotating expert pool and limited histories. In both studies, REF consistently outperforms leading benchmark methods. These results highlight REF's ability to adaptively balance the variance and penalty terms. Relative to benchmarks using only current forecasts, its improvements are typically larger  when the tuning parameter emphasizes the penalty term more than the variance term; relative to history-only benchmarks, its improvements are typically larger when it emphasizes the variance term more than the penalty term. Additionally, the M5 results also indicate that historical performances become increasingly influential in determining the optimal weights as the expert pool expands. 

This paper enables several future research directions. In settings such as the M5 competition, experts may also provide distributional forecasts for continuous outcomes, including quantile forecasts that can be used to construct prediction intervals \citep[e.g.,][]{makridakis2022m5undertainty}. Thus, a natural extension of this paper is to combine such distributional forecasts.  A separate direction is to develop and evaluate the binary-outcome extensions outlined in Section~\ref{sec:bayes:binary_extension}. More broadly, the REF framework provides a flexible foundation for combining current forecasts with historical performance information across a wider range of forecasting settings.

\bibliographystyle{ormsv080} 
\bibliography{bayesian-ensembles}

@PREAMBLE{"\newcommand{\noopsort}[1]{}"}

@incollection{armstrong:2001,
  author    = {Armstrong, J. Scott},
  title={Standards and practices for forecasting},
  booktitle = {Principles of Forecasting: A Handbook for Researchers and Practitioners},
  pages     = {679--732},
  publisher = {Springer},
  year      = {2001}
}

@book{bernardo:2000,
  title={{B}ayesian theory},
  author={Bernardo, Jos{\'e} M and Smith, Adrian FM},
  year={2000},
  publisher={John Wiley \& Sons}
}

@article{chen2025combining,
  title={Combining forecasts from multiple experts for multiple variables},
  author={Chen, Z and Zhao, L},
  journal={Management Science},
  year={2025},
  pubstate = {ahead of print},
note    = {Online ahead of print}
}

@article{grushka:2017,
  title={Ensembles of overfit and overconfident forecasts},
  author={Grushka-Cockayne, Yael and Jose, Victor Richmond R and Lichtendahl Jr, Kenneth C},
  journal={Management Science},
  volume={63},
  number={4},
  pages={1110--1130},
  year={2017}
}

@article{gaba:2019,
  title={Assessing uncertainty from point forecasts},
  author={Gaba, Anil and Popescu, Dana G and Chen, Zhi},
  journal={Management Science},
  volume={65},
  number={1},
  pages={90--106},
  year={2019}
}

@book{hastie:2009,
  title={The elements of statistical learning: data mining, inference, and prediction},
  author = {Hastie, Trevor and Tibshirani, Robert and Friedman, Jerome},
  volume={2},
  year={2009},
  publisher={Springer}
}

@article{jose:2008,
  title={Simple robust averages of forecasts: Some empirical results},
  author = {Jose, Victor Richmond R. and Winkler, Robert L.},
  journal={International Journal of Forecasting},
  volume={24},
  number={1},
  pages={163--169},
  year={2008}
}

@article{lichtendahl:2020,
  title={Why do some combinations perform better than others?},
  author={Lichtendahl Jr, Kenneth C and Winkler, Robert L},
  journal={International Journal of Forecasting},
  volume={36},
  number={1},
  pages={142--149},
  year={2020},
  publisher={Elsevier}
}

@article{winkler:1981,
  title={Combining probability distributions from dependent information sources},
  author={Winkler, Robert L},
  journal={Management Science},
  volume={27},
  number={4},
  pages={479--488},
  year={1981},
  publisher={INFORMS}
}

@article{wolpert:1992,
  title={Stacked generalization},
  author={Wolpert, David H},
  journal={Neural Networks},
  volume={5},
  number={2},
  pages={241--259},
  year={1992},
  publisher={Elsevier}
}

@article{da:2020,
  title={Harnessing the wisdom of crowds},
  author={Da, Zhi and Huang, Xing},
  journal={Management Science},
  volume={66},
  number={5},
  pages={1847--1867},
  year={2020}
}

@article{Dempster:1977,
 title = {Maximum likelihood from incomplete data via the EM algorithm},
 author={Dempster, Arthur P and Laird, Nan M and Rubin, Donald B},
 journal = {Journal of the Royal Statistical Society. Series B (Methodological)},
 number = {1},
 pages = {1--38},
 publisher = {[Royal Statistical Society, Wiley]},
 volume = {39},
 year = {1977}
}

@article{makridakis:2022,
  title={M5 accuracy competition: Results, findings, and conclusions},
  author={Makridakis, Spyros and Spiliotis, Evangelos and Assimakopoulos, Vassilios},
  journal={International Journal of Forecasting},
  volume={38},
  number={4},
  pages={1346--1364},
  year={2022},
  publisher={Elsevier}
}

@article{Engelberg:2009,
 ISSN = {07350015},
 author={Engelberg, Joseph and Manski, Charles F and Williams, Jared},
 journal = {Journal of Business \& Economic Statistics},
 number = {1},
 pages = {30--41},
 publisher = {[American Statistical Association, Taylor & Francis, Ltd.]},
 title = {Comparing the point predictions and subjective probability distributions of professional forecasters},
 urldate = {2024-04-17},
 volume = {27},
 year = {2009}
}

@article{Clements:2014,
title = {Probability distributions or point predictions? {S}urvey forecasts of US output growth and inflation},
journal = {International Journal of Forecasting},
volume = {30},
number = {1},
pages = {99-117},
year = {2014},
issn = {0169-2070},
author={Clements, Michael P}
}

@article{clements:2010,
title = {Explanations of the inconsistencies in survey respondents’ forecasts},
journal = {European Economic Review},
volume = {54},
number = {4},
pages = {536-549},
year = {2010},
issn = {0014-2921},
author={Clements, Michael P}
}

@article{CapistranandTimmermann:2009,
author={Capistr{\'a}n, Carlos and Timmermann, Allan},
title = {Forecast Combination With Entry and Exit of Experts},
journal = {Journal of Business \& Economic Statistics},
volume = {27},
number = {4},
pages = {428--440},
year = {2009},
publisher = {Taylor \& Francis}
}

@article{Bunn:1985,
title = {Statistical efficiency in the linear combination of forecasts},
journal = {International Journal of Forecasting},
volume = {1},
number = {2},
pages = {151-163},
year = {1985},
issn = {0169-2070},
author={Bunn, Derek W}
}

@article{Budescu:2015,
  author    = {Budescu, David V and Chen, Eva},
  journal   = {Management Science},
  number    = {2},
  pages     = {267--280},
  publisher = {INFORMS},
  title     = {Identifying expertise to extract the wisdom of crowds},
  urldate   = {2024-04-29},
  volume    = {61},
  year      = {2015}
}

@article{Soule:2024,
  title={A heuristic for combining correlated experts when there are few data},
  author={Soule, David and Grushka-Cockayne, Yael and Merrick, Jason},
  journal={Management Science},
  volume={70},
  number={10},
  pages={6637--6668},
  year={2024},
  publisher={INFORMS}
}

@misc{spf_homepage,
  title = {Survey of Professional Forecasters (SPF) — Surveys \& Data},
  author = {{Federal Reserve Bank of Philadelphia}},
  year = {2025},
  note = {Accessed August 27, 2025},
  institution = {Federal Reserve Bank of Philadelphia},
 howpublished = 
{\href{https://www.philadelphiafed.org/surveys-and-data/survey-of-professional-forecasters}{philadelphiafed.org/surveys-and-data/survey-of-professional-forecasters}}
}

@misc{spf_documentation,
  title        = {Survey of Professional Forecasters Documentation},
  author       = {{Federal Reserve Bank of Philadelphia}},
  year         = {2023},
  note = {Accessed August 27, 2025},
  institution  = {Federal Reserve Bank of Philadelphia},
  howpublished = {\href{https://www.philadelphiafed.org/-/media/FRBP/Assets/Surveys-And-Data/survey-of-professional-forecasters/spf-documentation.pdf}{philadelphiafed.org/.../spf-documentation.pdf}}
}

@misc{rtdsm_homepage,
  title        = {Real-Time Data Set for Macroeconomists},
  author       = {{Federal Reserve Bank of Philadelphia}},
  year         = {2025},
  note         = {Accessed August 27, 2025},
  institution = {Federal Reserve Bank of Philadelphia},
  howpublished = {\href{https://www.philadelphiafed.org/surveys-and-data/real-time-data-research/real-time-data-set-for-macroeconomists}{philadelphiafed.org/.../real-time-data-set-for-macroeconomists}}
}

@article{croushore:2001,
  title={A real-time data set for macroeconomists},
  author={Croushore, Dean and Stark, Tom},
  journal = {Journal of Econometrics},
  volume={105},
  number={1},
  pages={111--130},
  year={2001},
  publisher={Elsevier}
}

@article{chen:2023,
  title  = {Constructing quantiles via forecast errors: Theory and empirical evidence},
  author = {Chen, Zhi and Zhao, Long},
  year   = {2023},
  journal = {Available at SSRN 4371538}
}

@article{Bates:1969,
  title={The combination of forecasts},
  author={Bates, John M and Granger, Clive WJ},
  journal={Journal of the Operational Research Society},
  volume={20},
  number={4},
  pages={451--468},
  year={1969},
  publisher={Taylor \& Francis}
}

@article{Palm:1992,
  title={To combine or not to combine? {I}ssues of combining forecasts},
  author={Palm, Franz C and Zellner, Arnold},
  journal={Journal of Forecasting},
  volume={11},
  number={8},
  pages={687--701},
  year={1992},
  publisher={Wiley Online Library}
}

@article{Timmermann:2006,
title = {Forecast Combinations},
author = {Timmermann, Allan},
journal = {Handbook of Economic Forecasting},
publisher = {Elsevier},
volume = {1},
pages = {135-196},
year = {2006},
}

@article{newbold:1974,
  title={Experience with forecasting univariate time series and the combination of forecasts},
  author={Newbold, Paul and Granger, Clive WJ},
  journal={Journal of the Royal Statistical Society: Series A (General)},
  volume={137},
  number={2},
  pages={131--146},
  year={1974},
  publisher={Wiley Online Library}
}

@article{granger:1984,
  title={Improved methods of combining forecasts},
  author={Granger, Clive WJ and Ramanathan, Ramu},
  journal={Journal of Forecasting},
  volume={3},
  number={2},
  pages={197--204},
  year={1984},
  publisher={Wiley Online Library}
}

@article{Chan:1999,
  title={A dynamic factor model framework for forecast combination},
  author={Chan, Yeung Lewis and Stock, James H and Watson, Mark W},
  journal={Spanish Economic Review},
  volume={1},
  number={2},
  pages={91--121},
  year={1999},
  publisher={Springer}
}

@article{Diebold:1990,
  title={The use of prior information in forecast combination},
  author={Diebold, Francis X and Pauly, Peter},
  journal={International Journal of Forecasting},
  volume={6},
  number={4},
  pages={503--508},
  year={1990},
  publisher={Elsevier}
}

@article{Stock:2004,
  title={Combination forecasts of output growth in a seven-country data set},
  author={Stock, James H and Watson, Mark W},
  journal={Journal of Forecasting},
  volume={23},
  number={6},
  pages={405--430},
  year={2004},
  publisher={Wiley Online Library}
}

@article{Clemen:1986,
  title={Combining economic forecasts},
  author={Clemen, Robert T and Winkler, Robert L},
  journal={Journal of Business \& Economic Statistics},
  volume={4},
  number={1},
  pages={39--46},
  year={1986},
  publisher={Taylor \& Francis}
}

@article{Stock:2003,
  title={How did leading indicator forecasts perform during the 2001 recession?},
  author={Stock, James H and Watson, Mark W},
  journal={FRB Richmond Economic Quarterly},
  volume={89},
  number={3},
  pages={71--90},
  year={2003}
}

@article{Smith:2009,
  title={A simple explanation of the forecast combination puzzle},
  author={Smith, Jeremy and Wallis, Kenneth F},
  journal={Oxford Bulletin of Economics and Statistics},
  volume={71},
  number={3},
  pages={331--355},
  year={2009},
  publisher={Wiley Online Library}
}

@article{Jose:2014,
  title={Trimmed opinion pools and the crowd's calibration problem},
  author={Jose, Victor Richmond R and Grushka-Cockayne, Yael and Lichtendahl Jr, Kenneth C},
  journal={Management Science},
  volume={60},
  number={2},
  pages={463--475},
  year={2014},
  publisher={INFORMS}
}

@article{Lichtendahl:2022,
  title={Extremizing and antiextremizing in {B}ayesian ensembles of binary-event forecasts},
  author={Lichtendahl Jr, Kenneth C and Grushka-Cockayne, Yael and Jose, Victor Richmond and Winkler, Robert L},
  journal={Operations Research},
  volume={70},
  number={5},
  pages={2998--3014},
  year={2022},
  publisher={INFORMS}
}

@article{bunn:1975,
title={A {B}ayesian approach to the linear combination of forecasts},
  author={Bunn, Derek W},
  journal={Journal of the Operational Research Society},
  volume={26},
  number={2},
  pages={325--329},
  year={1975},
  publisher={Taylor \& Francis}
}

@book{james:2013,
  title={An introduction to statistical learning},
  author={James, Gareth and Witten, Daniela and Hastie, Trevor and Tibshirani, Robert and others},
  volume={112},
  year={2013},
  publisher={Springer}
}

@book{seber:2003,
  title={Linear regression analysis},
  author={Seber, George AF and Lee, Alan J},
  year={2003},
  publisher={John Wiley \& Sons}
}

@article{bessler:1988,
author = {Bessler, David A. and Chamberlain, Peter J.},
title = {Composite Forecasting with Dirichlet Priors},
journal = {Decision Sciences},
volume = {19},
number = {4},
pages = {771-781},
year = {1988}
}

@article{diebold:2023,
  title={On the aggregation of probability assessments: Regularized mixtures of predictive densities for Eurozone inflation and real interest rates},
  author={Diebold, Francis X and Shin, Minchul and Zhang, Boyuan},
  journal={Journal of Econometrics},
  volume={237},
  number={2},
  pages={105321},
  year={2023},
  publisher={Elsevier}
}

@article{chan:2018,
  title={Some theoretical results on forecast combinations},
  author={Chan, Felix and Pauwels, Laurent L},
  journal={International Journal of Forecasting},
  volume={34},
  number={1},
  pages={64--74},
  year={2018},
  publisher={Elsevier}
}

@article{claeskens:2016,
  title={The forecast combination puzzle: A simple theoretical explanation},
  author={Claeskens, Gerda and Magnus, Jan R and Vasnev, Andrey L and Wang, Wendun},
  journal={International Journal of Forecasting},
  volume={32},
  number={3},
  pages={754--762},
  year={2016},
  publisher={Elsevier}
}

@article{diebold:1987,
  title={Structural change and the combination of forecasts},
  author={Diebold, Francis X and Pauly, Peter},
  journal={Journal of Forecasting},
  volume={6},
  number={1},
  pages={21--40},
  year={1987},
  publisher={Wiley Online Library}
}

@article{blanc:2020,
  title={Bias--variance trade-off and shrinkage of weights in forecast combination},
  author={Blanc, Sebastian M and Setzer, Thomas},
  journal={Management Science},
  volume={66},
  number={12},
  pages={5720--5737},
  year={2020},
  publisher={INFORMS}
}

@book{hyndman:2018,
  title={Forecasting: principles and practice},
  author={Hyndman, Rob J and Athanasopoulos, George},
  year={2018},
  publisher={OTexts}
}

@article{makridakis:2022m5background,
  title={The {M5} competition: Background, organization, and implementation},
  author={Makridakis, Spyros and Spiliotis, Evangelos and Assimakopoulos, Vassilios},
  journal={International Journal of Forecasting},
  volume={38},
  number={4},
  pages={1325--1336},
  year={2022},
  publisher={Elsevier}
}

@article{makridakis2022m5undertainty,
  title={The {M5} uncertainty competition: Results, findings and conclusions},
  author={Makridakis, Spyros and Spiliotis, Evangelos and Assimakopoulos, Vassilios and Chen, Zhi and Gaba, Anil and Tsetlin, Ilia and Winkler, Robert L},
  journal={International Journal of Forecasting},
  volume={38},
  number={4},
  pages={1365--1385},
  year={2022}
}

@article{radchenko:2023,
  title={Too similar to combine? {O}n negative weights in forecast combination},
  author={Radchenko, Peter and Vasnev, Andrey L and Wang, Wendun},
  journal={International Journal of Forecasting},
  volume={39},
  number={1},
  pages={18--38},
  year={2023},
  publisher={Elsevier}
}

@article{li:2023,
  title={{B}ayesian forecast combination using time-varying features},
  author={Li, Li and Kang, Yanfei and Li, Feng},
  journal={International Journal of Forecasting},
  volume={39},
  number={3},
  pages={1287--1302},
  year={2023},
  publisher={Elsevier}
}

@article{oelrich:2024,
  title={Local prediction pools},
  author={Oelrich, Oscar and Villani, Mattias and Ankargren, Sebastian},
  journal={Journal of Forecasting},
  volume={43},
  number={1},
  pages={103--117},
  year={2024},
  publisher={Wiley Online Library}
}

@article{garg:2017,
  title={Minimum local distance density estimation},
  author={Garg, Vikram V and Tenorio, Luis and Willcox, Karen},
  journal={Communications in Statistics-Theory and Methods},
  volume={46},
  number={1},
  pages={148--164},
  year={2017},
  publisher={Taylor \& Francis}
}

@article{weiss:2018,
  author  = {Weiss, Christoph E. and Raviv, Eran and Roetzer, Gernot},
  title   = {Forecast Combinations in R using the {\tt ForecastComb} Package},
  journal = {The R Journal},
  year    = {2018},
  volume  = {10},
  number  = {2},
  pages   = {262--281}
}

@article{conflitti:2015,
  title={Optimal combination of survey forecasts},
  author={Conflitti, Cristina and De Mol, Christine and Giannone, Domenico},
  journal={International Journal of Forecasting},
  volume={31},
  number={4},
  pages={1096--1103},
  year={2015},
  publisher={Elsevier}
}

@article{post:2019,
  title={Robust optimization of forecast combinations},
  author={Post, Thierry and Karabat{\i}, Sel{\c{c}}uk and Arvanitis, Stelios},
  journal={International Journal of Forecasting},
  volume={35},
  number={3},
  pages={910--926},
  year={2019},
  publisher={Elsevier}
}

@article{croushore:1993,
  title={Introducing: the survey of professional forecasters},
  author={Croushore, Dean},
  journal={Business Review-Federal Reserve Bank of Philadelphia},
  volume={6},
  pages={3},
  year={1993}
}

@incollection{armstrong:2001a,
  title={Combining forecasts},
  author={Armstrong, J Scott},
  booktitle={Principles of forecasting: A handbook for researchers and practitioners},
  pages={417--439},
  year={2001},
  publisher={Springer}
}

@article{mannes:2014,
  title={The wisdom of select crowds.},
  author={Mannes, Albert E and Soll, Jack B and Larrick, Richard P},
  journal={Journal of personality and social psychology},
  volume={107},
  number={2},
  pages={276},
  year={2014},
  publisher={American Psychological Association}
}

@article{wang:2023,
  title={Forecast combinations: An over 50-year review},
  author={Wang, Xiaoqian and Hyndman, Rob J and Li, Feng and Kang, Yanfei},
  journal={International Journal of Forecasting},
  volume={39},
  number={4},
  pages={1518--1547},
  year={2023}
}

@article{scikit-learn,
 title={Scikit-learn: Machine Learning in {P}ython},
 author={Pedregosa, F. and Varoquaux, G. and Gramfort, A. and Michel, V.
         and Thirion, B. and Grisel, O. and Blondel, M. and Prettenhofer, P.
         and Weiss, R. and Dubourg, V. and Vanderplas, J. and Passos, A. and
         Cournapeau, D. and Brucher, M. and Perrot, M. and Duchesnay, E.},
 journal={Journal of Machine Learning Research},
 volume={12},
 pages={2825--2830},
 year={2011}
}

@article{thomson:2019,
  title={Combining forecasts: Performance and coherence},
  author={Thomson, Mary E and Pollock, Andrew C and {\"O}nkal, Dilek and G{\"o}n{\"u}l, M Sinan},
  journal={International Journal of Forecasting},
  volume={35},
  number={2},
  pages={474--484},
  year={2019},
  publisher={Elsevier}
}

@incollection{aiolfi:2010,
  author    = {Aiolfi, Marco and Capistr{\'a}n, Carlos and Timmermann, Allan},
  title     = {Forecast Combinations},
  booktitle = {The Oxford Handbook of Economic Forecasting},
  editor    = {Clements, Michael P. and Hendry, David F.},
  publisher = {Oxford University Press},
  address   = {Oxford},
  year      = {2011},
  pages     = {355--388}
}

@article{poncela:2011,
  title={Forecast combination through dimension reduction techniques},
  author={Poncela, Pilar and Rodr{\'\i}guez, Julio and S{\'a}nchez-Mangas, Roc{\'\i}o and Senra, Eva},
  journal={International Journal of Forecasting},
  volume={27},
  number={2},
  pages={224--237},
  year={2011},
  publisher={Elsevier}
}

@article{lahiri:2017,
  title={Online learning and forecast combination in unbalanced panels},
  author={Lahiri, Kajal and Peng, Huaming and Zhao, Yongchen},
  journal={Econometric Reviews},
  volume={36},
  number={1-3},
  pages={257--288},
  year={2017},
  publisher={Taylor \& Francis}
}

@article{elliott:2005,
  title={Optimal forecast combination under regime switching},
  author={Elliott, Graham and Timmermann, Allan},
  journal={International Economic Review},
  volume={46},
  number={4},
  pages={1081--1102},
  year={2005},
  publisher={Wiley Online Library}
}

@article{satopaa:2014logit,
  title={Combining multiple probability predictions using a simple logit model},
  author={Satop{\"a}{\"a}, Ville A. and Baron, Jonathan and Foster, Dean P. and Mellers, Barbara A. and Tetlock, Philip E. and Ungar, Lyle H.},
  journal={International Journal of Forecasting},
  volume={30},
  number={2},
  pages={344--356},
  year={2014},
  doi={10.1016/j.ijforecast.2013.09.009},
  publisher={Elsevier}
}

@article{palley:2019,
  title={Extracting the wisdom of crowds when information is shared},
  author={Palley, Asa B and Soll, Jack B},
  journal={Management Science},
  volume={65},
  number={5},
  pages={2291--2309},
  year={2019},
  publisher={INFORMS}
}

@article{prelec:2017,
  title={A solution to the single-question crowd wisdom problem},
  author={Prelec, Dra{\v{z}}en and Seung, H Sebastian and McCoy, John},
  journal={Nature},
  volume={541},
  number={7638},
  pages={532--535},
  year={2017},
  publisher={Nature Publishing Group UK London}
}

@article{atanasov:2017,
  title={Distilling the wisdom of crowds: Prediction markets vs. prediction polls},
  author={Atanasov, Pavel and Rescober, Phillip and Stone, Eric and Swift, Samuel A. and Servan-Schreiber, Emile and Tetlock, Philip and Ungar, Lyle and Mellers, Barbara},
  journal={Management Science},
  volume={63},
  number={3},
  pages={691--706},
  year={2017},
  doi={10.1287/mnsc.2015.2374},
  publisher={INFORMS}
}

@article{blanc:2016,
  title={When to choose the simple average in forecast combination},
  author={Blanc, Sebastian M and Setzer, Thomas},
  journal={Journal of Business Research},
  volume={69},
  number={10},
  pages={3951--3962},
  year={2016},
  publisher={Elsevier}
}

@incollection{bridle:1990,
  title={Probabilistic interpretation of feedforward classification network outputs, with relationships to statistical pattern recognition},
  author={Bridle, John S},
  booktitle={Neurocomputing: Algorithms, architectures and applications},
  pages={227--236},
  year={1990},
  publisher={Springer}
}

@article{bok:2018,
  title={Macroeconomic nowcasting and forecasting with big data},
  author={Bok, Brandyn and Caratelli, Daniele and Giannone, Domenico and Sbordone, Argia M and Tambalotti, Andrea},
  journal={Annual Review of Economics},
  volume={10},
  number={1},
  pages={615--643},
  year={2018},
  publisher={Annual Reviews}
}

@article{stark:2010,
  title={Realistic evaluation of real-time forecasts in the Survey of Professional Forecasters},
  author={Stark, Tom},
  journal={Federal Reserve Bank of Philadelphia Research Rap, Special Report},
  volume={1},
  year={2010}
}
\clearpage
\setcounter{section}{0}
\renewcommand\thefigure{A.\arabic{figure}}
\setcounter{table}{0}
\setcounter{figure}{0}
\renewcommand{\thetable}{A.\arabic{table}}
\renewcommand\thesection{\Alph{section}}
\section{Online Appendix} \label{sec:appendix}

The online appendix contains additional theoretical results and proofs, detailed derivations of the Bayesian specifications, supplementary simulation results, and additional empirical tables.

\subsection{Closed-Form Optimal Weights for the Identity-$L^2$ Model} \label{Appendix: Example_closed-form Solution for Weights}

\begin{example} \label{ex.close-form}
When $f(x)=x$ and   $\Phi(\bw) = \sum_{i=1}^{k} (w_i - s_i)^2$,  assume the prior weights satisfy $s_i > 0$ and $\sum_{i=1}^k s_i = 1$, and take $\lambda > 0$.   The optimal weights $\bw^*$ obtained by \eqref{eq:finalobj} admit the following closed-form expression.   Here, $\gamma$ denotes the Lagrange multiplier associated with the constraint $\sum_{i=1}^k w_i = 1$: 
\begin{equation}
w_i^* = \left[\sum_{j=1}^k\frac{s_j\left(\mu_j-\E[\mu]\right)^2}{\lambda+ \left(\mu_j-\E[\mu]\right)^2}\cdot \left\{ {\sum_{j=1}^k \frac{1}{\lambda+ \left(\mu_j-\E[\mu]\right)^2}}\right\}^{-1}+ \lambda \cdot s_i\right]\cdot \left[\lambda + (\mu_i-\E[\mu])^2\right]^{-1}.
\label{eq:closed-form for the weights at identity-L2}       
\end{equation}

\end{example}

The closed-form solution in \eqref{eq:closed-form for the weights at identity-L2} shows that each expert’s weight is governed by two forces: deviation of the current forecast from the consensus mean and shrinkage toward the prior weight $s_i$, with the balance controlled by $\lambda$.  The expression also includes a term shared by all experts that reflects overall disagreement among the forecasts and serves as a common baseline in the numerator. When experts' current forecasts are more dispersed, the weights become more sensitive to each expert's deviation from the consensus mean and tend to place relatively more mass on experts whose forecasts are closer to the mean. When dispersion is small, the weights are less differentiated by current deviations and closer to the prior weights $\bs$. When $\lambda$ is small, the optimal weights tend to be proportional to the inverse squared deviation from the consensus mean; when $\lambda$ is large, $\bw^*$ approaches the prior weights $\bs$, reflecting historical reliability.

\proof{Proof.}
Under the objective function with an $L^2$ penalty $\Phi(\bw) = \sum_{i=1}^{k} (w_i - s_i)^2$  and with the identity function $f(x)=x$, we use the method of Lagrange multipliers $a_i \geq 0$ for each $w_i \geq 0$ and $\gamma$ for the constraint $\sum_{i=1}^k w_i = 1$. The Lagrangian $\mathcal{L}$ is 
\begin{equation*} 
\mathcal{L}\left(\bw, \gamma, \boldsymbol{a} \right)=\sum_{i=1}^k w_i^2\left(\mu_i-\E[\mu]\right)^2+\lambda \sum_{i=1}^{k} (w_i - s_i)^2-\gamma\left(\sum_{i=1}^k w_i-1\right) - \sum_{i=1}^k a_i w_i.    
\end{equation*}

Taking the partial derivatives of $\mathcal{L}$ with respect to each $w_i$ and $\gamma$ and setting them equal to zero gives:
\begin{eqnarray}
\begin{cases}
\displaystyle
\frac{\partial \mathcal{L}}{\partial w_i} = 2 w_i\left(\mu_i-\E[\mu]\right)^2+2 \lambda (w_i - s_i)-\gamma - a_i =0, \quad \text{for all } i ; \nonumber \\
\displaystyle
\frac{\partial \mathcal{L}}{\partial \gamma} = \sum_{i=1}^k w_i-1=0; \nonumber \\
\displaystyle
w_i \geq 0, \, a_i \geq 0, \, a_i w_i = 0, \quad \text{for all } i.\nonumber 
\end{cases}
\end{eqnarray} 
Because $a_i w_i = 0$, if $w_i > 0$, then $a_i = 0$; otherwise, $w_i = 0$ and $a_i \geq 0$. Hence,
\begin{eqnarray}
w_i =\frac{\left(\frac{\gamma}{2}+\lambda \cdot s_i\right)_{+}}{\lambda + (\mu_i-\E[\mu])^2},  &&     
\sum_{i=1}^k \frac{\left(\frac{\gamma}{2}+\lambda \cdot s_i\right)_{+}}{\lambda + (\mu_i-\E[\mu])^2} = 1,
\nonumber
\end{eqnarray}
where $(x)_{+} = \max(x,0)$. Because the left–hand side is continuous and strictly increasing in $\gamma/2$, a unique solution exists. 

Under $s_i > 0$ and $\sum_{i=1}^k s_i = 1$ with $\lambda > 0 $, the equality-constrained stationary point has $\gamma\geq 0$. Indeed, if $w_i = (\gamma/2+\lambda s_i) / \left(\lambda + (\mu_i-\E[\mu])^2\right)$, then 
$$
1=\sum_{i=1}^k w_i
=\frac{\gamma}{2}\sum_{i=1}^k\frac{1}{\lambda+(\mu_i-\E[\mu])^2}+\lambda\sum_{i=1}^k\frac{s_i}{\lambda+(\mu_i-\E[\mu])^2},
$$
so
$$
\frac{\gamma}{2}=\frac{1-\lambda\sum_{i=1}^k\frac{s_i}{\lambda+(\mu_i-\E[\mu])^2}}{\sum_{i=1}^k\frac{1}{\lambda+(\mu_i-\E[\mu])^2}}
\ge 0,
$$
because $\sum_i s_i=1$ and $\lambda/\left(\lambda+(\mu_i-\E[\mu])^2\right)\leq 1$ for all $i$. Therefore $\lambda s_i + \gamma/2 > 0$ for every $i$, and the $(\cdot)_+$ operator can be dropped (equivalently, $a_i=0$ for all $i$).

Under the assumptions of this example, we have $\lambda s_i + \gamma/2>0$, hence the nonnegative constraint is inactive. The problem reduces to the equality-constrained formulation without the nonnegative constraint as follows: 
\begin{eqnarray}
w_i =\frac{\frac{\gamma}{2}+\lambda \cdot s_i}{\lambda + (\mu_i-\E[\mu])^2},  &&     
\gamma = 2 \frac{\sum_{i=1}^k\frac{s_i\left(\mu_i-\E[\mu]\right)^2}{\lambda+ \left(\mu_i-\E[\mu]\right)^2}}{\sum_{i=1}^k \frac{1}{\lambda+ \left(\mu_i-\E[\mu]\right)^2}}.
\nonumber
\end{eqnarray}

To substitute $\gamma$ back to $w_i$, we get the closed-form expression of $w_i$ when $s_i > 0$ and $\sum_{i=1}^k s_i = 1$ with $\lambda > 0 $:
\begin{equation*}
w_i^* = \left[\sum_{j=1}^k\frac{s_j\left(\mu_j-\E[\mu]\right)^2}{\lambda+ \left(\mu_j-\E[\mu]\right)^2}\cdot \left\{ {\sum_{j=1}^k \frac{1}{\lambda+ \left(\mu_j-\E[\mu]\right)^2}}\right\}^{-1}+ \lambda \cdot s_i\right]\cdot \left[\lambda + (\mu_i-\E[\mu])^2\right]^{-1}.
\end{equation*}
\Halmos
\endproof

\subsection{Technical Assumptions
} \label{sec:assump}

Below is the list of assumptions we need to prove Proposition \ref{prop:mean_rate} and Theorem \ref{thm:cvg_rate}.  Specifically, Assumption \ref{assumption:mu}.(a) is used for Proposition \ref{prop:mean_rate} while Assumptions \ref{assumption:mu} and \ref{assumption:lambda} are needed for Theorem \ref{thm:cvg_rate}. 

\begin{assumption}\label{assumption:mu}
$\{\mu_i, i=1,\ldots, k\}$  are independent and identically distributed (i.i.d.) copies of $\mu$, a continuous random variable such that $\E[\mu]=\E[y]$ and $\Var[\mu] = \sigma_0^2 < \infty$. In addition, 
\begin{enumerate}[label=(\alph*)]
    \item $\{\mu_i, i=1,\ldots, k\}$ are independent of $y$. 
    \item Define $X=|\mu - \E[\mu]|$ with pdf $r$, invertible CDF $R$ and quantile function $q$.  
    Assuming $r(0)>0$ and $r$ differentiable a.e., $r'$ and $r$ are continuous at $0$, and $q''$ satisfies the following tail condition: there is an integer $v>0$ and a constant $C>0$, such that $|q''(x)|\leq C/(1-x)^v$ a.e. on $x\in [0,1]$.
   \end{enumerate}
\end{assumption}

\begin{assumption} \label{assumption:lambda} 
The order of $\lambda$ satisfies the following conditions: 
\begin{enumerate}[label=\roman*)]
\item If $f(x) = x$ and $\Phi_\mathrm{L^2}(\bw) = \sum_{i=1}^k (w_i - s_i)^2$, let $\lambda = \mathcal{O}(1/k^2)$;
\item If $f(x) = x$ and $\Phi_{\mathrm{Ent}}(\bw) = \sum_{i=1}^{k} s_i \cdot\log  (1/w_i)$, let $\lambda =  \mathcal{O}(1/(k^2\log k))$;
\item If $f(x) = \log(x)$ and  $\Phi_\mathrm{L^2}(\bw) = \sum_{i=1}^k (w_i - s_i)^2$, let $\lambda = \mathcal{O}(1)$;
\item If $f(x) = \log(x)$ and $\Phi_{\mathrm{Ent}}(\bw) = \sum_{i=1}^{k} s_i \cdot\log  (1/w_i)$, let $\lambda = \mathcal{O}(1/\log k)$.
\end{enumerate}
Moreover, we assume $s_i > 0$ for all $1 \leq i \leq k$. 
\end{assumption}

Assumption \ref{assumption:mu}.(a)  implies that $\bw^*$ obtained by \eqref{eq:finalobj} are independent of $y$ since they are based only on $\bmu$. This assumption also implies that individual forecast errors $\{\mu_i-y, i=1,\ldots, k\}$ are correlated,  which is consistent with work that allows for dependence \citep[e.g.,][]{Smith:2009,thomson:2019,blanc:2020,timmermann:2006,wang:2023}.  Since $\{\mu_i, i=1,\ldots, k\}$, as individual predictions at the \emph{current} time, are often made by experts independently, while $y$ is a quantity in the \emph{future} yet to be observable at the current time, it is more reasonable to assume independent individual forecasts than independent forecast errors.  Assumption \ref{assumption:mu}.(b) is needed to prove Lemma \ref{lem:Xmin}.  This is a weak condition adopted by \cite{garg:2017}. Essentially, it requires the density of $\mu$ to be positive at $\E[\mu]$, which holds for many common distributions, such as the normal, exponential, gamma, beta, and Laplace distributions, among others. Assumption \ref{assumption:lambda} specifies the orders of the tuning parameter $\lambda$ needed to achieve the rate in Theorem \ref{thm:cvg_rate} across four combinations of the transformation function $f$ and penalty $\Phi$. Assumption \ref{assumption:lambda} also requires all $s_i$ involved in both $\Phi_\mathrm{L^2}(\bw)$ and $\Phi_{\mathrm{Ent}}$ to be positive.

\subsection{Proof of Proposition ~\ref{prop:mean_rate}} \label{proof:mean_rate}

\proof{Proof.}
Let $\Var[y] = \sigma_y^2$. Given Assumption \ref{assumption:mu},  we have
\begin{eqnarray*}
\E[(\bar{\mu} - y)^2]
& = & \E[(\bar{\mu} - \E[\mu] + \E[\mu] - y)^2] \nonumber \\
& = & \E[(\bar{\mu} - \E[\mu])^2]+ \E[(\E[\mu] - y)^2] \nonumber + 2\cdot\E[(\bar{\mu} - \E[\mu])(\E[\mu] - y)] \\
& = & \Var[\bar{\mu}] + \Var[y] = \frac{\sigma_0^2}{k} + \sigma_y^2,
\end{eqnarray*}
where $\sigma_0^2 = Var[\mu]$ as defined in Assumption 1, and $\E[(\bar{\mu} - \E[\mu])(\E[\mu] - y)] = 0$ due to the independence between y and $\bmu$. Thus,
\begin{equation*}
\E[(\bar{\mu} - y)^2] = \mathcal{O}\left(1/k\right) +  \sigma_y^2.
\end{equation*}

\Halmos
\endproof

\subsection{Proof of Theorem~\ref{thm:cvg_rate}} \label{proof:cvg_rate}

We first provide a few lemmas needed to prove Theorem~\ref{thm:cvg_rate}.  Hereafter, we denote $X_i= |\mu_i - \E[\mu]|, i=1,\ldots, k$, $X_{(1)}=\min_{1\leq i \leq k} X_i$, and $i_{(1)} := \arg\min_{1\le i\le k} X_i$. 

Next, define the following set of weights $\tilde \bw = (\tilde w_1, \ldots, \tilde w_k)$ such that $\sum_{i=1}^k 
\tilde w_i=1$
where
$$\tilde w_i=\begin{cases}
 1 - 1/k, & i = i_{(1)};\\
1/(k(k-1)), & i \neq i_{(1)}.
\end{cases}
$$

\begin{lemma}[Corollary 1.(iii) of \cite{garg:2017}]\label{lem:Xmin}

Under Assumption~\ref{assumption:mu}, we have 
$$
\lim_{k\to \infty} k^2 \E[X_{(1)}^2] = \frac{2}{r^2(0)}.
$$
\end{lemma}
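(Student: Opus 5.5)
The plan is to compute $\E[X_{(1)}^2]$ through the survival function of the minimum, rescale by $k$, and pass to the limit by dominated convergence. The only facts needed are that $r(0)>0$, that $r$ is continuous at $0$, and that $\E[X^2]=\sigma_0^2<\infty$ (all from Assumption~\ref{assumption:mu}). The tail condition on $q''$ in Assumption~\ref{assumption:mu}.(b) is not needed for this leading-order limit; it would only matter for finer expansions such as those in \cite{garg:2017}.

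\textbf{Representation.} Since the $X_i$ are i.i.d.\ with CDF $R$, we have $P(X_{(1)}>x)=(1-R(x))^k$. Hence
\begin{equation*}
\E[X_{(1)}^2]=\int_0^\infty 2x\,(1-R(x))^k\,dx .
\end{equation*}
Substituting $x=t/k$ gives
\begin{equation*}
k^2\E[X_{(1)}^2]=\int_0^\infty 2t\,\bigl(1-R(t/k)\bigr)^k\,dt .
\end{equation*}

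\textbf{Pointwise limit.} Because $R(0)=0$ and $r$ is continuous at $0$, we have $R(t/k)=r(0)t/k+o(1/k)$ for each fixed $t$. Therefore $(1-R(t/k))^k\to e^{-r(0)t}$. The candidate limit is
\begin{equation*}
\int_0^\infty 2t\,e^{-r(0)t}\,dt=\frac{2}{r^2(0)}.
\end{equation*}

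\textbf{Domination (the main step).} I split the integral at $t=\delta k$.

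On $t\le \delta k$: choose $\delta>0$ so that $r(x)\ge c:=r(0)/2$ on $[0,\delta]$, which is possible by continuity and positivity of $r$ at $0$. Then $R(t/k)\ge ct/k$. Using $1-u\le e^{-u}$, we get $(1-R(t/k))^k\le e^{-ct}$. So the integrand on this range is dominated by $2t e^{-ct}$, which is integrable and independent of $k$.

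On $t>\delta k$: I return to the original scale. Set $\rho:=1-R(\delta)<1$; this is strictly less than one because $R(\delta)\ge c\delta>0$. For $x\ge\delta$, $(1-R(x))^k\le \rho^{k-1}(1-R(x))$. This gives
\begin{equation*}
k^2\int_\delta^\infty 2x(1-R(x))^k\,dx\le k^2\rho^{k-1}\int_0^\infty 2x(1-R(x))\,dx=k^2\rho^{k-1}\E[X^2].
\end{equation*}
Since $\E[X^2]=\sigma_0^2<\infty$, the right-hand side tends to $0$ geometrically.

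\textbf{Conclusion.} Dominated convergence applied to the first piece (i.e., to the integrand times $\mathbb{I}\{t\le\delta k\}$), together with the vanishing of the second piece, yields $k^2\E[X_{(1)}^2]\to 2/r^2(0)$.

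The step I expect to require the most care is the domination on the tail range $t>\delta k$. There the pointwise bound $e^{-ct}$ is unavailable, and one must use the finite-variance assumption, through $\rho<1$, to absorb the factor $k^2$.
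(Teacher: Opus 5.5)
Your proof is correct, and it takes a genuinely different route from the paper. The paper gives no argument of its own: it imports Corollary~1.(iii) of \cite{garg:2017} and, with it, that reference's hypotheses as Assumption~\ref{assumption:mu}.(b), namely differentiability of $r$, continuity of $r'$ at $0$, and the growth bound $|q''(x)|\le C/(1-x)^v$.

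Those conditions are what a quantile-expansion argument needs. $X_{(1)}$ has the law of $q(U_{(1)})$ with $U_{(1)}\sim\mathrm{Beta}(1,k)$. One Taylor-expands $q$ about $0$ using $q(0)=0$ and $q'(0)=1/r(0)$, so the leading term is $\E[U_{(1)}^2]/r^2(0)=2/\{(k+1)(k+2)r^2(0)\}$. The bound on $q''$ then controls the remainder coming from $U_{(1)}$ near $1$, i.e., from heavy tails of $X$. That route buys a quantified expansion with explicit higher-order terms in $1/k$.

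Your survival-function argument, with the split at $t=\delta k$, yields only the leading-order limit. That is exactly what the lemma states, and it is all that Corollary~\ref{coro:Xmin} and Lemma~\ref{lem:expectation_w_star_diagonal} use. It requires only $r(0)>0$, continuity of $r$ at $0$, and $\sigma_0^2<\infty$. So it shows that the conditions on $r'$ and $q''$ in Assumption~\ref{assumption:mu}.(b) can be dropped for Theorem~\ref{thm:cvg_rate}.

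One small correction to your closing remark. On the tail piece, the factor $k^2$ is absorbed by $\rho^{k-1}$, and $\rho<1$ comes from $r(0)>0$ via $R(\delta)\ge c\delta$. The finite variance is used only to make $\int_0^\infty 2x\{1-R(x)\}\,dx=\sigma_0^2$ finite.
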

Lemma \ref{lem:Xmin} provides the order of $\E[X_{(1)}^2]$, which leads to the following corollary.

\begin{corollary}\label{coro:Xmin}
Under Assumption~\ref{assumption:mu}, we have 
\begin{equation*}
\lim_{k\to \infty }k^2\E\left[\sum_{i=1}^k \tilde{w}_i^2X_i^2\right] = \frac{2}{r^2(0)}. 
\end{equation*}

\end{corollary}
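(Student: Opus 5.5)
We need to prove Corollary: $\lim k^2 \E[\sum_i \tilde w_i^2 X_i^2] = 2/r^2(0)$.

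The plan is to split the sum according to whether the index is the minimizer $i_{(1)}$, and to show that the minimizer term carries the whole limit while the remaining terms vanish after scaling by $k^2$. By the definition of $\tilde\bw$,
\begin{equation*}
\sum_{i=1}^k \tilde w_i^2 X_i^2 = \Bigl(1-\frac{1}{k}\Bigr)^2 X_{(1)}^2 + \frac{1}{k^2(k-1)^2}\sum_{i\neq i_{(1)}} X_i^2 .
\end{equation*}
Here $i_{(1)}$ is almost surely unique because $\mu$ is continuous under Assumption~\ref{assumption:mu}. We multiply by $k^2$ and take expectations term by term.

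For the first term, Lemma~\ref{lem:Xmin} gives $k^2\E[X_{(1)}^2]\to 2/r^2(0)$. Since $(1-1/k)^2\to 1$, the scaled first term converges to $2/r^2(0)$.

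For the second term, we bound the partial sum by the full sum: $\sum_{i\neq i_{(1)}} X_i^2 \le \sum_{i=1}^k X_i^2$. By the identical distribution in Assumption~\ref{assumption:mu}, $\E[\sum_{i=1}^k X_i^2] = k\,\E[(\mu-\E[\mu])^2] = k\sigma_0^2$, which is finite since $\sigma_0^2<\infty$. This yields
\begin{equation*}
0\le k^2\cdot\frac{1}{k^2(k-1)^2}\,\E\Bigl[\sum_{i\neq i_{(1)}} X_i^2\Bigr] \le \frac{k\sigma_0^2}{(k-1)^2}\to 0 .
\end{equation*}
Bounding by the full sum is what avoids any difficulty from the random index $i_{(1)}$ being correlated with the $X_i$. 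Adding the two limits gives the claim.

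Both limits are elementary, so no step is a real obstacle. The only substantive input is Lemma~\ref{lem:Xmin}, which depends on the density and tail conditions in Assumption~\ref{assumption:mu}(b). The remaining check is that all expectations are finite, which follows from $\sigma_0^2<\infty$ together with $X_{(1)}^2\le X_1^2$.
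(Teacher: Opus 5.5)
Your proof is correct and follows essentially the same route as the paper. Both split off the minimizer term, apply Lemma~\ref{lem:Xmin} to $k^2(1-1/k)^2\E[X_{(1)}^2]$, and show that the remaining $k-1$ terms vanish after scaling by $k^2$. The only difference is cosmetic: the paper computes the remainder exactly as $(k\sigma_0^2-\E[X_{(1)}^2])/\{k^2(k-1)^2\}$, whereas you bound it by $k\sigma_0^2/\{k^2(k-1)^2\}$, which works equally well.
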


\proof{Proof.}
By Assumption \ref{assumption:mu}, we have $E[X_i^2]=\Var[\mu_i]=\sigma_0^2$.  Then, 
\begin{eqnarray*}
\E\left[\sum_{i=1}^k \tilde{w}_i^2X_i^2\right] 
&=& \E[\tilde{w}_{i_{(1)}}^2X_{(1)}^2] + \E\left[\sum_{i\neq i_{(1)}}\tilde{w}_i^2X_i^2\right] = \left(1 - \frac{1}{k} \right)^2\E[X_{(1)}^2] + \sum_{i\neq i _{(1)}}\frac{1}{k^2(k-1)^2}\E[X_i^2] \nonumber \\
& = & \left(1 - \frac{1}{k}\right)^2 \E[X_{(1)}^2]	 + \frac{k\sigma_0^2-\E[X_{(1)}^2]}{k^2(k-1)^2} = \frac{(k-1)^4-1}{k^2 (k-1)^2} \cdot \E[X_{(1)}^2] + \frac{\sigma_0^2}{k (k-1)^2}. 
\label{eq:exp_tidle_w_composite}
\end{eqnarray*}
Then by Lemma \ref{lem:Xmin},  we conclude
$$
\lim_{k\to \infty }k^2\E\left[\sum_{i=1}^k \tilde{w}_i^2 X_i^2\right] = \frac{2}{r^2(0)}. 
$$
\Halmos
\endproof

In Lemma \ref{lem:penalty_diff} below, we establish the upper bounds for $\Phi_\mathrm{L^2}(\tilde{\bw}) - \Phi_\mathrm{L^2}(\bw^*)$ and $\Phi_{\mathrm{Ent}}(\tilde{\bw}) - \Phi_{\mathrm{Ent}}(\bw^*)$ respectively, where $\bw^*$ are the optimal weights obtained by \eqref{eq:finalobj} and $\tilde{\bw}$ are the weights defined right above Lemma \ref{lem:Xmin}.

\begin{lemma}\label{lem:penalty_diff}
Under Assumption \ref{assumption:lambda}, we have 
\begin{eqnarray}
&& \Phi_\mathrm{L^2}(\tilde{\bw}) - \Phi_\mathrm{L^2}(\bw^*)	< 3, \label{ineq:L2_diff} \\
\text{and} && \Phi_{\mathrm{Ent}}(\tilde{\bw}) - \Phi_{\mathrm{Ent}}(\bw^*)	 < 2\log k. \label{ineq:Dir_diff}
\end{eqnarray}
\end{lemma}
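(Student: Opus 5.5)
The plan is to bound each difference by an upper bound on $\Phi(\tilde\bw)$ together with the lower bound $\Phi(\bw^*)\ge 0$. Both penalties are nonnegative on the feasible set, so $\bw^*$ never enters except through its feasibility. I will use the standing conventions of Section~\ref{sec:framework}: $s_i\ge 0$ and $\sum_i s_i=1$, with $s_i>0$ from Assumption~\ref{assumption:lambda}. I will also use $\tilde w_i\in(0,1)$ and $\sum_i\tilde w_i=1$, which follow from the definition of $\tilde\bw$ when $k\ge 2$.

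For \eqref{ineq:L2_diff}, first note $\Phi_{\mathrm{L^2}}(\bw^*)=\sum_i(w_i^*-s_i)^2\ge 0$. For any nonnegative $a,b$ we have $(a-b)^2=a^2+b^2-2ab\le a^2+b^2$. Hence
\[
\Phi_{\mathrm{L^2}}(\tilde\bw)\le \sum_{i=1}^k\tilde w_i^2+\sum_{i=1}^k s_i^2\le \Big(\sum_{i=1}^k\tilde w_i\Big)^2+\Big(\sum_{i=1}^k s_i\Big)^2=2,
\]
using $\sum_i x_i^2\le(\sum_i x_i)^2$ for nonnegative $x_i$. It follows that $\Phi_{\mathrm{L^2}}(\tilde\bw)-\Phi_{\mathrm{L^2}}(\bw^*)\le 2<3$.

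For \eqref{ineq:Dir_diff}, the lower bound is again immediate. Every feasible $\bw^*$ with positive entries has $w_i^*\le 1$, so $\log(1/w_i^*)\ge 0$, and since $s_i>0$ this gives $\Phi_{\mathrm{Ent}}(\bw^*)\ge 0$. For the upper bound, $\Phi_{\mathrm{Ent}}(\tilde\bw)$ is a convex combination, with weights $s_i$, of the two values $\log\{k/(k-1)\}$ (at $i_{(1)}$) and $\log\{k(k-1)\}$ (at all other $i$). It is therefore at most the larger of the two. Since $k\ge 2$ gives $k-1\ge 1$, we have $k/(k-1)\le k(k-1)$, so the maximum is $\log\{k(k-1)\}$. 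Finally $\log\{k(k-1)\}<\log k^2=2\log k$, which yields the strict inequality.

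No step is a real obstacle. The only points needing care are to use that $\bs$ lies in the simplex, which the convex-combination step requires, and to check the boundary case $k=2$. In that case both logarithms equal $\log 2$, which is still strictly below $2\log 2$. The orders of $\lambda$ in Assumption~\ref{assumption:lambda} play no role here. They matter only later, when these bounds are multiplied by $\lambda$ in the basic inequality comparing the objective at $\bw^*$ and at $\tilde\bw$.
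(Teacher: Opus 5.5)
Your proof is correct and is essentially the paper's argument. For the entropy penalty the paper likewise drops $\Phi_{\mathrm{Ent}}(\bw^*)>0$ and bounds $\Phi_{\mathrm{Ent}}(\tilde{\bw})$, replacing $s_{i_{(1)}}$ and $\sum_{i\neq i_{(1)}}s_i$ by $1$ to get exactly $2\log k$; for the $L^2$ penalty it expands the difference and bounds the four resulting terms by $1$, $0$, $0$ and $2$, instead of using $\Phi_{\mathrm{L^2}}(\bw^*)\ge 0$ as you do, and your convex-combination and $\sum_i x_i^2\le(\sum_i x_i)^2$ steps simply sharpen the same elementary bounds to $\log\{k(k-1)\}$ and $2$.
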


\proof{Proof.} 

1) $L^2$ penalty: Recall that the $L^2$ penalty is defined as  $\Phi_\mathrm{L^2}(\bw) = \sum_{i=1}^k (w_i - s_i)^2 = \sum_{i=1}^k w_i^2 + \sum_{i=1}^k s_i^2 - 2\sum_{i=1}^k s_i w_i$.  Since all  $0 < s_i < 1$,  $\sum_{i=1}^k w_i^*=1$ and $\sum_{i=1}^k \tilde{w}_i^2 < 1$,  we have
\begin{eqnarray*}
\Phi_\mathrm{L^2}(\tilde{\bw}) - \Phi_\mathrm{L^2}(\bw^*)	
&=& \sum_{i=1}^k \tilde{w}_i^2 - 2\sum_{i=1}^ks_i\tilde{w}_i - \sum_{i=1}^k w_i^{*2} + 2\sum_{i=1}^ks_iw_i^* \nonumber \\
& < & \sum_{i=1}^k \tilde{w}_i^2 + 0 + 0 + 2  < 3.
\label{eq:L2_penalty_inequality}
\end{eqnarray*}

2) Entropy penalty: Recall that the entropy penalty is given by $\Phi_{\mathrm{Ent}}(\bw) = \sum_{i=1}^{k} s_i \log(1/w_i)$, and  
$$\tilde w_i=\begin{cases}
 1 - 1/k, & i = i_{(1)};\\
1/(k(k-1)), & i \neq i_{(1)}. 
\end{cases}
$$
Since all $0 < s_i < 1$  and $\sum_{i=1}^k s_i = 1$, we have
\begin{eqnarray*}
   \Phi_{\mathrm{Ent}}(\tilde{\bw}) - \Phi_{\mathrm{Ent}}(\bw^*)	 
&=& \sum_{i=1}^{k}s_i\log \left( \frac{1}{\tilde{w}_i} \right) - \sum_{i=1}^{k}s_i\log \left(\frac{1}{w_i^*} \right) 
<  \sum_{i=1}^{k}s_i\log \left( \frac{1}{\tilde{w}_i} \right) \\
&=& s_{i_{(1)}}\log \left( \frac{1}{\tilde{w}_{i_{(1)}}}\right) + \sum_{i \neq i_{(1)}} s_i \log \left( \frac{1}{\tilde{w}_i} \right)\\
&=& s_{i_{(1)}}\log \left( \frac{k}{k-1}\right) + \sum_{i \neq i_{(1)}} s_i \log \left( k(k-1) \right)\\
& < & \log \left( \frac{k}{k-1}\right) + \log \left( k(k-1) \right) = 2 \log k.
\end{eqnarray*}

\Halmos
\endproof

The final lemma below characterizes the order of the second moment of $\sum_{i=1}^k w_i^{*}X_i$.

\begin{lemma}\label{lem:expectation_w_star_diagonal}
Under Assumptions \ref{assumption:mu} and \ref{assumption:lambda},  we have
\begin{equation*}
\E\left[\sum_{i=1}^k w_i^{* 2}X_i^2\right] =\mathcal{O}\left(\frac{1}{k^2}\right).
\end{equation*}
\end{lemma}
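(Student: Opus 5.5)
The plan is to compare the optimal weights $\bw^*$ with the feasible comparison weights $\tilde\bw$ defined above Lemma~\ref{lem:Xmin}. Write $V(\bw)=\sum_{i=1}^k w_i^2X_i^2$. Optimality of $\bw^*$ in \eqref{eq:finalobj} gives
\[
f\{V(\bw^*)\}+\lambda\Phi(\bw^*)\le f\{V(\tilde\bw)\}+\lambda\Phi(\tilde\bw).
\]
Since $\tilde w_i>0$ and $\sum_i\tilde w_i=1$, the weights $\tilde\bw$ are feasible, and $\Phi_{\mathrm{Ent}}(\tilde\bw)$ is finite. Rearranging yields
\[
f\{V(\bw^*)\}\le f\{V(\tilde\bw)\}+\lambda\{\Phi(\tilde\bw)-\Phi(\bw^*)\}.
\]
Lemma~\ref{lem:penalty_diff} bounds the bracket deterministically, by $3$ for $\Phi_{\mathrm{L^2}}$ and by $2\log k$ for $\Phi_{\mathrm{Ent}}$. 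With this in hand I treat the four cases of Assumption~\ref{assumption:lambda} separately.

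\textbf{Identity cases.} When $f(x)=x$, the inequality reads $V(\bw^*)\le V(\tilde\bw)+3\lambda$ or $V(\bw^*)\le V(\tilde\bw)+2\lambda\log k$. Taking expectations, Corollary~\ref{coro:Xmin} gives $\E[V(\tilde\bw)]=\mathcal{O}(1/k^2)$. The assumed orders $\lambda=\mathcal{O}(1/k^2)$ and $\lambda=\mathcal{O}(1/(k^2\log k))$ make the additive terms $\mathcal{O}(1/k^2)$, which settles these cases directly.

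\textbf{Log cases.} When $f(x)=\log x$, exponentiating gives $V(\bw^*)\le V(\tilde\bw)\,e^{\lambda c_k}$, where $c_k=3$ or $c_k=2\log k$. Under $\lambda=\mathcal{O}(1)$ in case iii), or $\lambda=\mathcal{O}(1/\log k)$ in case iv), the factor $e^{\lambda c_k}$ is bounded by a constant $M$ that does not depend on $k$. Hence $\E[V(\bw^*)]\le M\,\E[V(\tilde\bw)]=\mathcal{O}(1/k^2)$, again by Corollary~\ref{coro:Xmin}.

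\textbf{Points to check.} The main thing to verify is that the objective is well defined at both $\tilde\bw$ and $\bw^*$. For the log transformation we need $V>0$, which holds almost surely because $\mu$ is continuous, so $X_i>0$ a.s. We also need the entropy penalty at $\bw^*$ to be finite, which holds since $w_i^*>0$. Finally, the bounds in Lemma~\ref{lem:penalty_diff} are deterministic and uniform in $\bmu$, so they pass through the expectation without any integrability issue. I expect no step to be hard; the only delicate part is keeping the $\lambda$ orders consistent with the uniform boundedness of $e^{\lambda c_k}$ in the log cases.
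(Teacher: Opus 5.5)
Your proposal is correct and follows essentially the same route as the paper: compare $\bw^*$ with the feasible $\tilde{\bw}$ via optimality, bound the penalty gap by Lemma~\ref{lem:penalty_diff}, and handle the identity cases additively and the log cases multiplicatively through $e^{3\lambda}$ or $e^{2\lambda\log k}$, with Corollary~\ref{coro:Xmin} supplying $\E[\sum_i \tilde w_i^2X_i^2]=\mathcal{O}(1/k^2)$. The paper leaves your well-definedness checks implicit (that $V>0$ almost surely and that the entropy penalty is finite at $\bw^*$), and they do not change the argument.
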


\proof{Proof.}

By \eqref{eq:finalobj}, for the optimal  set of weights $\bw^*$ and any other set of weights $\bw$, we have
\begin{eqnarray*}
&& f\left(\sum_{i=1}^k w_i^{* 2}X_i^2\right)+\lambda \Phi\left(\bw^*\right) \leq f\left(\sum_{i=1}^k w_i^2X_i^2\right)+\lambda \Phi(\bw). 
\end{eqnarray*}
Letting $\bw = \tilde \bw$, it follows that 
\begin{equation}
f\left(\sum_{i=1}^k w_i^{* 2}X_i^2\right) - f\left(\sum_{i=1}^k \tilde{w}_i^2X_i^2\right)\leq \lambda \left(\Phi(\tilde{\bw}) - \Phi(\bw^*) \right).
\label{ineq:comp_w_star_w_tidle}
\end{equation}

\medskip

In what follows, we establish the upper bound for $\E[\sum_{i=1}^k w_i^{* 2}X_i^2]$  under four combinations of transformation and penalty functions. We consider two transformation functions, $f(x) = x$ and $f(x)=\log(x)$, and two penalties, $\Phi_\mathrm{L^2}(\bw) = \sum_{i=1}^k (w_i - s_i )^2$ and $\Phi_{\mathrm{Ent}}(\bw) = \sum_{i=1}^{k}s_i\log(1/w_i)$. 

\medskip
\noindent 
\paragraph{Case 1.} $f(x) = x$ and $\Phi_\mathrm{L^2}(\bw) = \sum_{i=1}^k (w_i - s_i )^2$. 
By \eqref{ineq:L2_diff} of Lemma \ref{lem:penalty_diff} and \eqref{ineq:comp_w_star_w_tidle}, we have 
\begin{eqnarray*}
\sum_{i=1}^k w_i^{* 2}X_i^2 - \sum_{i=1}^k \tilde{w}_i^2X_i^2 \leq \lambda \left(\Phi_\mathrm{L^2}(\tilde{\bw}) - \Phi_\mathrm{L^2}(\bw^*) \right)\leq  
3\lambda. 
\label{ineq:identity_L2}
\end{eqnarray*}
Taking expectations on both sides yields
\begin{eqnarray*}
\E\left[\sum_{i=1}^k w_i^{* 2}X_i^2 \right] - \E\left[\sum_{i=1}^k \tilde{w}_i^2X_i^2\right] \leq 
3\lambda.
\end{eqnarray*}

By Assumption \ref{assumption:lambda}.(i) and Corollary \ref{coro:Xmin}, we obtain
\begin{eqnarray*}
\E\left[\sum_{i=1}^k w_i^{* 2}X_i^2\right]  
=\mathcal{O}\left(\frac{1}{k^2}\right).
\label{ineq:expectation_identity_L2}
\end{eqnarray*}
\medskip
\paragraph{Case 2.} $f(x) = x$ and $\Phi_{\mathrm{Ent}}(\bw) = \sum_{i=1}^{k} s_i\log  (1/w_i)$. 
By~\eqref{ineq:Dir_diff} of Lemma \ref{lem:penalty_diff} and \eqref{ineq:comp_w_star_w_tidle}, we have
\begin{eqnarray*}
 \sum_{i=1}^k w_i^{* 2}X_i^2 - \sum_{i=1}^k \tilde{w}_i^2X_i^2 \leq \lambda \left(\Phi_{\mathrm{Ent}}(\tilde{\bw}) - \Phi_{\mathrm{Ent}}(\bw^*) \right) \leq  
 2 \lambda \log k, 
\label{ineq:identity_Dir}
\end{eqnarray*}
and thus 
$$
\E\left[\sum_{i=1}^k w_i^{* 2}X_i^2 \right] - \E\left[\sum_{i=1}^k \tilde{w}_i^2X_i^2\right] \leq 
 2 \lambda \log k. 
$$
By Assumption \ref{assumption:lambda}.(ii) and Corollary \ref{coro:Xmin}, we obtain
\begin{eqnarray*}
\E\left[\sum_{i=1}^k w_i^{* 2}X_i^2\right]= 
\mathcal{O}\left(\frac{1}{k^2}\right).
\label{ineq:expectation_identity_dir}    
\end{eqnarray*}

\medskip
\paragraph{Case 3.} $f(x) = \log(x)$ and  $\Phi_\mathrm{L^2}(\bw) = \sum_{i=1}^k (w_i - s_i)^2$. First, by \eqref{ineq:L2_diff} 
of Lemma \ref{lem:penalty_diff} and \eqref{ineq:comp_w_star_w_tidle}, we have 
\begin{eqnarray*}
 && \log\left(\sum_{i=1}^k w_i^{* 2}X_i^2 \right) - \log\left(\sum_{i=1}^k \tilde{w}_i^2X_i^2\right) \leq \lambda \left(\Phi_\mathrm{L^2}(\tilde{\bw}) - \Phi_\mathrm{L^2}(\bw^*) \right) \leq  
 3\lambda,\\ 
  \text{which indicates} && \sum_{i=1}^k w_i^{* 2}X_i^2 \leq 
  e^{3\lambda} \sum_{i=1}^k \tilde{w}_i^2X_i^2.
\label{ineq:log_L2}
\end{eqnarray*}
Then by Assumption \ref{assumption:lambda}.(iii) and Corollary \ref{coro:Xmin},  we obtain
$$
\E \left[\sum_{i=1}^k w_i^{* 2}X_i^2 \right] \leq e^{3\lambda} \E\left[\sum_{i=1}^k \tilde{w}_i^2X_i^2\right] = \mathcal{O}\left(\frac{1}{k^2}\right).
$$

\medskip
\paragraph{Case 4.} $f(x) = \log(x)$ and  $\Phi_{\mathrm{Ent}}(\bw) = \sum_{i=1}^{k} s_i\log  (1/w_i)$. 
By~\eqref{ineq:Dir_diff} of Lemma \ref{lem:penalty_diff} and \eqref{ineq:comp_w_star_w_tidle}, 
\begin{eqnarray*}
 \log\left(\sum_{i=1}^k w_i^{* 2}X_i^2\right) - \log\left(\sum_{i=1}^k \tilde{w}_i^2X_i^2\right)\leq \lambda \left (\Phi(\tilde{\bw}) - \Phi(\bw^*) \right ) \leq  2 \lambda \log k. 
\label{ineq:log_Dir}
\end{eqnarray*}
By Assumption \ref{assumption:lambda}.(iv) and following the proof in Case 3, we can also prove 
$$
\E\left[\sum_{i=1}^k w_i^{* 2}X_i^2\right]  =\mathcal{O}\left(\frac{1}{k^2}\right).
$$

\Halmos 
\endproof

\proof{Proof of Theorem \ref{thm:cvg_rate}.}

First, we can decompose the MSPE as follows:
\begin{eqnarray*}
\E\left(\sum_{i=1}^k w_i^* \mu_i - y \right)^2 
&=& \E\left(\sum_{i=1}^k w_i^* \mu_i - \E[\mu] + \E[\mu] - y \right)^2 \nonumber \\
& = & \E\left[\left(\sum_{i=1}^k w_i^* \mu_i - \E[\mu] \right)^2\right] + \E\left[\left(\E[\mu] - y \right)^2\right] + 2\E\left[\left(\sum_{i=1}^k w_i^* \mu_i - \E[\mu]\right)\left(\E[\mu] - y \right) \right] \nonumber \\
& = & 
\E\left[\left(\sum_{i=1}^k w_i^* \mu_i - \E[\mu] \right)^2\right] + \sigma^2_y.
\label{eq:mspe_decomp}
\end{eqnarray*}
The cross-term $\E[(\sum_{i=1}^k w_i^* \mu_i - \E[\mu])(\E[\mu] - y)]$ vanishes due to the Assumption \ref{assumption:mu}, which implies $\E[\mu] = \E[y]$. Moreover, $y$ is independent of both $\bmu$ and $\bw^*$,  which is a function of only $\bmu$.

Finally, based on Lemma \ref{lem:expectation_w_star_diagonal} and the Cauchy-Schwarz inequality, we have
\begin{eqnarray*}
\E\left[\left(\sum_{i=1}^k w_i^* \mu_i - \E[\mu] \right)^2\right] = \E\left[\left(\sum_{i=1}^k w_i^* (\mu_i - \E[\mu]) \right)^2\right]
   \leq  k\cdot \sum_{i=1}^k \E\left[w_i^{*2}(\mu_i - \E[\mu])^2\right] 
 = \mathcal{O}\left(\frac{1}{k}\right).
\end{eqnarray*}
The proof is thus complete.

\Halmos
\endproof

\subsection{Simulation for Correlated Forecast Errors} \label{appendix:dgp2d_simulation}

Under Assumption~\ref{assumption:mu}(a), for $i\neq j$, the forecast errors satisfy $\operatorname{Cov}(\mu_i-y,\mu_j-y)=\sigma_y^2$ and $\operatorname{Var}(\mu_i-y)=\sigma_0^2+\sigma_y^2$. Hence, the forecast errors share the common correlation $\rho=\sigma_y^2/(\sigma_0^2+\sigma_y^2)$.

The following simulation considers a setting in which the current forecasts satisfy the i.i.d. and common-mean conditions at the time of aggregation, while allowing for experts' historical biases. In period $t$, the quantity of interest $y_t$ and expert forecasts $\mu_{t,i}$ are generated by
\begin{equation*}
y_t=m+u_t,\qquad
\mu_{t,i}=m+a_{t,i}+r_{t,i},\qquad
a_{t,i}=\phi a_{t-1,i}+\epsilon_{t,i},
\end{equation*}
where $u_t\sim N(0,\sigma_y^2)$, $\epsilon_{t,i}\sim N\{0,(1-\phi^2)\sigma_a^2\}$, and $r_{t,i}\sim N(0,\sigma_\mu^2-\sigma_a^2)$. The AR(1) component is initialized from its stationary distribution, and all random terms are independent across experts and time. Hence, at each time of aggregation, the common-mean condition holds; the forecasts are independent across experts and have variance $\sigma_\mu^2$. At the same time, the AR(1) term $a_{t,i}$ allows experts' historical bias in forecast errors to carry over across periods. The forecast-error correlation is $\rho=\sigma_y^2/(\sigma_y^2+\sigma_\mu^2)$, so varying $\sigma_\mu$ changes the dispersion of current forecasts while $\phi$ controls how much the historical bias carries over across periods.

Throughout this design, $m=100$, $\sigma_y=10$, $\sigma_a=4$, and $\phi=0.5$. In this simulation, the expert pool size is $k=10$, with a 40-period initial history, a 40-period rolling history window, a 20-period validation window, and a 20-period test window.  The values $\rho=0.2,0.4,0.6,$ and $0.8$ correspond to $\sigma_\mu=20.00,12.25,8.16,$ and $5.00$, respectively. We use 100 Monte Carlo replications for each design point.

We compare REF with Winsorized Mean, a robust current-only benchmark, and CCR, a history-based benchmark. Figure~\ref{fig:dgp2d_selectedrho} reports $\Delta\mathrm{RMSE}=\mathrm{RMSE}_{\mathrm{benchmark}}-\mathrm{RMSE}_{\mathrm{REF}}$; positive values indicate that REF has lower test RMSE.

\begin{figure}[H]
\centering
\includegraphics[width=0.50\textwidth]{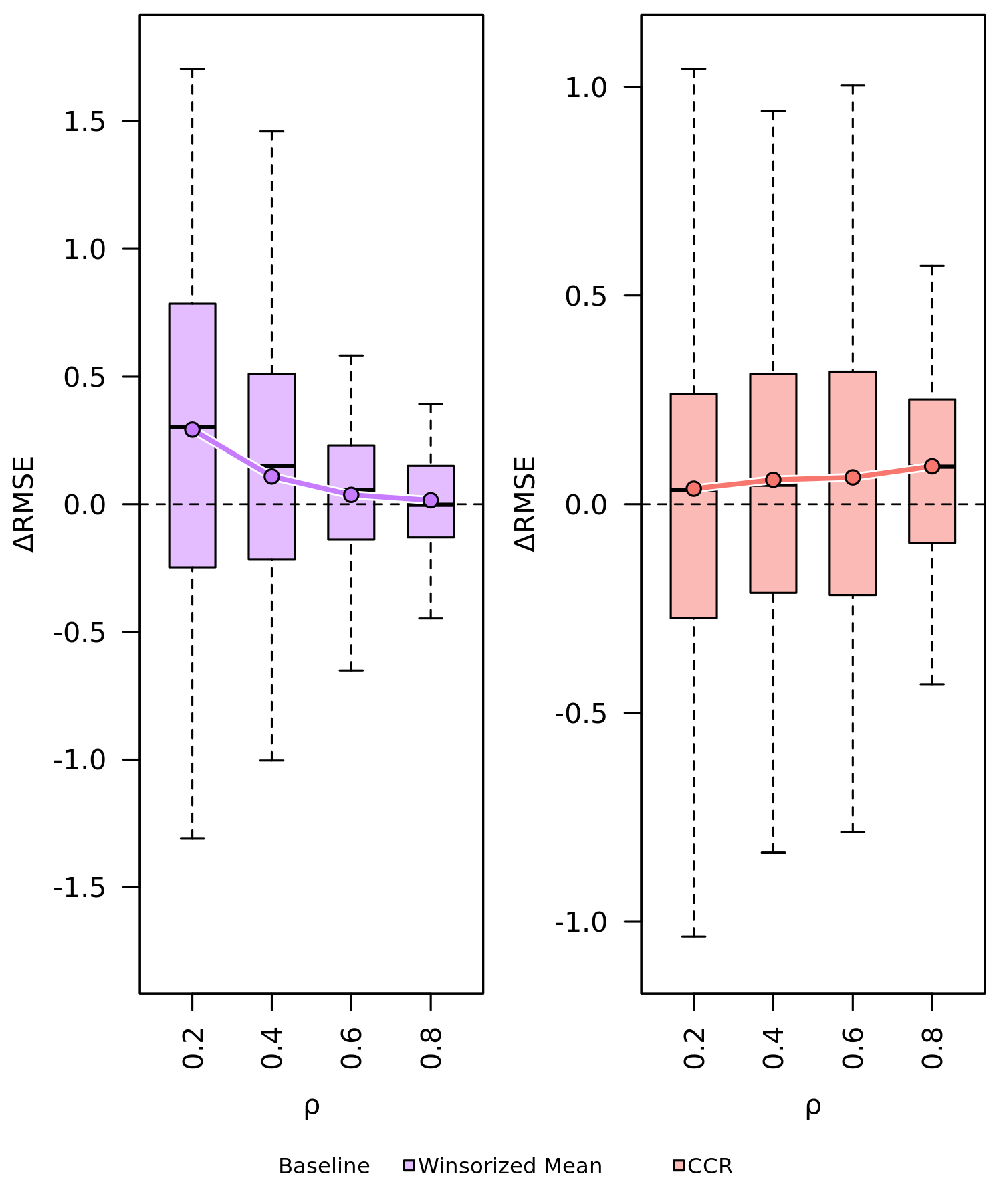}
\caption{Test RMSE improvement of REF over Winsorized Mean and CCR under different values of the forecast-error correlation $\rho$.
Positive values indicate that REF has lower test RMSE than the benchmark.}
\label{fig:dgp2d_selectedrho}
\end{figure}

By comparing REF with the two benchmarks, Figure~\ref{fig:dgp2d_selectedrho} shows that REF has positive average gains over both benchmarks across the four values of $\rho$. The gains over Winsorized Mean are larger for smaller values of $\rho$ and decrease as $\rho$ increases. In this design, a larger $\rho$ corresponds to a smaller $\sigma_\mu$ and less dispersion in current forecasts, which is consistent with current-only aggregation becoming more competitive. The gains over CCR remain positive across the same settings, consistent with a history-only rule being less adaptive when the expert-specific bias evolves over time.  These patterns provide numerical evidence that REF benefits from balancing current forecasts and historical performance across the correlation settings considered here.

\subsection{Bayesian Connection for the $L^2$ Specifications}
\label{appendix:bayes_l2_details}
\subsubsection{Normal $y$ with Known Variance: The Shifted Log-$L^2$ Model} \label{sec:normal_prior_stacking}

Following the setup in Section~\ref{sec:bayes:normal_known_variance}, we replace the Dirichlet prior for $\bw$ with a normal prior, which leads to the Shifted Log-$L^2$ specification. Specifically, we assume $\bw \sim N(\balpha, \boldsymbol{\mathcal{T}})$, with mean vector $\balpha$ and covariance matrix $\boldsymbol{\mathcal{T}}$. In this case, a decision maker combines experts' forecasts in a manner similar to a linear regression, with $\bw$ functioning as the regression coefficients.

The distribution of $y$ given $\bmu$ and $\bw$ remains the same as in \eqref{eq:y_dist_known_variance}, while the log-density of the posterior for $\bw$ given $y$ and $\bmu$ becomes
\begin{equation*}
\log(l(\bw\mid y,\bmu)) \propto -\frac{1}{2}\log(\sigma^2+\bw^\top\boldsymbol{\Sigma}\bw)-\frac{1}{2}\frac{(y-\bw^\top \bmu)^2}{\sigma^2 + \bw^\top\boldsymbol{\Sigma}\bw} - 
\frac{1}{2}(\bw-\balpha)^\top\boldsymbol{\mathcal{T}}^{-1}(\bw-\balpha) + \text{const.}
\end{equation*}   
Then, after applying the E step in \eqref{eq:EM_E step}, we obtain
\begin{equation}
Q(\bw \mid \bmu) = -\frac{1}{2} \log\left(\sigma^2+\bw^\top\boldsymbol{\Sigma}\bw\right) - 
\frac{1}{2}(\bw-\balpha)^\top\boldsymbol{\mathcal{T}}^{-1}(\bw-\balpha) 
+ \text{const.}
\label{eq:E_step_before_weights}
\end{equation}
To specify the prior for $\bw$, we let $\balpha=(s_1,\ldots, s_k)^\top$ and $\boldsymbol{\mathcal{T}} = \diag(\lambda^{-1},\lambda^{-1},\dots,\lambda^{-1})$. Again, each $s_i$ represents the decision maker’s initial belief about the weight assigned to expert $i$, while $\lambda$ controls the decision maker’s confidence in these prior beliefs.  Moreover, we use the same specification for $\boldsymbol{\Sigma} = \diag\{ (\mu_1 - \E[\mu])^2, $ $\ldots, (\mu_k - \E[\mu])^2 \}$ as in the Shifted Log-Entropy model. 

With the specifications above, maximizing $Q(\bw \mid \bmu)$ in \eqref{eq:E_step_before_weights} may yield negative or unbounded weights. To preserve the structure of a linear opinion pool, which represents a convex combination of forecasts \citep{Bates:1969,Clemen:1986,post:2019}, we impose the constraints $\sum_{i=1}^k w_i = 1$ and $w_i\geq 0$. These constraints keep the aggregate forecast within the range of the experts' forecasts and make the weights easier to interpret \citep{Clemen:1986,weiss:2018}. In addition, imposing $w_i \geq 0$ helps prevent extreme weights when forecast errors are highly correlated, which can occur when the weights are only constrained to sum to one \citep{radchenko:2023,Soule:2024}.

As a result, the optimal weights can be obtained by solving 
\begin{equation}
\bw^*=\arg \max_{\bw}  \Bigg[-\log\left\{\sigma^2+\sum_{i=1}^k w_i^2 \left(\mu_i-\E[\mu]\right)^2\right\} - \lambda\sum_{i=1}^k\left(w_i-s_i \right)^2 \Bigg], \quad s.t.,\quad   \sum_{i=1}^k w_i=1, w_i > 0.
\label{eq:bayes_framework_normal_normal} 
\end{equation}
The solution derived from this framework is equivalent to those obtained from the general formulation when the transformation function is set to $f(x)=\log(\sigma^2+x)$, and the penalty term is defined as $\Phi(\bw) = \Phi_\mathrm{L^2}(\bw) =\sum_{i=1}^k(w_i- s_i)^2$. We refer to this model as the ``Shifted Log-$L^2$ model."

\paragraph{Connection to \texttt{StackingRegressor}.}
The use of a normal prior for $\bw$ is motivated by the \texttt{StackingRegressor} module \citep{scikit-learn} in the Python library \texttt{scikit-learn}. Stacking, originally proposed by \cite{wolpert:1992}, essentially trains a higher-level model or learner to combine individual predictions. If a linear regression stacker with ridge regularization is adopted, the resulting learner can be interpreted as a Bayesian model with a normal prior for $\bw$ centered at zero. Related Bayesian regression formulations of forecast combination similarly adopt normal priors on $\bw$, which induce shrinkage toward equal weights under the sum-to-one constraint \citep{Diebold:1990}.

Let $\boldsymbol{Y}$ denote the vector of past observations of the outcome and $\boldsymbol{M}$ denote the matrix of the experts' historical forecasts. The ridge-stacking approach in \texttt{StackingRegressor} obtains weights by solving
$$
\bw^* = \arg \min_{\bw} \left\{\left(\boldsymbol{Y} - \boldsymbol{M}\bw \right)^\top\left(\boldsymbol{Y} - \boldsymbol{M}\bw \right) + \lambda \sum_{i=1}^k w_i^2 \right\},
$$
and then forms the forecast $(\bw^*)^\top \bmu$ from the experts' current predictions $\bmu$. Equivalently, $\bw$ solves the maximum a posteriori problem under the Bayesian hierarchical model
\begin{equation*}
\begin{split}    
\bw &\sim N\left(\boldsymbol{0},(\sigma^2/\lambda)\mathbf{I}\right), \\
(\boldsymbol{Y} \mid \bw) &\sim N(\boldsymbol{M}\bw,\sigma^2 \mathbf{I}), 
\end{split}    
\end{equation*}
where $\mathbf{I}$ refers to the identity matrix of a suitable size.

Although the Shifted Log-$L^2$ model and the stacker both adopt normal priors for $\bw$, the two approaches are substantially different. First, the parameters in the normal priors, especially the prior mean, are different. Second, the Shifted Log-$L^2$ model \eqref{eq:bayes_framework_normal_normal} imposes the constraint $\sum_{i=1}^k w_i = 1$ and $w_i > 0$ while the stacker does not. Third, the two approaches use historical information in different ways. Our approach incorporates experts' historical performances through the regularization term by summarizing them into the prior weights, which provides a more direct and interpretable way to incorporate historical information, and remains applicable even when experts have incomplete records. In contrast, stacking and related methods that incorporate historical forecasts and realizations through the likelihood of $\boldsymbol{Y}$ typically require a relatively long and balanced history to estimate coefficients. In settings with many experts and sparse histories, this requirement can be restrictive, often forcing forecasts with incomplete data to be discarded or their histories to be heavily imputed.

\subsubsection{Normal $y$ with Unknown Variance: The Log-$L^2$ Model} \label{sec:normal_prior_unkown_variance} 

Following the setup in Section~\ref{sec:bayes:normal_unknown_variance}, we replace the Dirichlet prior for $\bw$ with a normal prior to obtain the Log-$L^2$ specification. Under the normal prior $\bw \sim N(\balpha,\boldsymbol{\mathcal{T}})$, the conditional distribution of $y$ given $\bmu$ and $\bw$ remains the same as in \eqref{eq:y_dist_unknown_variance}, and the log-posterior of $\bw$ given $y$ and $\bmu$ becomes
\begin{equation*}
\begin{split}
\log(l(\bw\mid y,\bmu)) \propto &-\frac{1}{2} \log(\bw^\top\boldsymbol{\Sigma}\bw) - (a+\frac{1}{2})\log \bigg[ 1+ \frac{m}{2(m+1)\bw^\top\boldsymbol{\Sigma}\bw}(y-\boldsymbol{w^\top\mu})^2 \bigg] \\
&- \frac{1}{2}(\bw-\balpha)^\top\boldsymbol{\mathcal{T}}^{-1}(\bw-\balpha) + \text{const.}
\end{split}
\end{equation*} 
Applying the E-step yields
\begin{equation*}
Q(\bw \mid \bmu) = -\frac{1}{2}\log(\bw^\top\boldsymbol{\Sigma}\bw) - \frac{1}{2}(\bw-\balpha)^\top\boldsymbol{\mathcal{T}}^{-1}(\bw-\balpha) + \text{const.} 
\end{equation*}
Consistent with the previous section, we then let $\balpha=(s_1,\ldots, s_k)^\top$, $\boldsymbol{\mathcal{T}} = \diag(\lambda^{-1},\lambda^{-1},\dots,\lambda^{-1})$ and $\boldsymbol{\Sigma} = \diag\{ (\mu_1 - \E[\mu])^2, \ldots, (\mu_k - \E[\mu])^2 \}$, and impose the constraint $\sum_{i=1}^k w_i=1$ in the M step. The resulting optimization problem is
\begin{equation}
\label{eq:bayes_framework_normal_gamma_normalprior}
\bw^*=\arg\max_{\bw} \Bigg[- \log\left\{\sum_{i=1}^{k} w_i^2(\mu_i-\E[\mu])^2\right\}  - \lambda\sum_{i=1}^k \left(w_i-s_i \right)^2\Bigg], \quad s.t.,\quad   \sum_{i=1}^k w_i=1, w_i > 0.  
\end{equation}

The comparison between the Log-Entropy and Shifted Log-Entropy models also applies here in the context of the Log-$L^2$ and Shifted Log-$L^2$ models. The objective function in \eqref{eq:bayes_framework_normal_gamma_normalprior} is structurally identical to that of the Shifted Log-$L^2$ model, except for their transformation functions. Again, this difference arises from the specifications of the conditional variance of $y$.

\subsubsection{Normal $y$ with Unknown Variance: The Identity-$L^2$ Model}\label{appendix:bayes_identity_l2} 

The Identity-$L^2$ model uses the same exponential-scale specification as the Identity-Entropy model, but replaces the Dirichlet prior for $\bw$ with a normal prior. This specification is also natural because the identity transformation function $f(x)=x$ and the $L^2$ penalty $\Phi=\Phi_\mathrm{L^2}$ share a similar form with ridge regression. Specifically, letting $\bbeta=(\theta,\nu)$, the Bayesian framework becomes
\begin{eqnarray}
\label{eq:Bayesian Framework_identity_L2}
\bw &\sim& N(\balpha,\boldsymbol{\mathcal{T}}), \nonumber \\
(\bbeta\mid\bw,\bmu) &\sim & N_{\theta\mid \nu}(\bw^\top \bmu,(m\nu)^{-1})\text{Gamma}_\nu\left(a,\exp(\bw^\top\boldsymbol{\Sigma}\bw)\right), \\
(y\mid\bbeta)&\sim & N(\theta,\nu^{-1}). \nonumber
\end{eqnarray}
As in Section~\ref{sec:bayes:identity_L2}, the difference from the Log-$L^2$ model lies in the exponential scale for the Gamma distribution.

The conditional distribution of $y$ under this framework is
\begin{equation*}
(y\mid \bw,\bmu) \sim St\left(\bw^\top \bmu,\frac{ma}{(m+1)\exp(\bw^\top\boldsymbol{\Sigma}\bw)},2a\right),
\end{equation*}
and the corresponding log-posterior of $\bw$ is then given by
\begin{equation*}
\begin{split}
\log(l(\bw\mid y,\bmu)) \propto& -\frac{1}{2} \bw^\top\boldsymbol{\Sigma}\bw - (a+\frac{1}{2})\log \bigg[ 1+ \frac{m}{2(m+1)\exp{(\bw^\top\boldsymbol{\Sigma}\bw})}(y-\boldsymbol{w^\top\mu})^2 \bigg] \\
&- \frac{1}{2}(\bw-\balpha)^\top\boldsymbol{\mathcal{T}}^{-1}(\bw-\balpha) + \text{const.}
\end{split}
\end{equation*}
With the same settings of $\boldsymbol{\Sigma}$, $\balpha$, and $\boldsymbol{\mathcal{T}}$, and under the constraint $\sum_{i=1}^k w_i=1$ as in the Log-$L^2$ model, we obtain the following objective function:
\begin{equation}
\label{eq:bayes_framework_normal_gamma_normal_prior}
\bw^*=\arg\max_{\bw} \Bigg[-\sum_{i=1}^{k} w_i^2(\mu_i-\E[\mu])^2  - \lambda\sum_{i=1}^k \left(w_i-s_i \right)^2\Bigg], \quad s.t.,\quad   \sum_{i=1}^k w_i=1, w_i > 0. 
\end{equation}
This objective is equivalent to the general framework in \eqref{eq:finalobj}, with the identity transformation function $f(x)=x$ and the $L^2$ penalty $\Phi(\bw)=\Phi_\mathrm{L^2}(\bw) =\sum_{i=1}^k(w_i- s_i)^2$.

\subsection{Derivations of $Q(\bw \mid \bmu)$ of Bayesian Models}
\label{proofs of the bayesian frameworks}

Here we provide the derivations of $Q(\bw \mid \bmu)$ in \eqref{eq:EM_M step} of the six Bayesian models introduced in Sections \ref{sec:bayes:normal_known_variance}--\ref{sec:bayes:identity_L2} and Appendix~\ref{appendix:bayes_l2_details}.
\subsubsection{Proof of the Normal Models with Known Variance}
\noindent\proof{Shifted Log-Entropy model.} Recall that the shifted log-entropy model is given by
\begin{eqnarray*}
\bw &\sim &  \text{Dir}(\balpha), 
\nonumber \\
(\theta \mid \bw,\bmu) &\sim &  N(\bw^\top\bmu,\bw^\top\boldsymbol{\Sigma}\bw),  \\
(y \mid \theta)&\sim &  N(\theta,\sigma^2),
\nonumber
\end{eqnarray*}
where the density of the Dirichlet prior $\text{Dir}(\balpha)$ for $\bw$ is
\begin{equation}\label{eq:dir_pdf}
  \pi(\bw\mid\balpha)=
  \frac{\Gamma (\sum_{i=1}^k \alpha_i )}
       {\prod_{i=1}^k \Gamma(\alpha_i)}
  \prod_{i=1}^k w_i^{\alpha_i-1},
\end{equation}
with $\sum_{i=1}^k w_i =1$ and $w_i\in (0,1)$.  We also recall \eqref{eq:y_dist_known_variance} that the conditional distribution of $y$ given $\bw$ and $\bmu$ is $N(\bw^\top \bmu, \sigma^2 + \bw^\top\boldsymbol{\Sigma}\bw)$, so its density is
\begin{equation*}
p(y\mid \bw,\bmu) = \frac{1}{\sqrt{2\pi(\sigma^2+\bw^\top\boldsymbol{\Sigma}\bw)}}\exp\left({-\frac{1}{2}\frac{(y-\bw^\top \bmu)^2}{\sigma^2 + \bw^\top\boldsymbol{\Sigma}\bw}}\right). 
\end{equation*}
Then following \eqref{eq:posterior_of_y} and \eqref{eq:posterior_of_w}, the posterior distribution of $\bw$ given $y$ and $\bmu$ is
\begin{equation*}
l(\bw\mid y,\bmu) \propto p(y\mid \bw,\bmu)\,\pi(\bw\mid \balpha) = \frac{1}{\sqrt{2\pi(\sigma^2+\bw^\top\boldsymbol{\Sigma}\bw)}}\exp\left({-\frac{1}{2}\frac{(y-\bw^\top \bmu)^2}{\sigma^2 + \bw^\top\boldsymbol{\Sigma}\bw}}\right)
\frac{\Gamma (\sum_{i=1}^k \alpha_i )}
       {\prod_{i=1}^k \Gamma(\alpha_i)}
       \prod_{i=1}^kw_i^{\alpha_i-1}.
\end{equation*}
Taking the logarithm leads to 
\begin{equation*}
\log(p(y\mid \bw,\bmu)\,\pi(\bw\mid \balpha))  = -\frac{1}{2}\log(\sigma^2+\bw^\top\boldsymbol{\Sigma}\bw)-\frac{1}{2}\frac{(y-\bw^\top \bmu)^2}{\sigma^2 + \bw^\top\boldsymbol{\Sigma}\bw} 
- (\boldsymbol{\alpha-1})^{\top} \log ({\bw}^{-1}) + \text{const.}, 
\end{equation*}
where $\text{const.} $ is a constant which does not involve $\bw$. 

Finally, following \eqref{eq:EM_M step}, we have
\begin{eqnarray*}
Q(\bw \mid \bmu)&=& \E_{y\sim p(\cdot \mid \bw,\bmu)} [\log\big(p(y\mid \bw,\bmu)\,\pi(\bw\mid \balpha)\big)] =  \int \log\,\big(p(y\mid \bw,\bmu)\,\pi(\bw\mid \balpha)\big) p(y \mid \bw,\bmu)d\,y
\nonumber \\
& = & \int_{-\infty}^{+\infty}\bigg\{-\frac{1}{2}\log(\sigma^2+\bw^\top\boldsymbol{\Sigma}\bw)-\frac{1}{2}\frac{(y-\bw^\top\bmu)^2}{\sigma^2+\bw^\top\boldsymbol{\Sigma}\bw} - 
(\boldsymbol{\alpha-1})^{\top} \log ({\bw}^{-1}) +\text{const.}
\bigg\}  \nonumber \\ 
&& 
\frac{1}{\sqrt{2\pi(\sigma^2+\bw^\top\boldsymbol{\Sigma}\bw)}} \exp\bigg(-\frac{1}{2}\frac{(y-\bw^\top\bmu)^2}{\sigma^2+\bw^\top\boldsymbol{\Sigma}\bw}\bigg) d\,y \nonumber \\
&= & -\frac{1}{2}\log(\sigma^2+\bw^\top\boldsymbol{\Sigma}\bw)-
(\boldsymbol{\alpha-1})^{\top} \log ({\bw}^{-1})
\nonumber \\
&& 
+ \int_{-\infty}^{+\infty} -\frac{1}{2}\frac{(y-\bw^\top\bmu)^2}{\sigma^2+\bw^\top\boldsymbol{\Sigma}\bw}\frac{1}{\sqrt{2\pi(\sigma^2+\bw^\top\boldsymbol{\Sigma}\bw)}}\exp\bigg(-\frac{1}{2}\frac{(y-\bw^\top\bmu)^2}{\sigma^2+\bw^\top\boldsymbol{\Sigma}\bw}\bigg) d\,y +\text{const.} \nonumber \\
&= & -\frac{1}{2} \log(\sigma^2+\bw^\top\boldsymbol{\Sigma}\bw)-
(\boldsymbol{\alpha-1})^{\top} \log ({\bw}^{-1})
+\text{const.}
\label{EM: normal-normal sturcture}
\end{eqnarray*}

Thus, the objective finally becomes
\begin{equation*} 
\max_{\bw} \Bigg[-\frac{1}{2}\log\left(\sigma^2+\bw^\top\boldsymbol{\Sigma}\bw\right) - 
(\boldsymbol{\alpha-1})^{\top} \log ({\bw}^{-1})
\Bigg], \quad s.t.,\quad   \sum_{i=1}^k w_i=1.  
\end{equation*}
Note that the logarithmic term is well-defined only when $w_i>0$; together with $\sum_{i=1}^k w_i=1$, this implies $0<w_i<1$ automatically.
\Halmos 
\endproof

\proof{Shifted Log-$L^2$ model.}
\label{proof:normal_normal}
When $\bw \sim N(\balpha,\boldsymbol{\mathcal{T}})$, the posterior of $\bw$ given $y$ and $\bmu$ is
\begin{equation*}
\begin{split}
& l(\bw\mid y,\bmu) \propto p(y\mid \bw,\bmu)\,\pi(\bw\mid \balpha) \\
= & \frac{1}{\sqrt{(2\pi)^{k+1}(\sigma^2+\bw^\top\boldsymbol{\Sigma}\bw)|\boldsymbol{\mathcal{T}})}|}\exp\big({-\frac{1}{2}\frac{(y-\bw^\top \bmu)^2}{\sigma^2 + \bw^\top\boldsymbol{\Sigma}\bw}}
 -\frac{1}{2}(\bw-\balpha)^{\top} \boldsymbol{\mathcal{T}}^{-1}(\bw-\balpha)\big)   
\end{split}
\end{equation*}
Applying the logarithmic transformation and discarding constant terms, we get
\begin{equation*}
\begin{split}
& \log(p(y\mid \bw,\bmu)\,\pi(\bw\mid \balpha)) \\
= & -\frac{1}{2}\log(\sigma^2+\bw^\top\boldsymbol{\Sigma}\bw)-\frac{1}{2}\frac{(y-\bw^\top \bmu)^2}{\sigma^2 + \bw^\top\boldsymbol{\Sigma}\bw} 
-\frac{1}{2}(\bw-\balpha)^{\top} \boldsymbol{\mathcal{T}}^{-1}(\bw-\balpha) + \text{const.}
\end{split}
\end{equation*}

To account for the fact that $y$ is unobserved,  we take the expectation of the log posterior by considering $y$ as missing data:
\begin{eqnarray*}
  && \E_{Y\sim p(y\mid \bw,\bmu)} [\log\,\left(p(y\mid \bw,\bmu)\pi(\bw\mid \balpha)\right)] \nonumber \\
& = & \int_{-\infty}^{+\infty}\bigg\{-\frac{1}{2} \log\left(\sigma^2+\bw^\top\boldsymbol{\Sigma}\bw\right)-\frac{1}{2}\frac{(y-\bw^\top\bmu)^2}{\sigma^2 + \bw^\top\boldsymbol{\Sigma}\bw} 
-\frac{1}{2}(\bw-\balpha)^{\top} \boldsymbol{\mathcal{T}}^{-1}(\bw-\balpha) + \text{const.}
\bigg\} \\
&&\frac{1}{\sqrt{2\pi(\sigma^2+\bw^\top\boldsymbol{\Sigma}\bw)}} \exp\bigg(-\frac{1}{2}\frac{(y-\bw^\top\bmu)^2)}{\sigma^2+\bw^\top\boldsymbol{\Sigma}\bw}\bigg) d\,y \nonumber \\  
& = & -\frac{1}{2} \log\left(\sigma^2+\bw^\top\boldsymbol{\Sigma}\bw\right) 
-\frac{1}{2}(\bw-\balpha)^{\top} \boldsymbol{\mathcal{T}}^{-1}(\bw-\balpha)
+ \text{const.}
\label{EM: normal-normal sturcture(with normal distributed weights)}
\end{eqnarray*}
Finally,  the objective function becomes
\begin{equation*}
\max_{\bw} \Bigg[-\frac{1}{2} \log\left(\sigma^2+\bw^\top\boldsymbol{\Sigma}\bw\right) 
-\frac{1}{2}(\bw-\balpha)^{\top} \boldsymbol{\mathcal{T}}^{-1}(\bw-\balpha)
\Bigg].
\end{equation*}
\Halmos 
\endproof

\subsubsection{Proof of the Normal Models with Unknown Variance}
\label{proof:normal_gamma}
\noindent \proof{Log-Entropy model.} 
Given~\eqref{eq:bayes_hierarchy_normal_gamma_dir}, the  conditional predictive distribution  of $y$ given $\bw$ and $\bmu$ is given by
\begin{eqnarray}
p(y\mid \bw,\bmu)&= & St(\bw^\top \bmu,\frac{m a}{(m+1)\bw^\top\boldsymbol{\Sigma}\bw},2a) \nonumber \\
 & = &\frac{\Gamma(a+\frac{1}{2})}{\Gamma(a)\Gamma(\frac{1}{2})}\bigg(\frac{m}{2(m+1)\bw^\top\boldsymbol{\Sigma}\bw}\bigg)^{\frac{1}{2}}\bigg[ 1+ \frac{m}{2(m+1)\bw^\top\boldsymbol{\Sigma}\bw}(y-\bw^\top\bmu)^2 \bigg]^{-(a+\frac{1}{2})} 
 \label{eq:student_t_y}
\end{eqnarray}
which follows directly from the normal model with both parameters unknown in \citet[p.~440]{bernardo:2000}, and the p.d.f. of the Student-t distribution is from \citet[p.~122]{bernardo:2000}. 
\label{proof:normal_gamma_dir}
When $\bw\sim \text{Dir}(\balpha)$, the posterior distribution of $\bw$ can then be written as
\begin{eqnarray*}
 l(\bw \mid y,\bmu) &\propto & p(y \mid \bw,\bmu) \pi(\bw \mid \balpha) \nonumber \\
& = & \frac{\Gamma(a+\frac{1}{2})}{\Gamma(a)\Gamma(\frac{1}{2})}\bigg(\frac{m}{2(m+1)\bw^\top\boldsymbol{\Sigma}\bw}\bigg)^{\frac{1}{2}}\bigg[ 1+ \frac{m}{2(m+1)\bw^\top\boldsymbol{\Sigma}\bw}(y-\bw^\top\bmu)^2 \bigg]^{-(a+\frac{1}{2})} \\
&& \frac{\Gamma (\sum_{i=1}^k \alpha_i )} {\prod_{i=1}^k \Gamma(\alpha_i)} \prod_{i=1}^kw_i^{\alpha_i-1}.  
\end{eqnarray*}
Taking logarithmic transform of the above distribution, we have
\begin{eqnarray*}
 \log(p(y \mid \bw,\bmu) \pi(\bw \mid \balpha)) &=& -\frac{1}{2} \log(\bw^\top\boldsymbol{\Sigma}\bw) - (a+\frac{1}{2})\log \bigg[ 1+ \frac{m}{2(m+1)\bw^\top\boldsymbol{\Sigma}\bw}(y-\bw^\top\bmu)^2 \bigg] \nonumber \\
&&- 
(\boldsymbol{\alpha-1})^{\top} \log ({\bw}^{-1}) + \text{const.}
\label{eq:obj unknown variance_dir_prior}
\end{eqnarray*} 
Since y is unobserved, we next integrate over its distribution in the expectation of the above log posterior to obtain:
\begin{eqnarray} 
  && \E_{Y\sim p(y\mid \bw,\bmu)} [\log\,\left(p(y\mid \bw,\bmu)\pi(\bw\mid \balpha)\right)] \nonumber \\
& = & \int_{-\infty}^{+\infty} \bigg\{ -\frac{1}{2} \log(\bw^\top\boldsymbol{\Sigma}\bw) - (a+\frac{1}{2})\log \bigg[ 1+ \frac{m}{2(m+1)\bw^\top\boldsymbol{\Sigma}\bw}(y-\bw^\top\bmu)^2 \bigg] - 
(\boldsymbol{\alpha-1})^{\top} \log ({\bw}^{-1}) + \text{const.}
\bigg\}  \nonumber \\  
&& \frac{\Gamma(a+\frac{1}{2})}{\Gamma(a)\Gamma(\frac{1}{2})}\bigg(\frac{m}{2(m+1)\bw^\top\boldsymbol{\Sigma}\bw}\bigg)^{\frac{1}{2}}\bigg[ 1+ \frac{m}{2(m+1)\bw^\top\boldsymbol{\Sigma}\bw}(y-\bw^\top\bmu)^2 \bigg]^{-(a+\frac{1}{2})} dy \nonumber  \\
\label{eq:derive_constant_term_delimma_1}
\end{eqnarray}
Letting  $u = y-\bw^\top\bmu$ and $m' = (m+1)\bw^\top\boldsymbol{\Sigma}\bw/m$, we rewrite the expectation as
\begin{eqnarray}
&&-\frac{1}{2} \log(\bw^\top\boldsymbol{\Sigma}\bw) - 
(\boldsymbol{\alpha-1})^{\top} \log ({\bw}^{-1})
- (a+\frac{1}{2})\frac{\Gamma(a+\frac{1}{2})}{\Gamma(a)\Gamma(\frac{1}{2})}(\frac{1}{2m'})^{\frac{1}{2}} \int_{-\infty}^{+\infty} \log(1+\frac{1}{2m'}u^2)(1+\frac{1}{2m'}u^2)^{-(a+\frac{1}{2})}du \nonumber \\
&& + \text{const.},   \nonumber
\end{eqnarray}
Then, define $t = u/\sqrt{2m'}$ , so that $dt = du/\sqrt{2m'}$. The integral in the third term above can then be expressed as
\begin{eqnarray*}
&& (2m')^\frac{1}{2}\int_{-\infty}^{+\infty} \log(1+t^2)(1+t^2)^{-(a+\frac{1}{2})}dt\nonumber \\
&=&-(2m')^\frac{1}{2}\int_{-\infty}^{+\infty}\frac{d}{da}(1+t^2)^{-(a+\frac{1}{2})}dt = -(2m')^\frac{1}{2}\frac{d}{da}\left(\int_{-\infty}^{+\infty}(1+t^2)^{-(a+\frac{1}{2})}dt\right)\nonumber \\
&=& -(2m')^\frac{1}{2}\frac{d}{da} \left(\frac{\Gamma(\frac{1}{2})\Gamma(a)}{\Gamma(a+\frac{1}{2})}\right)\nonumber \\
&=& -(2m')^\frac{1}{2} \frac{\Gamma(\frac{1}{2})\Gamma(a)}{\Gamma(a+\frac{1}{2})}\frac{d}{da} \log\left(\frac{\Gamma(a)}{\Gamma(a+\frac{1}{2})}\right)\nonumber \\
&=&  -(2m')^\frac{1}{2} \frac{\Gamma(\frac{1}{2})\Gamma(a)}{\Gamma(a+\frac{1}{2})}\big(\frac{d}{da}\log\Gamma(a) - \frac{d}{da}\log\Gamma(a+\frac{1}{2})\big) \nonumber \\
&=& -(2m')^\frac{1}{2} \frac{\Gamma(\frac{1}{2})\Gamma(a)}{\Gamma(a+\frac{1}{2})}\big(\psi(a) - \psi(a+\frac{1}{2})\big)
\end{eqnarray*}
where $\psi(x) = d\log\Gamma(x)/dx$ is the digamma function. Evaluating the integral yields the additional term $-(a+1/2)\big(\psi(a+1/2) - \psi(a)\big)$, which does not depend on $\bw$. Thus, in subsequent derivations we absorb it into $\text{const.} $  Substituting this term back into the expectation of the log posterior, we obtain
\begin{eqnarray}
   \E_{Y\sim p(y\mid \bw,\bmu)} [\log\,\left(p(y\mid \bw,\bmu)\pi(\bw\mid \balpha)\right)] 
=   -\frac{1}{2} \log(\bw^\top\boldsymbol{\Sigma}\bw) - 
(\boldsymbol{\alpha-1})^{\top} \log ({\bw}^{-1})
+ \text{const.} 
\label{eq:derive_constant_term_delimma_2}
\end{eqnarray}
Since the last term is constant with respect to $\bw$, the objective function reduces to
\begin{equation*}
\max_{\bw} \Bigg[-\frac{1}{2}\log(\bw^\top\boldsymbol{\Sigma}\bw) - 
(\boldsymbol{\alpha-1})^{\top} \log ({\bw}^{-1})
\Bigg], \quad s.t.,\quad   \sum_{i=1}^k w_i=1.  
\end{equation*}
\Halmos 
\endproof

\proof{Log-$L^2$ model.}
\label{proof:normal_gamma_normal}
When $\bw \sim N(\balpha,\boldsymbol{\mathcal{T}})$,  the log-transformation of the posterior of $\bw$ given $y$ and $\bmu$ is given by
\begin{eqnarray*}
\log(l(\bw\mid y,\bmu)) 
&\propto & \log(p(y \mid \bw,\bmu) \pi(\bw \mid \balpha)) \\
&=& -\frac{1}{2} \log(\bw^\top\boldsymbol{\Sigma}\bw) - (a+\frac{1}{2})\log \bigg[ 1+ \frac{m}{2(m+1)\bw^\top\boldsymbol{\Sigma}\bw}(y-\bw^\top\bmu)^2 \bigg]  \\
&& -\frac{1}{2}(\bw-\balpha)^{\top} \boldsymbol{\mathcal{T}}^{-1}(\bw-\balpha)  + \text{const.}
\label{eq:obj unknown variance_normal_prior}
\end{eqnarray*} 
Then, integrating over the distribution of $y$ in the expectation of the log-posterior, we have
\begin{eqnarray*}
&& \E_{Y\sim p(y\mid \bw,\bmu)} [\log\,\left(p(y\mid \bw,\bmu)\pi(\bw\mid \balpha)\right)] \nonumber \\
& = & \int_{-\infty}^{+\infty} \bigg\{ -\frac{1}{2} \log(\bw^\top\boldsymbol{\Sigma}\bw) - (a+\frac{1}{2})\log \bigg[ 1+ \frac{m}{2(m+1)\bw^\top\boldsymbol{\Sigma}\bw}(y-\bw^\top\bmu)^2 \bigg] \\
&& -\frac{1}{2}(\bw-\balpha)^{\top} \boldsymbol{\mathcal{T}}^{-1}(\bw-\balpha) + \text{const.}\bigg\}
\frac{\Gamma(a+\frac{1}{2})}{\Gamma(a)\Gamma(\frac{1}{2})}\bigg(\frac{m}{2(m+1)\bw^\top\boldsymbol{\Sigma}\bw}\bigg)^{\frac{1}{2}} \\
&& \bigg[ 1+ \frac{m}{2(m+1)\bw^\top\boldsymbol{\Sigma}\bw}(y-\bw^\top\bmu)^2 \bigg]^{-(a+\frac{1}{2})} dy \\
& = & -\frac{1}{2} \log(\bw^\top\boldsymbol{\Sigma}\bw) 
-\frac{1}{2}(\bw-\balpha)^{\top} \boldsymbol{\mathcal{T}}^{-1}(\bw-\balpha) \\
&&  - (a+\frac{1}{2})\frac{\Gamma(a+\frac{1}{2})}{\Gamma(a)\Gamma(\frac{1}{2})}(\frac{1}{2a'})^{\frac{1}{2}} \int_{-\infty}^{+\infty} \log(1+\frac{1}{2a'}x^2)(1+\frac{1}{2a'}x^2)^{-(a+\frac{1}{2})}dx 
+ \text{const.}   \\
& = & -\frac{1}{2}\log(\bw^\top\boldsymbol{\Sigma}\bw) 
-\frac{1}{2}(\bw-\balpha)^{\top} \boldsymbol{\mathcal{T}}^{-1}(\bw-\balpha) 
+ \text{const.}  
\end{eqnarray*}

Finally, the objective function becomes
\begin{equation*}
\max_{\bw} \Bigg[-\frac{1}{2}\log(\bw^\top\boldsymbol{\Sigma}\bw) 
-\frac{1}{2}(\bw-\balpha)^{\top} \boldsymbol{\mathcal{T}}^{-1}(\bw-\balpha) 
 \Bigg].  
\end{equation*}
\Halmos 
\endproof

\subsubsection{Proof of Identity-$L^2$ and Identity-Entropy Models}
\label{proof:identity_transformation}
Under either Identity-model hierarchy, \eqref{eq:bayes_hierarchy_identity_dir} or \eqref{eq:Bayesian Framework_identity_L2}, the predictive distribution of $y$ given $\bw$ and $\bmu$ is 
\begin{eqnarray*}
p(y\mid \bw,\bmu)&= & St(\bw^\top \bmu,\frac{ma}{(m+1)\exp{(\bw^\top\boldsymbol{\Sigma}\bw)}},2a) \nonumber \\
 & = &\frac{\Gamma(a+\frac{1}{2})}{\Gamma(a)\Gamma(\frac{1}{2})}\bigg(\frac{m }{2(m+1)\exp{(\bw^\top\boldsymbol{\Sigma}\bw})}\bigg)^{\frac{1}{2}}\bigg[ 1+ \frac{m }{2(m+1)\exp{(\bw^\top\boldsymbol{\Sigma}\bw})}(y-\bw^\top\bmu)^2 \bigg]^{-(a+\frac{1}{2})} 
\end{eqnarray*}
For the calculations below, let $m''=(m+1)\exp(\bw^\top\boldsymbol{\Sigma}\bw)/m$.

\proof{Identity-$L^2$ model.} 
\label{proof:identity_L2_bayes}
When $\bw \sim N(\balpha,\boldsymbol{\mathcal{T}})$,   the logarithmic transform of the posterior of $\bw$ given $y$ and $\bmu$ is
\begin{eqnarray}
\log(l(\bw\mid y,\bmu)) &\propto& \log(p(y \mid \bw,\bmu) \pi(\bw \mid \balpha)) \nonumber \\
&=& -\frac{1}{2}  \bw^\top\boldsymbol{\Sigma}\bw - (a+\frac{1}{2})\log \bigg[ 1+ \frac{m}{2(m+1)\exp{(\bw^\top\boldsymbol{\Sigma}\bw})}(y-\bw^\top\bmu)^2 \bigg] \nonumber \\
&& 
-\frac{1}{2}(\bw-\balpha)^{\top} \boldsymbol{\mathcal{T}}^{-1}(\bw-\balpha) + \text{const.} \nonumber
\end{eqnarray}

Then, integrating over the distribution of $y$ in the expectation of the log-posterior, we have 
\begin{eqnarray*}
  && \E_{Y\sim p(y\mid \bw,\bmu)} [\log\,\left(p(y\mid \bw,\bmu)\pi(\bw\mid \balpha)\right)] \nonumber \\
& = & \int_{-\infty}^{+\infty} \bigg\{ -\frac{1}{2} \bw^\top\boldsymbol{\Sigma}\bw - (a+\frac{1}{2})\log \bigg[ 1+ \frac{m }{2(m+1)\exp{(\bw^\top\boldsymbol{\Sigma}\bw})}(y-\bw^\top\bmu)^2 \bigg] \\  
&& 
-\frac{1}{2}(\bw-\balpha)^{\top} \boldsymbol{\mathcal{T}}^{-1}(\bw-\balpha) + \text{const.}\bigg\} \\  
&& \frac{\Gamma(a+\frac{1}{2})}{\Gamma(a)\Gamma(\frac{1}{2})}\bigg(\frac{m }{2(m+1)\exp{(\bw^\top\boldsymbol{\Sigma}\bw})}\bigg)^{\frac{1}{2}}\bigg[ 1+ \frac{m }{2(m+1)\exp{(\bw^\top\boldsymbol{\Sigma}\bw})}(y-\bw^\top\bmu)^2 \bigg]^{-(a+\frac{1}{2})} dy \\
& = & -\frac{1}{2} \bw^\top\boldsymbol{\Sigma}\bw
-\frac{1}{2}(\bw-\balpha)^{\top} \boldsymbol{\mathcal{T}}^{-1}(\bw-\balpha) \\
&& 
- (a+\frac{1}{2})\frac{\Gamma(a+\frac{1}{2})}{\Gamma(a)\Gamma(\frac{1}{2})}(\frac{1}{2m''})^{\frac{1}{2}} \int_{-\infty}^{+\infty} \log(1+\frac{1}{2 m''}x^2)(1+\frac{1}{2m''}x^2)^{-(a+\frac{1}{2})}dx 
+ \text{const.}\\
& := & -\frac{1}{2} \bw^\top\boldsymbol{\Sigma}\bw
-\frac{1}{2}(\bw-\balpha)^{\top} \boldsymbol{\mathcal{T}}^{-1}(\bw-\balpha) 
+ \text{const.}
\end{eqnarray*}
where $x := y-\bw^\top\bmu$.  Here $\text{const.}$ collects all terms independent of $\bw$.  The last integral evaluates to a constant independent of $\bw$, whose explicit form involves integrals similar to \eqref{eq:derive_constant_term_delimma_1} -- \eqref{eq:derive_constant_term_delimma_2}, but with $\bw^\top \Sigma \bw$ in place of $\log(\bw^\top \Sigma \bw)$.

Finally, the objective function is given by
\begin{equation*}
\max_{\bw} \Bigg[-\frac{1}{2} \bw^\top\boldsymbol{\Sigma}\bw
-\frac{1}{2}(\bw-\balpha)^{\top} \boldsymbol{\mathcal{T}}^{-1}(\bw-\balpha)
\Bigg].  
\end{equation*}
\Halmos 
\endproof

\proof{Identity-Entropy model.} 
\label{proof:identity_Dir_bayes}
When $\bw \sim \text{Dir}(\balpha)$, the logarithmic transform of the posterior of $\bw$ given $y$ and $\bmu$ is
\begin{eqnarray}
&&\log(l(\bw\mid y,\bmu)) \propto \log(p(y \mid \bw,\bmu) \pi(\bw \mid \balpha)) \nonumber \\
&=& -\frac{1}{2}  \bw^\top\boldsymbol{\Sigma}\bw - (a+\frac{1}{2})\log \bigg[ 1+ \frac{m}{2(m+1)\exp{(\bw^\top\boldsymbol{\Sigma}\bw})}(y-\bw^\top\bmu)^2 \bigg] 
 - 
(\boldsymbol{\alpha-1})^{\top} \log ({\bw}^{-1}) \nonumber \\
&& + \text{const.} \nonumber
\end{eqnarray}
Then, integrating over the distribution of $y$ in the expectation of the log-posterior, we have 
\begin{eqnarray*}
  && \E_{Y\sim p(y\mid \bw,\bmu)} [\log\,\left(p(y\mid \bw,\bmu)\pi(\bw\mid \balpha)\right)] \nonumber \\
& = & \int_{-\infty}^{+\infty} \bigg\{ -\frac{1}{2} \bw^\top\boldsymbol{\Sigma}\bw - (a+\frac{1}{2})\log \bigg[ 1+ \frac{m }{2(m+1)\exp{(\bw^\top\boldsymbol{\Sigma}\bw})}(y-\bw^\top\bmu)^2 \bigg] 
-(\boldsymbol{\alpha-1})^{\top} \log ({\bw}^{-1}) 
\\  
&& + \text{const.}\bigg\} \frac{\Gamma(a+\frac{1}{2})}{\Gamma(a)\Gamma(\frac{1}{2})}\bigg(\frac{m }{2(m+1)\exp{(\bw^\top\boldsymbol{\Sigma}\bw})}\bigg)^{\frac{1}{2}}\bigg[ 1+ \frac{m }{2(m+1)\exp{(\bw^\top\boldsymbol{\Sigma}\bw})}(y-\bw^\top\bmu)^2 \bigg]^{-(a+\frac{1}{2})} dy \\
& = & -\frac{1}{2} \bw^\top\boldsymbol{\Sigma}\bw
-(\boldsymbol{\alpha-1})^{\top} \log ({\bw}^{-1}) 
\\
&& - (a+\frac{1}{2})\frac{\Gamma(a+\frac{1}{2})}{\Gamma(a)\Gamma(\frac{1}{2})}(\frac{1}{2m''})^{\frac{1}{2}} \int_{-\infty}^{+\infty} \log(1+\frac{1}{2 m''}x^2)(1+\frac{1}{2m''}x^2)^{-(a+\frac{1}{2})}dx  + \text{const.}\\
& := & -\frac{1}{2} \bw^\top\boldsymbol{\Sigma}\bw
-(\boldsymbol{\alpha-1})^{\top} \log ({\bw}^{-1}) 
+ \text{const.}
\end{eqnarray*}
where $x := y-\bw^\top\bmu$. Here $\text{const.}$ collects all terms independent of $\bw$.  The last integral evaluates to a constant independent of $\bw$.

Finally, the objective function is given by
\begin{equation*}
\max_{\bw} \Bigg[-\frac{1}{2} \bw^\top\boldsymbol{\Sigma}\bw
 - (\boldsymbol{\alpha-1})^{\top} \log ({\bw}^{-1})
\Bigg], \quad s.t.,\quad   \sum_{i=1}^k w_i=1.   
\end{equation*}
\Halmos 
\endproof

\subsection{Additional Tables for Study 1: Forecasting Walmart Product Sales}\label{Appendix:M5_tabs}

Table~\ref{Tab:M5_level} summarizes the nine aggregation levels used in the M5 study and the number of series at each level.  Table~\ref{tab:M5_result_benchmarks_k=15_appendix} provides the Std. across series within each level of Table~\ref{tab:M5_result_benchmarks_k=15}. Table~\ref{tab:M5_rmse} provides RMSE results for $k=15$ and corroborates the findings in Table~\ref{tab:M5_result_benchmarks_k=15}. Tables~\ref{tab:M5_result_benchmarks_k=5}–\ref{tab:M5_result_benchmarks_k=50} report the detailed performance of our framework (averaged across the six specifications) and the benchmark methods for ensembles with different numbers of experts. For the tables in this section and the next, bold numbers indicate the best performance within each row. Based on the tables, for $k=10$, the results mirror those for $k=15$ in the empirical sections: our framework outperforms all benchmarks at every aggregation level. For $k=5$, \texttt{StackingRegressor} is best at level 1, and the Winsorized Mean is best at levels 4 and 8; for $k=20$, CCR is the best at level 1; for k=50, CCR leads at levels 1 and 5. Across the remaining levels, our framework achieves the best accuracy.

\begin{table}[b] 
\centering
{\fontsize{7.5pt}{9pt}\selectfont
\caption{Aggregation levels and number of series in the M5 Competition.}
\label{Tab:M5_level}
\begin{tabular}{clr}
\toprule
\textbf{Level} & \textbf{Aggregation Level}                        & \textbf{Number of Series} \\ 
\midrule
1              & Total & 1                            \\
2              & State                 & 3                            \\
3              & Store                 & 10                           \\
4              & Category              & 3                            \\
5              & Department            & 7                            \\
6              & State-category    & 9                            \\
7              & State-department  & 21                           \\
8              & Store-category    & 30                           \\
9              & Store-department  & 70                           \\ 
\midrule
\textbf{Total} & \multicolumn{1}{c}{\textbf{ }}                   & \textbf{154}                 \\
\bottomrule
\end{tabular}
}
\end{table}

As an alternative to averaging all six specifications in our framework (the approach used for the main results in Table~\ref{tab:M5_result_benchmarks_k=15}), we can select the single best specification, based on validation set accuracy, for the final prediction. Table~\ref{tab:M5_result_benchmarks_k=15_best_method} reports the performance of this alternative approach compared to the benchmark models for $k=15$. An overall summary across all $k$ values is provided in Table~\ref{tab:M5_result_overall_best_method}. Again, REF outperforms all alternatives at every aggregation level, with its best performance achieved using the top 15 experts. It also performs the best within each expert pool size. The detailed results across all aggregation levels for $k=5, 10, 20, 50$ are very similar to those obtained from averaging all six specifications, and therefore we omit them for brevity.

\begin{table}[htbp]
\caption{Performance on the M5 dataset by aggregation level using $k=15$ experts (avgRMSSE), with standard deviations across series in parentheses.}
\label{tab:M5_result_benchmarks_k=15_appendix}
\vspace{0.1cm}
\centering
\renewcommand{\arraystretch}{1.1}
\resizebox{0.94\textwidth}{!}{
\begin{threeparttable}
\begin{tabular}{ccccccccccc}
\toprule
 & 
 {REF Method} 
 & 
 ML Method 
 & \multicolumn{3}{c}{\begin{tabular}[c]{@{}c@{}}Methods Using Only \\ Current Forecasts\end{tabular}} & \multicolumn{3}{c}{\begin{tabular}[c]{@{}c@{}}Methods Using Only \\ Historical Information\end{tabular}} & 
  \multicolumn{2}{c}{\begin{tabular}[c]{@{}c@{}}Competition \\ Benchmark\end{tabular}} 
 \\ \cmidrule(lr){2-2} \cmidrule(lr){3-3} \cmidrule(lr){4-6} \cmidrule(lr){7-9} \cmidrule(lr){10-11}
\multirow{-2}{*}{\textbf{Level}} & { \begin{tabular}[c]{@{}c@{}}Average of\\ all specs. \end{tabular}} 
& \begin{tabular}[c]{@{}c@{}}\texttt{Stacking}\\ \texttt{Regressor}\end{tabular} & \begin{tabular}[c]{@{}c@{}}Simple \\ Mean \end{tabular} & \begin{tabular}[c]{@{}c@{}}Trimmed \\ Mean\end{tabular} & \begin{tabular}[c]{@{}c@{}}Winsorized \\ Mean\end{tabular} & \begin{tabular}[c]{@{}c@{}}Variance \\ Weights\end{tabular} & CWM &
CCR
& ES\_bu & \begin{tabular}[c]{@{}c@{}}Best \\ Expert\end{tabular} \\ \hline
 & \textbf{.111} & .494 & .114 & .117 & .120 & .114 & .123 & .116 & .381 & .220 \\
\multirow{-2}{*}{1} & (-) & (-) & (-) & (-) & (-) & (-) & (-) & (-) & (-) & (-) \\
 & \textbf{.265} & .825 & .273 & .277 & .283 & .275 & .266 & .282 & .489 & .372 \\
\multirow{-2}{*}{2} & (.137) & (.784) & (.100) & (.094) & (.087) & (.100) & (.076) & (.100) & (.096) & (.093) \\
 & \textbf{.326} & 1.073 & .344 & .345 & .346 & .344 & .335 & .345 & .493 & .382 \\
\multirow{-2}{*}{3} & (.107) & (.893) & (.094) & (.094) & (.096) & (.093) & (.110) & (.105) & (.209) & (.142) \\
 & \textbf{.200} & .641 & .231 & .228 & .227 & .221 & .238 & .208 & .412 & .254 \\
\multirow{-2}{*}{4} & (.113) & (.274) & (.123) & (.121) & (.121) & (.112) & (.117) & (.098) & (.048) & (.145) \\
 & \textbf{.328} & .566 & .353 & .352 & .350 & .343 & .334 & .328 & .574 & .399 \\
\multirow{-2}{*}{5} & (.166) & (.274) & (.152) & (.151) & (.147) & (.154) & (.173) & (.170) & (.330) & (.183) \\
 & \textbf{.318} & 1.311 & .335 & .334 & .337 & .331 & .342 & .324 & .487 & .392 \\
\multirow{-2}{*}{6} & (.106) & (1.087) & (.080) & (.078) & (.073) & (.085) & (.090) & (.095) & (.122) & (.133) \\
 & \textbf{.447} & 1.029 & .469 & .470 & .471 & .463 & .463 & .455 & .656 & .483 \\
\multirow{-2}{*}{7} & (.170) & (.536) & (.157) & (.159) & (.157) & (.158) & (.180) & (.167) & (.284) & (.226) \\
 & \textbf{.450} & 1.252 & .455 & .455 & .457 & .455 & .451 & .456 & .553 & .475 \\
\multirow{-2}{*}{8} & (.173) & (.791) & (.177) & (.177) & (.176) & (.175) & (.173) & (.171) & (.221) & (.161) \\
 & \textbf{.573} & 1.227 & .587 & .589 & .590 & .583 & .580 & .575 & .713 & .599 \\
\multirow{-2}{*}{9} & (.224) & (.629) & (.224) & (.224) & (.223) & (.221) & (.223) & (.216) & (.289) & (.221) \\ \hline
\addlinespace[0.3em]
{Avg.} & \textbf{.335} & .935 & .351 & .352 & .353 & .348 & .348 & .343 & .529 & .397 \\
\% Impr.\tnote{a} & \textbf{4.53\%} & -166.31\% & 0.00\% & -0.23\% & -0.63\% & 1.02\% & 0.90\% & 2.27\% & -50.60\% & -13.14\% \\
p-value\tnote{b} & - & $< 0.001^{***}$ & $< 0.001^{***}$ & $< 0.001^{***}$ & $< 0.001^{***}$ & $< 0.001^{***}$ & $< 0.001^{***}$ & $< 0.001^{***}$ & $< 0.001^{***}$ & $< 0.001^{***}$ \\
Rank\tnote{c} &\textbf{1}& 10 & 5 & 6 & 7 & 3 & 4 & 2 & 9 & 8 \\ 
\addlinespace[-0.2em]
\bottomrule
\end{tabular}
\begin{tablenotes}
\footnotesize
\item[a] Percentage Improvements represent reductions in the error measure relative to Simple Mean.
\item[b] One-sided bootstrap p-values for testing whether REF has lower error measure than each benchmark are based on 1,999 replications. Statistical significance is indicated by $^{.}p<0.10$, $^{*}p<0.05$, $^{**}p<0.01$, and $^{***}p<0.001$.
\item[c] Methods are ranked using the error measure across aggregation levels, with rank 1 indicating the best performance and ties receiving the same rank.\\
\bf{Unless otherwise noted, the definitions of \% Impr., p-values, significance levels, and Rank given here apply to all subsequent performance tables.}
\end{tablenotes}
\end{threeparttable}
}
\end{table}

\begin{table}[htbp]
\caption{Performance on the M5 dataset by aggregation level using $k=15$ experts (avgRMSE).}
\label{tab:M5_rmse}
\vspace{0.1cm}
\centering
\renewcommand{\arraystretch}{1.1}
\resizebox{0.93\textwidth}{!}{
\begin{threeparttable}
\begin{tabular}{ccccccccccc}
\toprule
 & 
 {REF Method} 
 & 
 ML Method 
 & \multicolumn{3}{c}{\begin{tabular}[c]{@{}c@{}}Methods Using Only \\ Current Forecasts\end{tabular}} & \multicolumn{3}{c}{\begin{tabular}[c]{@{}c@{}}Methods Using Only \\ Historical Information\end{tabular}} & 
  \multicolumn{2}{c}{\begin{tabular}[c]{@{}c@{}}Competition \\ Benchmark\end{tabular}} 
 \\ \cmidrule(lr){2-2} \cmidrule(lr){3-3} \cmidrule(lr){4-6} \cmidrule(lr){7-9} \cmidrule(lr){10-11}
\multirow{-2}{*}{\textbf{Level}} & { \begin{tabular}[c]{@{}c@{}}Average of\\ all specs. \end{tabular}} 
& \begin{tabular}[c]{@{}c@{}}\texttt{Stacking}\\ \texttt{Regressor}\end{tabular} & \begin{tabular}[c]{@{}c@{}}Simple \\ Mean \end{tabular} & \begin{tabular}[c]{@{}c@{}}Trimmed \\ Mean\end{tabular} & \begin{tabular}[c]{@{}c@{}}Winsorized \\ Mean\end{tabular} & \begin{tabular}[c]{@{}c@{}}Variance \\ Weights\end{tabular} & CWM &
CCR
& ES\_bu & \begin{tabular}[c]{@{}c@{}}Best \\ Expert\end{tabular} \\ \hline
 & \textbf{658.997} & 2929.901 & 676.646 & 696.204 & 710.246 & 678.372 & 732.067 & 688.199 & 2262.839 & 1305.132 \\
\multirow{-2}{*}{1} & (-) & (-) & (-) & (-) & (-) & (-) & (-) & (-) & (-) & (-) \\
 & \textbf{544.377} & 1604.164 & 573.925 & 584.067 & 598.383 & 580.141 & 566.166 & 596.292 & 1028.811 & 781.643 \\
\multirow{-2}{*}{2} & (204.295) & (1338.201) & (159.367) & (152.666) & (150.011) & (165.126) & (142.814) & (177.415) & (66.280) & (109.493) \\
 & \textbf{223.245} & 765.151 & 240.587 & 241.920 & 242.994 & 240.755 & 231.142 & 241.983 & 349.720 & 262.419 \\
\multirow{-2}{*}{3} & (69.818) & (722.990) & (87.296) & (90.057) & (94.144) & (86.952) & (76.848) & (91.242) & (178.195) & (92.486) \\
 & \textbf{274.848} & 1135.096 & 331.060 & 325.705 & 322.070 & 320.536 & 348.725 & 309.264 & 826.439 & 349.550 \\
\multirow{-2}{*}{4} & (91.387) & (927.593) & (116.146) & (113.794) & (108.230) & (116.871) & (140.345) & (124.157) & (670.563) & (147.116) \\
 & \textbf{210.656} & 598.216 & 231.703 & 230.841 & 229.337 & 224.058 & 214.657 & 210.316 & 392.291 & 250.594 \\
\multirow{-2}{*}{5} & (145.752) & (842.439) & (158.238) & (155.886) & (152.383) & (153.359) & (151.261) & (145.567) & (311.708) & (156.642) \\
 & \textbf{234.972} & 709.314 & 250.446 & 252.507 & 256.802 & 248.457 & 253.723 & 249.011 & 406.196 & 319.202 \\
\multirow{-2}{*}{6} & (211.570) & (450.209) & (210.757) & (213.398) & (216.381) & (213.302) & (214.936) & (223.863) & (362.917) & (286.700) \\
 & \textbf{133.298} & 300.281 & 142.416 & 143.135 & 144.363 & 141.323 & 137.132 & 139.299 & 200.117 & 138.085 \\
\multirow{-2}{*}{7} & (151.129) & (323.005) & (153.855) & (156.154) & (157.377) & (154.887) & (150.735) & (159.108) & (196.919) & (178.459) \\
 & \textbf{103.188} & 300.064 & 106.738 & 106.957 & 107.602 & 106.353 & 104.551 & 106.290 & 145.837 & 110.571 \\
\multirow{-2}{*}{8} & (75.375) & (414.320) & (82.862) & (83.639) & (85.215) & (81.700) & (79.886) & (80.552) & (149.915) & (81.741) \\
 & \textbf{58.696} & 144.037 & 60.548 & 60.842 & 61.047 & 60.303 & 60.220 & 59.939 & 76.352 & 63.932 \\
\multirow{-2}{*}{9} & (55.582) & (195.222) & (57.498) & (58.310) & (58.788) & (57.842) & (59.756) & (59.557) & (78.430) & (68.409) \\ \hline
\addlinespace[0.3em]
{Avg.} & \textbf{271.364} & 942.914 & 290.452 & 293.575 & 296.982 & 288.922 & 294.265 & 288.955 & 632.067 & 397.903 \\
\% Impr. & \textbf{6.57\%} & -224.64\% & 0.00\% & -1.08\% & -2.25\% & 0.53\% & -1.31\% & 0.52\% & -117.61\% & -36.99\% \\
p-value& - & $0.003^{**}$ & $< 0.001^{***}$ & $< 0.001^{***}$ & $< 0.001^{***}$ & $< 0.001^{***}$ & $0.009^{**}$ & $0.002^{**}$ & $0.028^{*}$ & $0.040^{*}$ \\
Rank & \textbf{1} & 10 & 4 & 5 & 7 & 2 & 6 & 3 & 9 & 8 \\ 
\addlinespace[-0.2em]
\bottomrule
\end{tabular}
\end{threeparttable}
}
\end{table}

\begin{table}[htbp]
\caption{Performance on the M5 dataset by aggregation level using $k=5$ experts (avgRMSSE).}
\label{tab:M5_result_benchmarks_k=5}
\vspace{0.1cm}
\centering
\renewcommand{\arraystretch}{1.1}
\resizebox{0.92\textwidth}{!}{
\begin{threeparttable}
\begin{tabular}{ccccccccccc}
\toprule
 & 
 { REF Method} 
 & 
 ML Method 
 & \multicolumn{3}{c}{\begin{tabular}[c]{@{}c@{}}Methods Using Only \\ Current Forecasts\end{tabular}} & \multicolumn{3}{c}{\begin{tabular}[c]{@{}c@{}}Methods Using Only \\ Historical Information\end{tabular}} & 
  \multicolumn{2}{c}{\begin{tabular}[c]{@{}c@{}}Competition \\ Benchmark\end{tabular}} 
 \\ \cmidrule(lr){2-2} \cmidrule(lr){3-3} \cmidrule(lr){4-6} \cmidrule(lr){7-9} \cmidrule(lr){10-11}
\multirow{-2}{*}{\textbf{Level}} & { \begin{tabular}[c]{@{}c@{}}Average of\\ all specs. \end{tabular}} & \begin{tabular}[c]{@{}c@{}}\texttt{Stacking}\\ \texttt{Regressor}\end{tabular} & \begin{tabular}[c]{@{}c@{}}Simple \\ Mean \end{tabular} & \begin{tabular}[c]{@{}c@{}}Trimmed \\ Mean\end{tabular} & \begin{tabular}[c]{@{}c@{}}Winsorized \\ Mean\end{tabular} & \begin{tabular}[c]{@{}c@{}}Variance \\ Weights\end{tabular}   & CWM &
CCR
& ES\_bu & \begin{tabular}[c]{@{}c@{}}Best \\ Expert\end{tabular} \\ \hline
\multirow{2}{*}{1} & .140 & \textbf{.117} & .134 & .134 & .127 & .139 & .156 & .148 & .381 & .141 \\
 & (-) & (-) & (-) & (-) & (-) & (-) & (-) & (-) & (-) & (-) \\
\multirow{2}{*}{2} & \textbf{.267} & .335 & .282 & .282 & .283 & .282 & .312 & .282 & .489 & .319 \\
 & (.095) & (.087) & (.095) & (.095) & (.072) & (.094) & (.139) & (.090) & (.096) & (.019) \\
\multirow{2}{*}{3} & \textbf{.333} & .404 & .355 & .355 & .348 & .353 & .351 & .350 & .493 & .375 \\
 & (.092) & (.181) & (.109) & (.109) & (.107) & (.103) & (.092) & (.095) & (.209) & (.141) \\
\multirow{2}{*}{4} & {.203} & .314 & .212 & .212 & \textbf{.197} & .209 & .242 & .214 & .412 & .279 \\
 & (.050) & (.105) & (.075) & (.075) & (.062) & (.063) & (.047) & (.057) & (.048) & (.125) \\
\multirow{2}{*}{5} & \textbf{.322} & .415 & .327 & .327 & .333 & .324 & .324 & .333 & .574 & .367 \\
 & (.154) & (.209) & (.139) & (.139) & (.149) & (.146) & (.154) & (.171) & (.330) & (.184) \\
\multirow{2}{*}{6} & \textbf{.312} & .471 & .325 & .325 & .315 & .323 & .334 & .321 & .487 & .342 \\
 & (.086) & (.198) & (.080) & (.080) & (.077) & (.081) & (.084) & (.086) & (.122) & (.107) \\
\multirow{2}{*}{7} & \textbf{.450} & .549 & .457 & .457 & .455 & .457 & .466 & .465 & .656 & .496 \\
 & (.151) & (.182) & (.145) & (.145) & (.152) & (.146) & (.161) & (.159) & (.284) & (.184) \\
\multirow{2}{*}{8} & .454 & .571 & .458 & .458 & \textbf{.454} & .458 & .463 & .460 & .553 & .508 \\
 & (.160) & (.204) & (.169) & (.169) & (.165) & (.167) & (.168) & (.163) & (.221) & (.192) \\
\multirow{2}{*}{9} & \textbf{.579} & .695 & .581 & .581 & .583 & .580 & .584 & .582 & .713 & .606 \\
 & (.214) & (.273) & (.220) & (.220) & (.211) & (.219) & (.228) & (.219) & (.289) & (.225) \\ \hline
\addlinespace[0.3em]
Avg. & \textbf{.340} & .430 & .348 & .348 & .344 & .347 & .359 & .351 & .529 & .381 \\
\% Impr. & \textbf{2.26\%} & -23.69\% & 0.00\% & 0.00\% & 1.09\% & 0.17\% & -3.25\% & -0.83\% & -52.10\% & -9.69\% \\
p-value & - & $< 0.001^{***}$ & $0.001^{**}$ & $0.002^{**}$ & $0.093^{.}$ & $0.002^{**}$ & $< 0.001^{***}$ & $< 0.001^{***}$ & $< 0.001^{***}$ & $< 0.001^{***}$ \\
Rank & \textbf{1} & 9 & 4 & 4 & 2 & 3 & 7 & 6 & 10 & 8 \\
\addlinespace[-0.2em]
\bottomrule
\end{tabular}
\end{threeparttable}
}
\end{table}
\begin{table}[htbp]
\caption{Performance on the M5 dataset by aggregation level using $k=10$ experts (avgRMSSE).}
\label{tab:M5_result_benchmarks_k=10}
\vspace{0.1cm}
\centering
\renewcommand{\arraystretch}{1.1}
\setlength{\tabcolsep}{3.6pt}
\resizebox{0.94\textwidth}{!}{
\begin{threeparttable}
\begin{tabular}{ccccccccccc}
\toprule
 & 
 { REF Method} 
 & 
 ML Method 
 & \multicolumn{3}{c}{\begin{tabular}[c]{@{}c@{}}Methods Using Only \\ Current Forecasts\end{tabular}} & \multicolumn{3}{c}{\begin{tabular}[c]{@{}c@{}}Methods Using Only \\ Historical Information\end{tabular}} & 
  \multicolumn{2}{c}{\begin{tabular}[c]{@{}c@{}}Competition \\ Benchmark\end{tabular}} 
 \\ \cmidrule(lr){2-2} \cmidrule(lr){3-3} \cmidrule(lr){4-6} \cmidrule(lr){7-9} \cmidrule(lr){10-11}
\multirow{-2}{*}{\textbf{Level}} & { \begin{tabular}[c]{@{}c@{}}Average of\\ all specs. \end{tabular}} & \begin{tabular}[c]{@{}c@{}}\texttt{Stacking}\\ \texttt{Regressor}\end{tabular} & \begin{tabular}[c]{@{}c@{}}Simple \\ Mean \end{tabular} & \begin{tabular}[c]{@{}c@{}}Trimmed \\ Mean\end{tabular} & \begin{tabular}[c]{@{}c@{}}Winsorized \\ Mean\end{tabular} & \begin{tabular}[c]{@{}c@{}}Variance \\ Weights\end{tabular}   & CWM &
CCR
& ES\_bu & \begin{tabular}[c]{@{}c@{}}Best \\ Expert\end{tabular} \\ \hline
\multirow{2}{*}{1}  & \textbf{.123} & .228 & .133 & .136 & .142 & .135 & .145 & .142 & .381 & .220 \\
 & (-) & (-) & (-) & (-) & (-) & (-) & (-) & (-) & (-) & (-) \\
\multirow{2}{*}{2} & \textbf{.265} & .381 & .282 & .290 & .291 & .284 & .284 & .289 & .489 & .372 \\
 & (.107) & (.109) & (.075) & (.068) & (.059) & (.076) & (.089) & (.082) & (.096) & (.093) \\
\multirow{2}{*}{3} & \textbf{.326} & .531 & .348 & .349 & .349 & .348 & .341 & .347 & .493 & .353 \\
 & (.087) & (.237) & (.086) & (.091) & (.092) & (.085) & (.098) & (.088) & (.209) & (.114) \\
\multirow{2}{*}{4} & \textbf{.198} & .420 & .230 & .232 & .243 & .225 & .229 & .218 & .412 & .254 \\
 & (.100) & (.331) & (.091) & (.096) & (.098) & (.090) & (.088) & (.089) & (.048) & (.145) \\
\multirow{2}{*}{5} & \textbf{.322} & .468 & .346 & .350 & .357 & .339 & .328 & .336 & .574 & .373 \\
 & (.157) & (.116) & (.142) & (.145) & (.154) & (.144) & (.159) & (.164) & (.330) & (.186) \\
\multirow{2}{*}{6} & \textbf{.321} & .618 & .334 & .336 & .340 & .335 & .340 & .337 & .487 & .386 \\
 & (.088) & (.346) & (.064) & (.064) & (.059) & (.069) & (.072) & (.085) & (.122) & (.145) \\
\multirow{2}{*}{7} & \textbf{.446} & .736 & .465 & .468 & .470 & .462 & .461 & .463 & .656 & .503 \\
 & (.156) & (.441) & (.147) & (.153) & (.154) & (.148) & (.164) & (.156) & (.284) & (.190) \\
\multirow{2}{*}{8} & \textbf{.452} & .787 & .456 & .457 & .458 & .457 & .456 & .463 & .553 & .500 \\
 & (.169) & (.284) & (.171) & (.172) & (.172) & (.170) & (.171) & (.170) & (.221) & (.174) \\
\multirow{2}{*}{9} & \textbf{.576} & .866 & .586 & .587 & .591 & .583 & .583 & .580 & .713 & .600 \\
 & (.220) & (.388) & (.218) & (.217) & (.216) & (.217) & (.215) & (.216) & (.289) & (.219) \\ \hline
\addlinespace[0.3em]
Avg. & \textbf{.337} & .560 & .353 & .356 & .360 & .352 & .352 & .353 & .529 & .396 \\
\% Impr. & \textbf{4.74\%} & -58.39\% & 0.00\% & -0.78\% & -1.92\% & 0.38\% & 0.41\% & 0.11\% & -49.71\% & -11.98\% \\
p-value & - & $< 0.001^{***}$ & $< 0.001^{***}$ & $< 0.001^{***}$ & $< 0.001^{***}$ & $< 0.001^{***}$ & $< 0.001^{***}$ & $< 0.001^{***}$ & $< 0.001^{***}$ & $< 0.001^{***}$ \\
Rank & \textbf{1} & 10 & 5 & 6 & 7 & 3 & 2 & 4 & 9 & 8 \\ 
\addlinespace[-0.2em]
\bottomrule
\end{tabular}
\end{threeparttable}
}
\end{table}
\begin{table}[htbp]
\caption{Performance on the M5 dataset by aggregation level using $k=20$ experts (avgRMSSE).}
\label{tab:M5_result_benchmarks_k=20}
\vspace{0.1cm}
\centering
\renewcommand{\arraystretch}{1.1}
\resizebox{0.93\textwidth}{!}{
\begin{threeparttable}
\begin{tabular}{ccccccccccc}
\toprule
 & 
 { REF Method} 
 & 
 ML Method 
 & \multicolumn{3}{c}{\begin{tabular}[c]{@{}c@{}}Methods Using Only \\ Current Forecasts\end{tabular}} & \multicolumn{3}{c}{\begin{tabular}[c]{@{}c@{}}Methods Using Only \\ Historical Information\end{tabular}} & 
  \multicolumn{2}{c}{\begin{tabular}[c]{@{}c@{}}Competition \\ Benchmark\end{tabular}} 
 \\ \cmidrule(lr){2-2} \cmidrule(lr){3-3} \cmidrule(lr){4-6} \cmidrule(lr){7-9} \cmidrule(lr){10-11}
\multirow{-2}{*}{\textbf{Level}} & { \begin{tabular}[c]{@{}c@{}}Average of\\ all specs. \end{tabular}} & \begin{tabular}[c]{@{}c@{}}\texttt{Stacking}\\ \texttt{Regressor}\end{tabular} & \begin{tabular}[c]{@{}c@{}}Simple \\ Mean \end{tabular} & \begin{tabular}[c]{@{}c@{}}Trimmed \\ Mean\end{tabular} & \begin{tabular}[c]{@{}c@{}}Winsorized \\ Mean\end{tabular} & \begin{tabular}[c]{@{}c@{}}Variance \\ Weights\end{tabular}   & CWM &
CCR
& ES\_bu & \begin{tabular}[c]{@{}c@{}}Best \\ Expert\end{tabular} \\ \hline
\multirow{2}{*}{1} & .122 & .588 & .122 & .126 & .128 & .121 & .127 & \textbf{.119} & .381 & .220 \\
 & (-) & (-) & (-) & (-) & (-) & (-) & (-) & (-) & (-) & (-) \\
\multirow{2}{*}{2} & \textbf{.261} & 5.141 & .275 & .283 & .285 & .276 & .272 & .281 & .489 & .372 \\
 & (.114) & (7.311) & (.082) & (.080) & (.076) & (.083) & (.072) & (.088) & (.096) & (.093) \\
\multirow{2}{*}{3} & \textbf{.324} & 3.774 & .339 & .342 & .343 & .341 & .338 & .341 & .493 & .366 \\
 & (.098) & (3.988) & (.089) & (.089) & (.090) & (.087) & (.104) & (.099) & (.209) & (.142) \\
\multirow{2}{*}{4} & \textbf{.216} & 1.824 & .240 & .239 & .239 & .232 & .227 & .218 & .412 & .254 \\
 & (.127) & (1.660) & (.131) & (.133) & (.130) & (.126) & (.114) & (.118) & (.048) & (.145) \\
\multirow{2}{*}{5} & \textbf{.332} & 1.307 & .354 & .357 & .357 & .344 & .349 & .327 & .574 & .379 \\
 & (.171) & (.852) & (.156) & (.153) & (.151) & (.160) & (.198) & (.170) & (.330) & (.170) \\
\multirow{2}{*}{6} & \textbf{.324} & 1.734 & .341 & .344 & .345 & .336 & .336 & .327 & .487 & .383 \\
 & (.097) & (1.274) & (.076) & (.076) & (.074) & (.079) & (.082) & (.089) & (.122) & (.139) \\
\multirow{2}{*}{7} & \textbf{.450} & 2.166 & .470 & .473 & .474 & .464 & .460 & .457 & .656 & .482 \\
 & (.167) & (1.828) & (.159) & (.159) & (.158) & (.159) & (.176) & (.169) & (.284) & (.223) \\
\multirow{2}{*}{8} & \textbf{.449} & 2.350 & .457 & .459 & .459 & .457 & .452 & .457 & .553 & .462 \\
 & (.170) & (1.532) & (.177) & (.177) & (.177) & (.174) & (.170) & (.170) & (.221) & (.164) \\
\multirow{2}{*}{9} & \textbf{.574} & 1.938 & .588 & .591 & .592 & .584 & .579 & .576 & .713 & .591 \\
 & (.221) & (1.477) & (.217) & (.218) & (.217) & (.215) & (.214) & (.213) & (.289) & (.223) \\  \hline
\addlinespace[0.3em]
Avg. & \textbf{.339} & 2.314 & .354 & .357 & .358 & .351 & .349 & .345 & .529 & .390 \\
\% Impr. & \textbf{4.24\%} & -553.63\% & 0.00\% & -0.91\% & -1.16\% & 0.98\% & 1.42\% & 2.56\% & -49.42\% & -10.14\% \\
p-value & - & $< 0.001^{***}$ & $< 0.001^{***}$ & $< 0.001^{***}$ & $< 0.001^{***}$ & $< 0.001^{***}$ & $< 0.001^{***}$ & $0.016^{*}$ & $< 0.001^{***}$ & $< 0.001^{***}$ \\
Rank & \textbf{1} & 10 & 5 & 6 & 7 & 4 & 3 & 2 & 9 & 8 \\ 
\addlinespace[-0.2em]
\bottomrule
\end{tabular}
\end{threeparttable}
}
\end{table}
\clearpage
\begin{table}[htbp]
\caption{Performance on the M5 dataset by aggregation level using $k=50$ experts (avgRMSSE).}
\label{tab:M5_result_benchmarks_k=50}
\vspace{0.1cm}
\centering
\renewcommand{\arraystretch}{1.1}
\resizebox{0.93\textwidth}{!}{
\begin{threeparttable}
\begin{tabular}{ccccccccccc}
\toprule
 & 
 { REF Method} 
 & 
 ML Method 
 & \multicolumn{3}{c}{\begin{tabular}[c]{@{}c@{}}Methods Using Only \\ Current Forecasts\end{tabular}} & \multicolumn{3}{c}{\begin{tabular}[c]{@{}c@{}}Methods Using Only \\ Historical Information\end{tabular}} & 
  \multicolumn{2}{c}{\begin{tabular}[c]{@{}c@{}}Competition \\ Benchmark\end{tabular}} 
 \\ \cmidrule(lr){2-2} \cmidrule(lr){3-3} \cmidrule(lr){4-6} \cmidrule(lr){7-9} \cmidrule(lr){10-11}
\multirow{-2}{*}{\textbf{Level}} & { \begin{tabular}[c]{@{}c@{}}Average of\\ all specs. \end{tabular}} & \begin{tabular}[c]{@{}c@{}}\texttt{Stacking}\\ \texttt{Regressor}\end{tabular} & \begin{tabular}[c]{@{}c@{}}Simple \\ Mean \end{tabular} & \begin{tabular}[c]{@{}c@{}}Trimmed \\ Mean\end{tabular} & \begin{tabular}[c]{@{}c@{}}Winsorized \\ Mean\end{tabular} & \begin{tabular}[c]{@{}c@{}}Variance \\ Weights\end{tabular}   & CWM &
CCR
& ES\_bu & \begin{tabular}[c]{@{}c@{}}Best \\ Expert\end{tabular} \\ \hline
\multirow{2}{*}{1}& .125 & .379 & .147 & .147 & .147 & .139 & .129 & \textbf{.121} & .381 & .109 \\
 & (-) & (-) & (-) & (-) & (-) & (-) & (-) & (-) & (-) & (-) \\
\multirow{2}{*}{2} & \textbf{.258} & .697 & .292 & .296 & .295 & .289 & .288 & .285 & .489 & .361 \\
 & (.097) & (.514) & (.070) & (.067) & (.066) & (.074) & (.060) & (.087) & (.096) & (.139) \\
\multirow{2}{*}{3} & \textbf{.329} & .751 & .347 & .348 & .349 & .346 & .354 & .341 & .493 & .379 \\
 & (.088) & (.377) & (.096) & (.092) & (.092) & (.088) & (.107) & (.096) & (.209) & (.145) \\
\multirow{2}{*}{4} & \textbf{.232} & .653 & .243 & .247 & .248 & .240 & .256 & .235 & .412 & .299 \\
 & (.137) & (.428) & (.123) & (.128) & (.127) & (.126) & (.137) & (.129) & (.048) & (.149) \\
\multirow{2}{*}{5} & .343 & .690 & .374 & .381 & .383 & .356 & .356 & \textbf{.341} & .574 & .380 \\
 & (.173) & (.338) & (.189) & (.197) & (.201) & (.175) & (.173) & (.177) & (.330) & (.164) \\
\multirow{2}{*}{6} & \textbf{.324} & .926 & .350 & .352 & .353 & .345 & .337 & .333 & .487 & .365 \\
 & (.091) & (.839) & (.075) & (.075) & (.074) & (.079) & (.087) & (.090) & (.122) & (.109) \\
\multirow{2}{*}{7} & \textbf{.456} & 1.007 & .490 & .493 & .494 & .478 & .466 & .468 & .656 & .500 \\
 & (.169) & (.647) & (.169) & (.172) & (.172) & (.165) & (.171) & (.173) & (.284) & (.237) \\
\multirow{2}{*}{8} & \textbf{.454} & .899 & .464 & .464 & .464 & .463 & .455 & .464 & .553 & .492 \\
 & (.169) & (.401) & (.178) & (.178) & (.177) & (.174) & (.172) & (.168) & (.221) & (.174) \\
\multirow{2}{*}{9} & \textbf{.581} & 1.093 & .600 & .602 & .602 & .594 & .584 & .582 & .713 & .599 \\
 & (.219) & (.468) & (.221) & (.222) & (.222) & (.217) & (.219) & (.214) & (.289) & (.224) \\ \hline
\addlinespace[0.3em]
Avg. & \textbf{.345} & .788 & .368 & .370 & .371 & .361 & .359 & .352 & .529 & .387 \\
\% Impr. & \textbf{6.22\%} & -114.51\% & 0.00\% & -0.64\% & -0.81\% & 1.74\% & 2.47\% & 4.18\% & -43.89\% & -5.33\% \\
p-value & - & $< 0.001^{***}$ & $< 0.001^{***}$ & $< 0.001^{***}$ & $< 0.001^{***}$ & $< 0.001^{***}$ & $< 0.001^{***}$ & $0.012^{*}$ & $< 0.001^{***}$ & $< 0.001^{***}$ \\
Rank & \textbf{1} & 10 & 5 & 6 & 7 & 4 & 3 & 2 & 9 & 8 \\
\addlinespace[-0.2em]
\bottomrule
\end{tabular}
\end{threeparttable}
}
\end{table}

\begin{table}[htbp]
\caption{Performance on the M5 dataset by aggregation level using $k=15$ experts and a validation-selected REF specification (avgRMSSE).}
\label{tab:M5_result_benchmarks_k=15_best_method}
\vspace{0.1cm}
\centering
\renewcommand{\arraystretch}{1.1}
\setlength{\tabcolsep}{3.6pt}
\resizebox{0.94\textwidth}{!}{
\begin{threeparttable}
\begin{tabular}{ccccccccccc}
\toprule
 & 
 { REF Method} 
 & 
 ML Method 
 & \multicolumn{3}{c}{\begin{tabular}[c]{@{}c@{}}Methods Using Only \\ Current Forecasts\end{tabular}} & \multicolumn{3}{c}{\begin{tabular}[c]{@{}c@{}}Methods Using Only \\ Historical Information\end{tabular}} & 
  \multicolumn{2}{c}{\begin{tabular}[c]{@{}c@{}}Competition \\ Benchmark\end{tabular}} 
 \\ \cmidrule(lr){2-2} \cmidrule(lr){3-3} \cmidrule(lr){4-6} \cmidrule(lr){7-9} \cmidrule(lr){10-11}
\multirow{-2}{*}{\textbf{Level}} & { \begin{tabular}[c]{@{}c@{}}Validation-\\ selected spec.\end{tabular}} & \begin{tabular}[c]{@{}c@{}}\texttt{Stacking}\\ \texttt{Regressor}\end{tabular} & \begin{tabular}[c]{@{}c@{}}Simple \\ Mean \end{tabular} & \begin{tabular}[c]{@{}c@{}}Trimmed \\ Mean\end{tabular} & \begin{tabular}[c]{@{}c@{}}Winsorized \\ Mean\end{tabular} & \begin{tabular}[c]{@{}c@{}}Variance \\ Weights\end{tabular}   & CWM &
CCR
& ES\_bu & \begin{tabular}[c]{@{}c@{}}Best \\ Expert\end{tabular} \\ \hline
\multirow{2}{*}{1} & \textbf{.111} & .494 & .114 & .117 & .120 & .114 & .123 & .116 & .381 & .220 \\
 & (-) & (-) & (-) & (-) & (-) & (-) & (-) & (-) & (-) & (-) \\
\multirow{2}{*}{2} & \textbf{.265} & .825 & .273 & .277 & .283 & .275 & .266 & .282 & .489 & .372 \\
 & (.137) & (.784) & (.100) & (.094) & (.087) & (.100) & (.076) & (.100) & (.096) & (.093) \\
\multirow{2}{*}{3} & \textbf{.326} & 1.073 & .344 & .345 & .346 & .344 & .335 & .345 & .493 & .382 \\
 & (.107) & (.893) & (.094) & (.094) & (.096) & (.093) & (.110) & (.105) & (.209) & (.142) \\
\multirow{2}{*}{4} & \textbf{.200} & .641 & .231 & .228 & .227 & .221 & .238 & .208 & .412 & .254 \\
 & (.112) & (.274) & (.123) & (.121) & (.121) & (.112) & (.117) & (.098) & (.048) & (.145) \\
\multirow{2}{*}{5} & \textbf{.328} & .566 & .353 & .352 & .350 & .343 & .334 & .328 & .574 & .399 \\
 & (.166) & (.274) & (.152) & (.151) & (.147) & (.154) & (.173) & (.170) & (.330) & (.183) \\
\multirow{2}{*}{6} & \textbf{.318} & 1.311 & .335 & .334 & .337 & .331 & .342 & .324 & .487 & .392 \\
 & (.106) & (1.087) & (.080) & (.078) & (.073) & (.085) & (.090) & (.095) & (.122) & (.133) \\
\multirow{2}{*}{7} & \textbf{.447} & 1.029 & .469 & .470 & .471 & .463 & .463 & .455 & .656 & .483 \\
 & (.170) & (.536) & (.157) & (.159) & (.157) & (.158) & (.180) & (.167) & (.284) & (.226) \\
\multirow{2}{*}{8} & \textbf{.450} & 1.252 & .455 & .455 & .457 & .455 & .451 & .456 & .553 & .475 \\
 & (.173) & (.791) & (.177) & (.177) & (.176) & (.175) & (.173) & (.171) & (.221) & (.161) \\
\multirow{2}{*}{9} & \textbf{.573} & 1.227 & .587 & .589 & .590 & .583 & .580 & .575 & .713 & .599 \\
 & (.224) & (.629) & (.224) & (.224) & (.223) & (.221) & (.223) & (.216) & (.289) & (.221) \\ \hline
\addlinespace[0.3em]
Avg. & \textbf{.335} & .935 & .351 & .352 & .353 & .348 & .348 & .343 & .529 & .397 \\
\% Impr. & \textbf{4.53\%} & -166.31\% & 0.00\% & -0.23\% & -0.63\% & 1.02\% & 0.90\% & 2.27\% & -50.60\% & -13.14\% \\
p-value & - & $< 0.001^{***}$ & $< 0.001^{***}$ & $< 0.001^{***}$ & $< 0.001^{***}$ & $< 0.001^{***}$ & $< 0.001^{***}$ & $< 0.001^{***}$ & $< 0.001^{***}$ & $< 0.001^{***}$ \\
Rank & \textbf{1} & 10 & 5 & 6 & 7 & 3 & 4 & 2 & 9 & 8 \\ 
\addlinespace[-0.2em]
\bottomrule
\end{tabular}
\end{threeparttable}
}
\end{table}

\begin{table}[htbp]
\caption{Overall performance on the M5 dataset across expert pool sizes using a validation-selected REF specification.}
\label{tab:M5_result_overall_best_method}
\vspace{0.1cm}
\centering
\renewcommand{\arraystretch}{1.1}
\setlength{\tabcolsep}{2pt}
\resizebox{\textwidth}{!}{
\begin{threeparttable}
\begin{tabular}{cccccccccccc}
\toprule
 & 
 & 
 { REF Method} 
 & 
 ML Method 
 & \multicolumn{3}{c}{\begin{tabular}[c]{@{}c@{}}Methods Using Only \\ Current Forecasts\end{tabular}} & \multicolumn{3}{c}{\begin{tabular}[c]{@{}c@{}}Methods Using Only \\ Historical Information\end{tabular}} & 
  \multicolumn{2}{c}{\begin{tabular}[c]{@{}c@{}}Competition \\ Benchmark\end{tabular}} 
 \\ \cmidrule(lr){3-3} \cmidrule(lr){4-4} \cmidrule(lr){5-7} \cmidrule(lr){8-10} \cmidrule(lr){11-12}
\multicolumn{1}{c}{\multirow{-2}{*}{\textbf{\begin{tabular}[c]{@{}c@{}}No. of\\  experts\end{tabular}}}} & \multirow{-2}{*}{\textbf{Metric}} & { \begin{tabular}[c]{@{}c@{}}Validation-\\ selected spec.\end{tabular}} & \begin{tabular}[c]{@{}c@{}}\texttt{Stacking}\\ \texttt{Regressor}\end{tabular} & \begin{tabular}[c]{@{}c@{}}Simple \\ Mean \end{tabular} & \begin{tabular}[c]{@{}c@{}}Trimmed \\ Mean\end{tabular} & \begin{tabular}[c]{@{}c@{}}Winsorized \\ Mean\end{tabular} & \begin{tabular}[c]{@{}c@{}}Variance \\ Weights\end{tabular}   & CWM &
CCR
& ES\_bu & \begin{tabular}[c]{@{}c@{}}Best \\ Expert\end{tabular} \\ \hline
\multirow{4}{*}{5} & RMSSE & \textbf{.340} & .430 & .348 & .348 & .344 & .347 & .359 & .351 & .529 & .381 \\
 & \% Impr.\tnote{a} & \textbf{2.18\%} & -23.69\% & 0.00\% & 0.00\% & 1.09\% & 0.17\% & -3.25\% & -0.83\% & -52.10\% & -9.69\% \\
 & p-value\tnote{b} & - & $< 0.001^{***}$ & $0.011^{*}$ & $0.011^{*}$ & $0.158$ & $0.019^{*}$ & $< 0.001^{***}$ & $< 0.001^{***}$ & $< 0.001^{***}$ & $< 0.001^{***}$ \\
 & Rank\tnote{c} & \textbf{1} & 9 & 4 & 4 & 2 & 3 & 7 & 6 & 10 & 8 \\
\cmidrule(lr){2-12}
\multirow{4}{*}{10} & RMSSE & \textbf{.337} & .560 & .353 & .356 & .360 & .352 & .352 & .353 & .529 & .396 \\
 & \% Impr. & \textbf{4.64\%} & -58.39\% & 0.00\% & -0.78\% & -1.92\% & 0.38\% & 0.41\% & 0.11\% & -49.71\% & -11.98\% \\
 & p-value & - & $< 0.001^{***}$ & $< 0.001^{***}$ & $< 0.001^{***}$ & $< 0.001^{***}$ & $< 0.001^{***}$ & $< 0.001^{***}$ & $< 0.001^{***}$ & $< 0.001^{***}$ & $< 0.001^{***}$ \\
 & Rank & \textbf{1} & 10 & 5 & 6 & 7 & 3 & 2 & 4 & 9 & 8 \\
\cmidrule(lr){2-12}
\multirow{4}{*}{15} & RMSSE & \textbf{.335} & .935 & .351 & .352 & .353 & .348 & .348 & .343 & .529 & .397 \\
 & \% Impr. & \textbf{4.53\%} & -166.31\% & 0.00\% & -0.23\% & -0.63\% & 1.02\% & 0.90\% & 2.27\% & -50.60\% & -13.14\% \\
 & p-value & - & $< 0.001^{***}$ & $< 0.001^{***}$ & $< 0.001^{***}$ & $< 0.001^{***}$ & $< 0.001^{***}$ & $0.001^{**}$ & $< 0.001^{***}$ & $< 0.001^{***}$ & $< 0.001^{***}$ \\
 & Rank & \textbf{1} & 10 & 5 & 6 & 7 & 3 & 4 & 2 & 9 & 8 \\
\cmidrule(lr){2-12}
\multirow{4}{*}{20} & RMSSE & \textbf{.339} & 2.314 & .354 & .357 & .358 & .351 & .349 & .345 & .529 & .390 \\
 & \% Impr. & \textbf{4.26\%} & -553.63\% & 0.00\% & -0.91\% & -1.16\% & 0.98\% & 1.42\% & 2.56\% & -49.42\% & -10.14\% \\
 & p-value & - & $< 0.001^{***}$ & $< 0.001^{***}$ & $< 0.001^{***}$ & $< 0.001^{***}$ & $< 0.001^{***}$ & $< 0.001^{***}$ & $0.016^{*}$ & $< 0.001^{***}$ & $< 0.001^{***}$ \\
 & Rank & \textbf{1} & 10 & 5 & 6 & 7 & 4 & 3 & 2 & 9 & 8 \\
\cmidrule(lr){2-12}
\multirow{4}{*}{50} & RMSSE & \textbf{.344} & .788 & .368 & .370 & .371 & .361 & .359 & .352 & .529 & .387 \\
 & \% Impr. & \textbf{6.30\%} & -114.51\% & 0.00\% & -0.64\% & -0.81\% & 1.74\% & 2.47\% & 4.18\% & -43.89\% & -5.33\% \\
 & p-value & - & $< 0.001^{***}$ & $< 0.001^{***}$ & $< 0.001^{***}$ & $< 0.001^{***}$ & $< 0.001^{***}$ & $< 0.001^{***}$ & $0.006^{**}$ & $< 0.001^{***}$ & $< 0.001^{***}$ \\
 & Rank & \textbf{1} & 10 & 5 & 6 & 7 & 4 & 3 & 2 & 9 & 8 \\
\specialrule{0.25pt}{2pt}{2pt}
\addlinespace[0.3em]
min & RMSSE & \textbf{.335} & .430 & .348 & .348 & .344 & .347 & .348 & .343 & .529 & .381 \\
Best Expert No. &  & 15 & 5 & 5 & 5 & 5 & 5 & 15 & 15 & 5 & 5 \\ 
\addlinespace[-0.2em]
\bottomrule
\end{tabular}
\end{threeparttable}
}
\end{table}

\subsection{Additional Tables for Study 2: Survey of Professional Forecasters}\label{Appendix:spf_tabs}

Table~\ref{tab:spf_cov_avg_rmsse_appendix} provides the Std. across testing quarters of the main SPF RMSSE table in Section~\ref{sec:study_spf}. Table~\ref{tab:spf_cov_avg_rmse} complements Table~\ref{tab:spf_cov_avg_rmsse} by reporting RMSE and reaches the same conclusions. Tables~\ref{tab:spf_cov_best_method_rmsse} and \ref{tab:spf_cov_best_method} report additional comparisons in which REF’s prediction is taken from the best specification selected in the validation set. Compared with averaging all specifications, this approach improves forecasts of GDP growth rate and unemployment rate but is slightly worse for CPI inflation rate.

Finally, given the Variance method’s competitive performance relative to CCR in this study, we also use Variance-based weights as the prior in our framework. The results (Tables~\ref{tab:spf_invvar_avg_rmsse} and~\ref{tab:spf_invvar_avg_rmse}) show that REF remains the top performer against all benchmarks and is slightly better than the version using CCR-based prior weights. The same conclusion holds when REF uses the validation-selected best specification as the final prediction, so we omit those results for brevity.

\phantom{a}\\ \phantom{a}\\ \phantom{a}\\
\begin{table}[htbp]
\caption{Performance on the SPF dataset by economic indicator and forecast horizon (RMSSE).}
\label{tab:spf_cov_avg_rmsse_appendix}
\centering
\renewcommand{\arraystretch}{1.1}
\resizebox{0.83\textwidth}{!}{
\begin{threeparttable}
\begin{tabular}{crcccccccc}
\toprule
 &  & REF Method & ML Method  & \multicolumn{3}{c}{\begin{tabular}[c]{@{}c@{}}Methods Using Only \\ Current Forecasts\end{tabular}} & \multicolumn{3}{c}{\begin{tabular}[c]{@{}c@{}}Methods Using Only \\ Historical Information\end{tabular}} \\ \cmidrule(lr){3-3}  \cmidrule(lr){4-4} \cmidrule(lr){5-7} \cmidrule(lr){8-10}
\multicolumn{1}{c}{\multirow{-1}{*}{\textbf{Indicator}}} & \multirow{-1}{*}{\textbf{Horizon}} &  { \begin{tabular}[c]{@{}c@{}}Average of\\ all specs. \end{tabular}} & \begin{tabular}[c]{@{}c@{}}\texttt{Stacking}\\ \texttt{Regressor}\end{tabular} & \begin{tabular}[c]{@{}c@{}}Simple \\ Mean \end{tabular} & \begin{tabular}[c]{@{}c@{}}Trimmed \\ Mean\end{tabular} & \begin{tabular}[c]{@{}c@{}}Winsorized \\ Mean\end{tabular} & \begin{tabular}[c]{@{}c@{}}Variance \\ Weights\end{tabular}  & CWM & CCR \\ \hline
\addlinespace[0.3em]
\multirow{6}{*}{\begin{tabular}[c]{@{}c@{}}GDP \\ Growth Rate\end{tabular}}
& \multirow{2}{*}{h=0} & \textbf{.566} & 1.515 & .600 & .602 & .600 & .612 & .621 & .623 \\
 &  & (.512) & (1.484) & (.559) & (.567) & (.565) & (.543) & (.566) & (.560) \\
 & \multirow{2}{*}{h=1} & \textbf{1.541} & 2.257 & 1.597 & 1.593 & 1.594 & 1.591 & 1.600 & 1.584 \\
 &  & (1.469) & (2.319) & (1.547) & (1.542) & (1.541) & (1.486) & (1.505) & (1.481) \\ \cmidrule(lr){3-10}
 & Avg. & \textbf{1.054} & 1.886 & 1.098 & 1.097 & 1.097 & 1.102 & 1.111 & 1.104 \\
  & \% Impr.\tnote{a} & \textbf{4.08\%} & -71.73\% & 0.00\% & 0.10\% & 0.16\% & -0.29\% & -1.13\% & -0.48\% \\
  & p-value\tnote{b} & - & $0.004^{**}$ & $0.078^{.}$ & $0.077^{.}$ & $0.096^{.}$ & $0.015^{*}$ & $0.017^{*}$ & $0.011^{*}$ \\
 & Rank\tnote{c} & \textbf{1} & 8 & 4 & 3 & 2 & 5 & 7 & 6 \\  \cmidrule(lr){2-10}
 \addlinespace[0.3em]
\multirow{6}{*}{\begin{tabular}[c]{@{}c@{}} Change in\\UNEMP Rate\end{tabular}} & \multirow{2}{*}{h=0} & \textbf{.751} & 3.372 & .814 & .829 & .825 & .786 & 2.619 & .963 \\
 &  & (.812) & (3.855) & (.892) & (.923) & (.916) & (.835) & (3.033) & (1.077) \\
 & \multirow{2}{*}{h=1} & 2.574 & \textbf{2.521} & 2.713 & 2.665 & 2.672 & 2.594 & 2.610 & 2.578 \\
 &  & (3.026) & (2.806) & (3.163) & (3.111) & (3.118) & (2.988) & (3.004) & (2.976) \\ \cmidrule(lr){3-10}
 & Avg. & \textbf{1.662} & 2.946 & 1.763 & 1.747 & 1.749 & 1.690 & 2.615 & 1.771 \\
  & \% Impr. & \textbf{5.73\%} & -67.08\% & 0.00\% & 0.92\% & 0.85\% & 4.18\% & -48.27\% & -0.40\% \\
  & p-value & - & $0.005^{**}$ & $0.020^{*}$ & $0.016^{*}$ & $0.013^{*}$ & 0.158 & $0.006^{**}$ & $0.025^{*}$ \\
 & Rank & \textbf{1} & 8 & 5 & 3 & 4 & 2 & 7 & 6 \\  \cmidrule(lr){2-10}
 \addlinespace[0.3em]
\multirow{6}{*}{\begin{tabular}[c]{@{}c@{}} CPI \\  Inflation Rate \end{tabular}} & \multirow{2}{*}{h=0} & \textbf{.715} & 1.200 & .761 & .757 & .759 & .815 & .802 & .797 \\
 &  & (.539) & (.910) & (.560) & (.547) & (.550) & (.622) & (.629) & (.620) \\
 & \multirow{2}{*}{h=1} & \textbf{1.194} & 1.598 & 1.197 & 1.197 & 1.195 & 1.217 & 1.254 & 1.205 \\
 &  & (.945) & (1.362) & (.966) & (.965) & (.964) & (.980) & (.996) & (.960) \\ \cmidrule(lr){3-10}
 & Avg. & \textbf{0.954} & 1.399 & 0.979 & 0.977 & 0.977 & 1.016 & 1.028 & 1.001 \\
  & \% Impr. & \textbf{2.53\%} & -42.85\% & 0.00\% & 0.23\% & 0.25\% & -3.74\% & -5.01\% & -2.18\% \\
  & p-value & - & $< 0.001^{***}$ & $0.005^{**}$ & $0.010^{**}$ & $0.013^{*}$ & $0.014^{*}$ & $0.004^{**}$ & $0.052^{.}$ \\
 & Rank & \textbf{1} & 8 & 4 & 3 & 2 & 6 & 7 & 5 \\ 
 \bottomrule
\end{tabular}
\end{threeparttable}
}
\end{table}

\begin{table}[htbp]
\caption{Performance on the SPF dataset by economic indicator and forecast horizon (RMSE).}
\label{tab:spf_cov_avg_rmse}
\centering
\renewcommand{\arraystretch}{1.1}
\resizebox{0.83\textwidth}{!}{
\begin{threeparttable}
\begin{tabular}{crcccccccc}
\toprule
 &  & REF Method & ML Method  & \multicolumn{3}{c}{\begin{tabular}[c]{@{}c@{}}Methods Using Only \\ Current Forecasts\end{tabular}} & \multicolumn{3}{c}{\begin{tabular}[c]{@{}c@{}}Methods Using Only \\ Historical Information\end{tabular}} \\ \cmidrule(lr){3-3}  \cmidrule(lr){4-4} \cmidrule(lr){5-7} \cmidrule(lr){8-10}
\multicolumn{1}{c}{\multirow{-1}{*}{\textbf{Indicator}}} & \multirow{-1}{*}{\textbf{Horizon}} &  { \begin{tabular}[c]{@{}c@{}}Average of\\ all specs. \end{tabular}} & \begin{tabular}[c]{@{}c@{}}\texttt{Stacking}\\ \texttt{Regressor}\end{tabular} & \begin{tabular}[c]{@{}c@{}}Simple \\ Mean \end{tabular} & \begin{tabular}[c]{@{}c@{}}Trimmed \\ Mean\end{tabular} & \begin{tabular}[c]{@{}c@{}}Winsorized \\ Mean\end{tabular} & \begin{tabular}[c]{@{}c@{}}Variance \\ Weights\end{tabular}   & CWM & CCR \\ \hline
\addlinespace[0.3em]
\multirow{6}{*}{\begin{tabular}[c]{@{}c@{}}GDP \\ Growth Rate\end{tabular}}
& \multirow{2}{*}{h=0} & \textbf{2.267} & 6.066 & 2.400 & 2.410 & 2.401 & 2.450 & 2.488 & 2.495 \\
 &  & (2.049) & (5.943) & (2.238) & (2.268) & (2.263) & (2.174) & (2.266) & (2.242) \\
 & \multirow{2}{*}{h=1} & \textbf{6.199} & 9.080 & 6.424 & 6.406 & 6.410 & 6.401 & 6.437 & 6.372 \\
 &  & (5.910) & (9.326) & (6.221) & (6.201) & (6.198) & (5.978) & (6.055) & (5.955) \\ \cmidrule(lr){3-10} 
 & Avg. & \textbf{4.233} & 7.573 & 4.412 & 4.408 & 4.405 & 4.425 & 4.462 & 4.434 \\
  & \% Impr.\tnote{a} & \textbf{4.08\%} & -71.63\% & 0.00\% & 0.10\% & 0.16\% & -0.29\% & -1.13\% & -0.48\% \\
  & p-value\tnote{b} & - & $0.002^{**}$ & $0.075^{.}$ & $0.083^{.}$ & $0.088^{.}$ & $0.012^{*}$ & $0.013^{*}$ & $0.016^{*}$ \\
 & Rank\tnote{c} & \textbf{1} & 8 & 4 & 3 & 2 & 5 & 7 & 6 \\ \hline
 \addlinespace[0.3em]
\multirow{6}{*}{\begin{tabular}[c]{@{}c@{}} Change in\\ UNEMP Rate\end{tabular}} 
& \multirow{2}{*}{h=0} & \textbf{.307} & 1.379 & .333 & .339 & .337 & .321 & 1.071 & .394 \\
 &  & (.332) & (1.577) & (.365) & (.378) & (.375) & (.341) & (1.240) & (.440) \\
 & \multirow{2}{*}{h=1} & 1.058 & \textbf{1.036} & 1.115 & 1.095 & 1.098 & 1.066 & 1.072 & 1.059 \\
 &  & (1.243) & (1.153) & (1.299) & (1.278) & (1.281) & (1.228) & (1.234) & (1.223) \\ \cmidrule(lr){3-10}
 & Avg. & \textbf{0.682} & 1.207 & 0.724 & 0.717 & 0.718 & 0.694 & 1.072 & 0.727 \\
  & \% Impr. & \textbf{5.73\%} & -66.83\% & 0.00\% & 0.92\% & 0.85\% & 4.18\% & -48.09\% & -0.38\% \\
  & p-value & - & $0.003^{**}$ & $0.016^{*}$ & $0.017^{*}$ & $0.012^{*}$ & 0.171 & $0.002^{**}$ & $0.023^{*}$ \\
 & Rank & \textbf{1} & 8 & 5 & 3 & 4 & 2 & 7 & 6 \\  \hline
 \addlinespace[0.3em]
\multirow{6}{*}{\begin{tabular}[c]{@{}c@{}} CPI \\ Inflation Rate\end{tabular}} 
& \multirow{2}{*}{h=0} & \textbf{1.446} & 2.425 & 1.539 & 1.531 & 1.534 & 1.647 & 1.622 & 1.610 \\
 &  & (1.090) & (1.839) & (1.131) & (1.105) & (1.112) & (1.257) & (1.271) & (1.254) \\
 & \multirow{2}{*}{h=1} & \textbf{2.424} & 3.246 & 2.432 & 2.431 & 2.427 & 2.472 & 2.548 & 2.447 \\
 &  & (1.919) & (2.766) & (1.963) & (1.959) & (1.957) & (1.991) & (2.023) & (1.951) \\ \cmidrule(lr){3-10}
 & Avg. & \textbf{1.935} & 2.836 & 1.985 & 1.981 & 1.981 & 2.060 & 2.085 & 2.029 \\
  & \% Impr. & \textbf{2.52\%} & -42.83\% & 0.00\% & 0.22\% & 0.25\% & -3.73\% & -5.01\% & -2.18\% \\
  & p-value & - & $< 0.001^{***}$ & $0.011^{*}$ & $0.011^{*}$ & $0.013^{*}$ & $0.014^{*}$ & $0.006^{**}$ & $0.071^{.}$ \\
 & Rank & \textbf{1} & 8 & 4 & 3 & 2 & 6 & 7 & 5 \\ 
 \bottomrule
\end{tabular}
\end{threeparttable}
}
\end{table}

\begin{table}[htbp]
\caption{Performance on the SPF dataset  using a validation-selected REF specification (RMSSE).}
\label{tab:spf_cov_best_method_rmsse}
\centering
\renewcommand{\arraystretch}{1.1}
\resizebox{0.82\textwidth}{!}{
\begin{threeparttable}
\begin{tabular}{crcccccccc}
\toprule
 &  & REF Method & ML Method  & \multicolumn{3}{c}{\begin{tabular}[c]{@{}c@{}}Methods Using Only \\ Current Forecasts\end{tabular}} & \multicolumn{3}{c}{\begin{tabular}[c]{@{}c@{}}Methods Using Only \\ Historical Information\end{tabular}} \\ \cmidrule(lr){3-3}  \cmidrule(lr){4-4} \cmidrule(lr){5-7} \cmidrule(lr){8-10}
\multicolumn{1}{c}{\multirow{-1}{*}{\textbf{Indicator}}} & \multirow{-1}{*}{\textbf{Horizon}} &  { \begin{tabular}[c]{@{}c@{}}Validation-\\ selected spec. \end{tabular}} & \begin{tabular}[c]{@{}c@{}}\texttt{Stacking}\\ \texttt{Regressor}\end{tabular} & \begin{tabular}[c]{@{}c@{}}Simple \\ Mean \end{tabular} & \begin{tabular}[c]{@{}c@{}}Trimmed \\ Mean\end{tabular} & \begin{tabular}[c]{@{}c@{}}Winsorized \\ Mean\end{tabular} & \begin{tabular}[c]{@{}c@{}}Variance \\ Weights\end{tabular}   & CWM & CCR \\ \hline
\addlinespace[0.3em]
\multirow{6}{*}{\begin{tabular}[c]{@{}c@{}}GDP \\ Growth Rate\end{tabular}}
& \multirow{2}{*}{h=0} & \textbf{.471} & 1.515 & .600 & .602 & .600 & .612 & .621 & .623 \\
 &  & (.380) & (1.484) & (.559) & (.567) & (.565) & (.543) & (.566) & (.560) \\
 & \multirow{2}{*}{h=1} & \textbf{1.581} & 2.257 & 1.597 & 1.593 & 1.594 & 1.591 & 1.600 & 1.584 \\
 &  & (1.523) & (2.319) & (1.547) & (1.542) & (1.541) & (1.486) & (1.505) & (1.481) \\ \cmidrule(lr){3-10}
 & Avg. & \textbf{1.026} & 1.886 & 1.098 & 1.097 & 1.097 & 1.102 & 1.111 & 1.104 \\
  & \% Impr.\tnote{a} & \textbf{6.61\%} & -71.73\% & 0.00\% & 0.10\% & 0.16\% & -0.29\% & -1.13\% & -0.48\% \\
  & p-value\tnote{b} & - & $0.003^{**}$ & $0.080^{.}$ & $0.080^{.}$ & $0.093^{.}$ & $0.066^{.}$ & $0.059^{.}$ & $0.080^{.}$ \\
 & Rank\tnote{c} & \textbf{1} & 8 & 4 & 3 & 2 & 5 & 7 & 6 \\ \hline
 \addlinespace[0.3em]
\multirow{6}{*}{\begin{tabular}[c]{@{}c@{}} Change in\\UNEMP Rate\end{tabular}} 
& \multirow{2}{*}{h=0} & .811 & 3.372 & .814 & .829 & .825 & \textbf{.786} & 2.619 & .963 \\
 &  & (.885) & (3.855) & (.892) & (.923) & (.916) & (.835) & (3.033) & (1.077) \\
 & \multirow{2}{*}{h=1} & \textbf{2.503} & {2.521} & 2.713 & 2.665 & 2.672 & 2.594 & 2.610 & 2.578 \\
 &  & (2.949) & (2.806) & (3.163) & (3.111) & (3.118) & (2.988) & (3.004) & (2.976) \\ \cmidrule(lr){3-10}
 & Avg. & \textbf{1.657} & 2.946 & 1.763 & 1.747 & 1.749 & {1.690} & 2.615 & 1.771 \\
  & \% Impr. & \textbf{6.04\%} & -67.08\% & 0.00\% & 0.92\% & 0.85\% & 4.18\% & -48.27\% & -0.40\% \\
  & p-value & - & $0.003^{**}$ & 0.101 & $0.078^{.}$ & $0.065^{.}$ & 0.229 & $0.005^{**}$ & $0.014^{*}$ \\
 & Rank & \textbf{1} & 8 & 5 & 3 & 4 & 2 & 7 & 6 \\ \hline
 \addlinespace[0.3em]
 \multirow{6}{*}{\begin{tabular}[c]{@{}c@{}} CPI\\Inflation rate\end{tabular}} 
& \multirow{2}{*}{h=0} & \textbf{.736} & 1.200 & .761 & .757 & .759 & .815 & .802 & .797 \\
 &  & (.549) & (.910) & (.560) & (.547) & (.550) & (.622) & (.629) & (.620) \\
 & \multirow{2}{*}{h=1} & {1.199} & 1.598 & 1.197 & 1.197 & \textbf{1.195} & 1.217 & 1.254 & 1.205 \\
 &  & (.964) & (1.362) & (.966) & (.965) & (.964) & (.980) & (.996) & (.960) \\  \cmidrule(lr){3-10}
 & Avg. & \textbf{0.967} & 1.399 & 0.979 & 0.977 & 0.977 & 1.016 & 1.028 & 1.001 \\
  & \% Impr. & \textbf{1.22\%} & -42.85\% & 0.00\% & 0.23\% & 0.25\% & -3.74\% & -5.01\% & -2.18\% \\
  & p-value & - & $< 0.001^{***}$ & $0.016^{*}$ & $0.029^{*}$ & $0.022^{*}$ & $0.036^{*}$ & $0.012^{*}$ & 0.143 \\
 & Rank & \textbf{1} & 8 & 4 & 3 & 2 & 6 & 7 & 5 \\ 
 \bottomrule
\end{tabular}
\end{threeparttable}
}
\end{table}

\begin{table}[htbp]
\caption{Performance on the SPF dataset using a validation-selected REF specification (RMSE).}
\label{tab:spf_cov_best_method}
\centering
\renewcommand{\arraystretch}{1.1}
\resizebox{0.82\textwidth}{!}{
\begin{threeparttable}
\begin{tabular}{crcccccccc}
\toprule
 &  & REF Method & ML Method  & \multicolumn{3}{c}{\begin{tabular}[c]{@{}c@{}}Methods Using Only \\ Current Forecasts\end{tabular}} & \multicolumn{3}{c}{\begin{tabular}[c]{@{}c@{}}Methods Using Only \\ Historical Information\end{tabular}} \\ \cmidrule(lr){3-3}  \cmidrule(lr){4-4} \cmidrule(lr){5-7} \cmidrule(lr){8-10}
\multicolumn{1}{c}{\multirow{-1}{*}{\textbf{Indicator}}} & \multirow{-1}{*}{\textbf{Horizon}} &  { \begin{tabular}[c]{@{}c@{}}Validation-\\ selected spec.\end{tabular}} & \begin{tabular}[c]{@{}c@{}}\texttt{Stacking}\\ \texttt{Regressor}\end{tabular} & \begin{tabular}[c]{@{}c@{}}Simple \\ Mean \end{tabular} & \begin{tabular}[c]{@{}c@{}}Trimmed \\ Mean\end{tabular} & \begin{tabular}[c]{@{}c@{}}Winsorized \\ Mean\end{tabular} & \begin{tabular}[c]{@{}c@{}}Variance \\ Weights\end{tabular}   & CWM & CCR \\ \hline
\addlinespace[0.3em]
\multirow{6}{*}{\begin{tabular}[c]{@{}c@{}}GDP \\ Growth Rate\end{tabular}}
& \multirow{2}{*}{h=0} & \textbf{1.886} & 6.066 & 2.400 & 2.410 & 2.401 & 2.450 & 2.488 & 2.495 \\
 &  & (1.523) & (5.943) & (2.238) & (2.268) & (2.263) & (2.174) & (2.266) & (2.242) \\
 & \multirow{2}{*}{h=1} & \textbf{6.358} & 9.080 & 6.424 & 6.406 & 6.410 & 6.401 & 6.437 & 6.372 \\
 &  & (6.127) & (9.326) & (6.221) & (6.201) & (6.198) & (5.978) & (6.055) & (5.955) \\ \cmidrule(lr){3-10}
 & Avg. & \textbf{4.122} & 7.573 & 4.412 & 4.408 & 4.405 & 4.425 & 4.462 & 4.434 \\
  & \% Impr.\tnote{a} & \textbf{6.59\%} & -71.63\% & 0.00\% & 0.10\% & 0.16\% & -0.29\% & -1.13\% & -0.48\% \\
  & p-value\tnote{b} & - & $0.004^{**}$ & $0.080^{.}$ & $0.089^{.}$ & $0.079^{.}$ & $0.073^{.}$ & $0.055^{.}$ & $0.073^{.}$ \\
 & Rank\tnote{c} & \textbf{1} & 8 & 4 & 3 & 2 & 5 & 7 & 6 \\ \hline
 \addlinespace[0.3em]
\multirow{6}{*}{\begin{tabular}[c]{@{}c@{}} Change in\\UNEMP Rate\end{tabular}} 
& \multirow{2}{*}{h=0} & {.332} & 1.379 & .333 & .339 & .337 & \textbf{.321} & 1.071 & .394 \\
 &  & (.362) & (1.577) & (.365) & (.378) & (.375) & (.341) & (1.240) & (.440) \\
 & \multirow{2}{*}{h=1} & \textbf{1.028} & {1.036} & 1.115 & 1.095 & 1.098 & 1.066 & 1.072 & 1.059 \\
 &  & (1.212) & (1.153) & (1.299) & (1.278) & (1.281) & (1.228) & (1.234) & (1.223) \\ \cmidrule(lr){3-10}
 & Avg. & \textbf{0.680} & 1.207 & 0.724 & 0.717 & 0.718 & 0.694 & 1.072 & 0.727 \\
  & \% Impr. & \textbf{6.05\%} & -66.83\% & 0.00\% & 0.92\% & 0.85\% & 4.18\% & -48.09\% & -0.38\% \\
  & p-value & - & $0.002^{**}$ & $0.080^{.}$ & $0.075^{.}$ & $0.076^{.}$ & 0.230 & $0.005^{**}$ & $0.016^{*}$ \\
 & Rank & \textbf{1} & 8 & 5 & 3 & 4 & 2 & 7 & 6 \\  \hline
 \addlinespace[0.3em]
\multirow{6}{*}{\begin{tabular}[c]{@{}c@{}} CPI \\ Inflation Rate\end{tabular}} 
& \multirow{2}{*}{h=0} & \textbf{1.488} & 2.425 & 1.539 & 1.531 & 1.534 & 1.647 & 1.622 & 1.610 \\
 &  & (1.111) & (1.839) & (1.131) & (1.105) & (1.112) & (1.257) & (1.271) & (1.254) \\
 & \multirow{2}{*}{h=1} & {2.435} & 3.246 & 2.432 & 2.431 & \textbf{2.427} & 2.472 & 2.548 & 2.447 \\
 &  & (1.958) & (2.766) & (1.963) & (1.959) & (1.957) & (1.991) & (2.023) & (1.951) \\ \cmidrule(lr){3-10}
 & Avg. & \textbf{1.961} & 2.836 & 1.985 & 1.981 & 1.981 & 2.060 & 2.085 & 2.029 \\
  & \% Impr. & \textbf{1.21\%} & -42.83\% & 0.00\% & 0.22\% & 0.25\% & -3.73\% & -5.01\% & -2.18\% \\
  & p-value & - & $< 0.001^{***}$ & $0.016^{*}$ & $0.023^{*}$ & $0.023^{*}$ & $0.031^{*}$ & $0.016^{*}$ & 0.127 \\
 & Rank & \textbf{1} & 8 & 4 & 3 & 2 & 6 & 7 & 5 \\ 
 \bottomrule
\end{tabular}
\end{threeparttable}
}
\end{table}

\begin{table}[htbp]
\caption{Performance on the SPF dataset using variance-based prior weights (RMSSE).}
\label{tab:spf_invvar_avg_rmsse}
\centering
\renewcommand{\arraystretch}{1.1}
\resizebox{0.86\textwidth}{!}{
\begin{threeparttable}
\begin{tabular}{crcccccccc}
\toprule
 &  & REF Method & ML Method  & \multicolumn{3}{c}{\begin{tabular}[c]{@{}c@{}}Methods Using Only \\ Current Forecasts\end{tabular}} & \multicolumn{3}{c}{\begin{tabular}[c]{@{}c@{}}Methods Using Only \\ Historical Information\end{tabular}} \\ \cmidrule(lr){3-3}  \cmidrule(lr){4-4} \cmidrule(lr){5-7} \cmidrule(lr){8-10}
\multicolumn{1}{c}{\multirow{-1}{*}{\textbf{Indicator}}} & \multirow{-1}{*}{\textbf{Horizon}} &  { \begin{tabular}[c]{@{}c@{}}Average of\\ all specs. \end{tabular}} & \begin{tabular}[c]{@{}c@{}}\texttt{Stacking}\\ \texttt{Regressor}\end{tabular} & \begin{tabular}[c]{@{}c@{}}Simple \\ Mean \end{tabular} & \begin{tabular}[c]{@{}c@{}}Trimmed \\ Mean\end{tabular} & \begin{tabular}[c]{@{}c@{}}Winsorized \\ Mean\end{tabular} & \begin{tabular}[c]{@{}c@{}}Variance \\ Weights\end{tabular}   & CWM & CCR \\ \hline
\addlinespace[0.3em]
\multirow{6}{*}{\begin{tabular}[c]{@{}c@{}}GDP \\ Growth Rate\end{tabular}}
& \multirow{2}{*}{h=0} &  \textbf{.543} & 1.515 & .600 & .602 & .600 & .612 & .621 & .623 \\
 &  & (.485) & (1.484) & (.559) & (.567) & (.565) & (.543) & (.566) & (.560) \\
 & \multirow{2}{*}{h=1} & \textbf{1.504} & 2.257 & 1.597 & 1.593 & 1.594 & 1.591 & 1.600 & 1.584 \\
 &  & (1.450) & (2.319) & (1.547) & (1.542) & (1.541) & (1.486) & (1.505) & (1.481) \\ \cmidrule(lr){3-10}
 & {Avg.} & \textbf{1.024} & 1.886 & 1.098 & 1.097 & 1.097 & 1.102 & 1.111 & 1.104 \\
  & \% Impr.\tnote{a} & \textbf{6.81\%} & -71.73\% & 0.00\% & 0.10\% & 0.16\% & -0.29\% & -1.13\% & -0.48\% \\
  & p-value\tnote{b} & - & $0.002^{**}$ & $0.017^{*}$ & $0.014^{*}$ & $0.021^{*}$ & $< 0.001^{***}$ & $< 0.001^{***}$ & $0.001^{**}$ \\
 & {Rank}\tnote{c} & \textbf{1} & 8 & 4 & 3 & 2 & 5 & 7 & 6 \\ \hline
 \addlinespace[0.3em]
\multirow{6}{*}{\begin{tabular}[c]{@{}c@{}} Change in\\UNEMP Rate\end{tabular}} & \multirow{2}{*}{h=0} & \textbf{.782} & 3.372 & .814 & .829 & .825 & .786 & 2.619 & .963 \\
 &  & (.858) & (3.855) & (.892) & (.923) & (.916) & (.835) & (3.033) & (1.077) \\
 & \multirow{2}{*}{h=1} & 2.523 & \textbf{2.521} & 2.713 & 2.665 & 2.672 & 2.594 & 2.610 & 2.578 \\
 &  & (2.967) & (2.806) & (3.163) & (3.111) & (3.118) & (2.988) & (3.004) & (2.976) \\ \cmidrule(lr){3-10}
 & {Avg.} & \textbf{1.653} & 2.946 & 1.763 & 1.747 & 1.749 &  1.690 & 2.615 & 1.771 \\
  & \% Impr. & \textbf{6.27\%} & -67.08\% & 0.00\% & 0.92\% & 0.85\% & 4.18\% & -48.27\% & -0.40\% \\
  & p-value & - & $0.004^{**}$ & $0.037^{*}$ & $0.025^{*}$ & $0.028^{*}$ & 0.155 & $0.003^{**}$ & $0.006^{**}$ \\
 & {Rank} & \textbf{1} & 8 & 5 & 3 & 4 & 2 & 7 & 6 \\ \hline
 \addlinespace[0.3em]
\multirow{6}{*}{\begin{tabular}[c]{@{}c@{}} CPI \\ Inflation Rate\end{tabular}} & \multirow{2}{*}{h=0} & \textbf{.710} & 1.200 & .761 & .757 & .759 & .815 & .802 & .797 \\
 &  & (.519) & (.910) & (.560) & (.547) & (.550) & (.622) & (.629) & (.620) \\
 & \multirow{2}{*}{h=1} & \textbf{1.192} & 1.598 & 1.197 & 1.197 & 1.195 & 1.217 & 1.254 & 1.205 \\ 
 &  & (.961) & (1.362) & (.966) & (.965) & (.964) & (.980) & (.996) & (.960) \\ \cmidrule(lr){3-10}
 & Avg. & \textbf{0.951} & 1.399 & 0.979 & 0.977 & 0.977 & 1.016 & 1.028 & 1.001 \\
  & \% Impr. & \textbf{2.85\%} & -42.85\% & 0.00\% & 0.23\% & 0.25\% & -3.74\% & -5.01\% & -2.18\% \\
  & p-value & - & $< 0.001^{***}$ & $0.003^{**}$ & $0.001^{**}$ & $0.004^{**}$ & $0.013^{*}$ & $0.004^{**}$ & $0.059^{.}$ \\
 & Rank & \textbf{1} & 8 & 4 & 3 & 2 & 6 & 7 & 5 \\
 \bottomrule
\end{tabular}
\end{threeparttable}
}
\end{table}

\begin{table}[htbp]
\caption{Performance on the SPF dataset using variance-based prior weights (RMSE).}
\label{tab:spf_invvar_avg_rmse}
\centering
\renewcommand{\arraystretch}{1.1}
\resizebox{0.84\textwidth}{!}{
\begin{threeparttable}
\begin{tabular}{crcccccccc}
\toprule
 &  & REF Method & ML Method  & \multicolumn{3}{c}{\begin{tabular}[c]{@{}c@{}}Methods Using Only \\ Current Forecasts\end{tabular}} & \multicolumn{3}{c}{\begin{tabular}[c]{@{}c@{}}Methods Using Only \\ Historical Information\end{tabular}} \\ \cmidrule(lr){3-3}  \cmidrule(lr){4-4} \cmidrule(lr){5-7} \cmidrule(lr){8-10}
\multicolumn{1}{c}{\multirow{-1}{*}{\textbf{Indicator}}} & \multirow{-1}{*}{\textbf{Horizon}} &  { \begin{tabular}[c]{@{}c@{}}Average of\\ all specs. \end{tabular}} & \begin{tabular}[c]{@{}c@{}}\texttt{Stacking}\\ \texttt{Regressor}\end{tabular} & \begin{tabular}[c]{@{}c@{}}Simple \\ Mean \end{tabular} & \begin{tabular}[c]{@{}c@{}}Trimmed \\ Mean\end{tabular} & \begin{tabular}[c]{@{}c@{}}Winsorized \\ Mean\end{tabular} & \begin{tabular}[c]{@{}c@{}}Variance \\ Weights\end{tabular}   & CWM & CCR \\ \hline
\addlinespace[0.3em]
\multirow{6}{*}{\begin{tabular}[c]{@{}c@{}}GDP \\ Growth Rate\end{tabular}}
& \multirow{2}{*}{h=0} & \textbf{2.175} & 6.066 & 2.400 & 2.410 & 2.401 & 2.450 & 2.488 & 2.495 \\
 &  & (1.943) & (5.943) & (2.238) & (2.268) & (2.263) & (2.174) & (2.266) & (2.242) \\
 & \multirow{2}{*}{h=1} & \textbf{6.049} & 9.080 & 6.424 & 6.406 & 6.410 & 6.401 & 6.437 & 6.372 \\
 &  & (5.832) & (9.326) & (6.221) & (6.201) & (6.198) & (5.978) & (6.055) & (5.955) \\ \cmidrule(lr){3-10}
 & {Avg.} & \textbf{4.112} & 7.573 & 4.412 & 4.408 & 4.405 & 4.425 & 4.462 & 4.434 \\
  & \% Impr.\tnote{a} & \textbf{6.81\%} & -71.63\% & 0.00\% & 0.10\% & 0.16\% & -0.29\% & -1.13\% & -0.48\% \\
  & p-value\tnote{b} & - & $0.002^{**}$ & $0.015^{*}$ & $0.020^{*}$ & $0.020^{*}$ & $< 0.001^{***}$ & $0.002^{**}$ & $0.002^{**}$ \\
 & {Rank}\tnote{c} & \textbf{1} & 8 & 4 & 3 & 2 & 5 & 7 & 6 \\ \hline
 \addlinespace[0.3em]
\multirow{6}{*}{\begin{tabular}[c]{@{}c@{}} Change in\\UNEMP Rate\end{tabular}} & \multirow{2}{*}{h=0} & \textbf{.320} & 1.379 & .333 & .339 & .337 & .321 & 1.071 & .394 \\
 &  & (.351) & (1.577) & (.365) & (.378) & (.375) & (.341) & (1.240) & (.440) \\
 & \multirow{2}{*}{h=1} & 1.037 & \textbf{1.036} & 1.115 & 1.095 & 1.098 & 1.066 & 1.072 & 1.059 \\
 &  & (1.219) & (1.153) & (1.299) & (1.278) & (1.281) & (1.228) & (1.234) & (1.223) \\ \cmidrule(lr){3-10}
 & {Avg.} & \textbf{0.678 }& 1.207 & 0.724 & 0.717 & 0.718 & 0.694 & 1.072 & 0.727 \\
  & \% Impr. & \textbf{6.28\%} & -66.83\% & 0.00\% & 0.92\% & 0.85\% & 4.18\% & -48.09\% & -0.38\% \\
  & p-value & - & $0.005^{**}$ & $0.036^{*}$ & $0.035^{*}$ & $0.026^{*}$ & 0.155 & $0.004^{**}$ & $0.006^{**}$ \\
 & {Rank} & \textbf{1} & 8 & 5 & 3 & 4 & 2 & 7 & 6 \\ \hline
 \addlinespace[0.3em]
\multirow{6}{*}{\begin{tabular}[c]{@{}c@{}} CPI \\ Inflation Rate\end{tabular}} & \multirow{2}{*}{h=0} & \textbf{1.436} & 2.425 & 1.539 & 1.531 & 1.534 & 1.647 & 1.622 & 1.610 \\
 &  & (1.048) & (1.839) & (1.131) & (1.105) & (1.112) & (1.257) & (1.271) & (1.254) \\
 & \multirow{2}{*}{h=1} & \textbf{2.422} & 3.246 & 2.432 & 2.431 & 2.427 & 2.472 & 2.548 & 2.447 \\
 &  & (1.952) & (2.766) & (1.963) & (1.959) & (1.957) & (1.991) & (2.023) & (1.951) \\ \cmidrule(lr){3-10}
 & {Avg.} & \textbf{1.929} & 2.836 & 1.985 & 1.981 & 1.981 & 2.060 & 2.085 & 2.029 \\
  & \% Impr. & \textbf{2.85\%} & -42.83\% & 0.00\% & 0.22\% & 0.25\% & -3.73\% & -5.01\% & -2.18\% \\
  & p-value & - & $< 0.001^{***}$ & $0.003^{**}$ & $0.001^{**}$ & $0.004^{**}$ & $0.010^{**}$ & $0.005^{**}$ & $0.046^{*}$ \\
 & {Rank} & \textbf{1} & 8 & 4 & 3 & 2 & 6 & 7 & 5 \\
 \bottomrule
\end{tabular}
\end{threeparttable}
}
\end{table}

\renewcommand\thesubsection{\thesection.\arabic{subsection}}

\end{document}